\newlength{\bibitemsep}\setlength{\bibitemsep}{0.05\baselineskip}
\newlength{\bibparskip}\setlength{\bibparskip}{0pt}
\let\oldthebibliography\thebibliography
\renewcommand\thebibliography[1]{%
  \oldthebibliography{#1}%
  \setlength{\parskip}{\bibitemsep}%
  \setlength{\itemsep}{\bibparskip}%
}
\newtheorem{theorem}{Theorem}
\newtheorem{lemma}{Lemma}
\newtheorem{corollary}{Corollary}[theorem]
\theoremstyle{definition}
\newtheorem{optimization}{Optimization}
\theoremstyle{remark}
\newtheorem{remark}{Remark}[theorem]
\theoremstyle{definition}
\newtheorem{definition}{Definition}
\newtheorem{example}{Example}[definition]
\Crefname{optimization}{Optimization}{Optimizations}
\DeclareMathOperator*{\polylog}{polylog}
\newcommand\numberthis{\addtocounter{equation}{1}\tag{\theequation}}
\DeclareMathOperator\supp{supp}
\newcommand{\stkout}[1]{\ifmmode\text{\sout{\ensuremath{#1}}}\else\sout{#1}\fi}
\newif\ifverbose
\begin{document}
\tikzset{meter/.append style={draw, inner sep=10, rectangle, font=\vphantom{A}, minimum width=30, line width=.8,
 path picture={\draw[black] ([shift={(.1,.3)}]path picture bounding box.south west) to[bend left=50] ([shift={(-.1,.3)}]path picture bounding box.south east);\draw[black,-latex] ([shift={(0,.1)}]path picture bounding box.south) -- ([shift={(.3,-.1)}]path picture bounding box.north);}}}

\title{Practical Black Box Hamiltonian Learning}

\author{Andi Gu}
\email{andi.gu@berkeley.edu}
\affiliation{Department of Physics, University of California, Berkeley, Berkeley, CA 94720, USA}
\affiliation{Theoretical Division, Los Alamos National Laboratory, Los Alamos, NM 87545, USA}



\author{Lukasz Cincio}
\affiliation{Theoretical Division, Los Alamos National Laboratory, Los Alamos, NM 87545, USA}

\author{Patrick J. Coles}
\affiliation{Theoretical Division, Los Alamos National Laboratory, Los Alamos, NM 87545, USA}

\date{\today}

\begin{abstract}
We study the problem of learning the parameters for the Hamiltonian of a quantum many-body system, given limited access to the system. 
In this work, we build upon recent approaches to Hamiltonian learning via derivative estimation. We propose a protocol that improves the scaling dependence of prior works, particularly with respect to parameters relating to the structure of the Hamiltonian (e.g., its locality $k$). Furthermore, by deriving exact bounds on the performance of our protocol, we are able to provide a precise numerical prescription for theoretically optimal settings of hyperparameters in our learning protocol, such as the maximum evolution time (when learning with unitary dynamics) or minimum temperature (when learning with Gibbs states). Thanks to these improvements, our protocol is practical for large problems: we demonstrate this with a numerical simulation of our protocol on an 80-qubit system. 
\end{abstract}

\maketitle

\section{Introduction}


An important task for learning about many-body quantum systems is to learn the associated Hamiltonian operator efficiently (i.e., without requiring resources that scale exponentially in system size). There are multiple physical motivations for this problem. First, for an isolated many-body system, all the information about its future state is contained in the initial state and the Hamiltonian $H$, since the Hamiltonian generates the time evolution operator $e^{-iHt}$. Second, for a system in thermal contact with an environment of inverse temperature $\beta$, the thermal equilibrium state is of the form $e^{-\beta H} / \Tr(e^{-\beta H})$, and hence is also determined by the Hamiltonian. Third, the spectral response, such as the absorption and emission spectra, of a system is determined by its Hamiltonian.

The Hamiltonian learning problem is a highly relevant task across many domains. In condensed matter physics, we can experimentally verify our models of quantum materials by comparing theoretical predictions about their effective interactions with the interactions inferred by Hamiltonian learning \citep{burgarth2017, wang2017, kwon2020, wang2020}. This verification is also applicable for quantum device engineering. With the expanding capabilities of quantum computers, it is increasingly important to be able to certify their behavior \citep{carrasco2021}. While benchmarking protocols can give coarse-grained information about a particular quantum device, knowing its Hamiltonian can be significantly more powerful, allowing us to design improved devices \citep{boulant2003,innocenti2020,ben2020} or better understand the physical origin of failure modes \citep{shulman2014,sheldon2016,sundaresan2020}. 

In this work, we will treat the system under study as a black box system with an unknown Hamiltonian $H$, and our goal will be to efficiently infer $H$ with access to only a limited number of inputs to, and outputs from the black box. Importantly, we assume that we can only interact with the system \textit{classically} (see \cref{fig:classic-inter}). The key defining characteristic of `classical' interaction is that we prohibit any quantum channel between the system under study (whose Hamiltonian we are trying to learn), and some other quantum processing unit -- that is, we rule out setups used in other approaches that assume we can interact with the system under study via another trusted quantum simulator \citep[e.g.,][]{wiebe2014a,wiebe2014b,verdon2019}. Two examples of what we call `classically limited' interactions are making measurements on time-evolved states (\cref{subfig:time-ev}) or on Gibbs states (\cref{subfig:gibbs}). For the former, we initialize the system in some known (mixed) state $\rho_0$, and evolve it forward in time by $t$, resulting in the state:
\begin{equation}
    \rho(t) = e^{-iHt} \rho_0 e^{iHt}  \label{eqn:time-state}.
\end{equation}
For the latter, we assume we have access to a system in thermal equilibrium at a temperature $\beta^{-1}$. That is, we have access to the Gibbs state
\begin{equation}
    \rho(\beta) = \frac{\exp(-\beta H)}{\Tr(\exp(-\beta H))} \label{eqn:gibbs-state}.
\end{equation}
In these two models of interaction, we assume we can control the parameters $t$ and $\beta$, respectively. Finally, we assume that we can measure some observable $O$ of the final states $\rho(t)$ and $\rho(\beta)$. However, we do not demand full control over $\rho_0$ and $O$, since arbitrary quantum states are hard to prepare \citep{plesch2011}. We only require that $\rho_0$ and $O$ be a tensor product over single sites. More precisely, we assume that we can prepare any $\rho_0 \in \qty{I/2, (I+\sigma_x)/2, (I+\sigma_y)/2, (I+\sigma_z)/2}^{\otimes n}$ and measure any observable $O \in \qty{I, \sigma_x, \sigma_y, \sigma_z}^{\otimes n}$. Our focus on these two methods of interacting with the system is in part motivated by the classical analogue of the quantum Hamiltonian learning problem. This classical analogue is well-studied by the machine learning community, with two primary approaches being learning using system dynamics \citep{brunton2016,pan2016,trischler2016,lusch2018,course2021} or samples from the Gibbs distribution \citep{abbeel2006,santhanam2012,bresler2013,lokhov2018}. 

\begin{figure}
    \centering
    \subfloat[\textbf{Time evolution}: we can control three quantities: $\rho_0$, $t$, and $O$. We assume we can evolve the input state $\rho_0$ forward in time. After a time $t$, we make a measurement of the observable $O$.]{
    \label{subfig:time-ev}
    \parbox{.4\textwidth}{
    \begin{tikzpicture}
    \node (time) [rectangle, minimum width=2.2cm, minimum height=1.2cm,text centered, draw=black] {$e^{-iH t}$};
    \draw[-] (time.165) --++(0:-0.5cm) node[above,yshift=0.1cm] (top) {};
    \draw[-] (time.173) --++(0:-0.5cm);
    \draw[-] (time.west) --++(0:-0.5cm);
    \draw[-] (time.187) --++(0:-0.5cm);
    \draw[-] (time.195) --++(0:-0.5cm) node[below,yshift=-0.1cm] (bot) {};
    
    \draw[-] (time.15) --++(0:0.5cm);
    \draw[-] (time.7) --++(0:0.5cm);
    \draw[-] (time.353) --++(0:0.5cm);
    \draw[-] (time.345) --++ (0:0.5cm);
    
    \draw[dashed, thick] (-2.7,-1.45) rectangle (3,1.2);
    \draw[-] (time.0) --++(0:0.5cm) node[right, meter] (met) {};
    \draw[double,-{Implies},double distance=0.3mm] (met.south) --++ (270:0.55cm) -- ++(180:2.2cm) --++ (270:0.8cm) node (out) {};
    \draw[double,{Implies}-,double distance=0.3mm] (time.north) --++(90:0.9cm) node[above] {$t$};
    \node[yshift=-0.4cm] at (out) {$f(t)$};
    \node[yshift=0.8cm] at (met) {$O$};

    \draw [decorate, decoration = {brace, raise=5pt},thick] (bot) --  (top) node[pos=0.5,left=10pt]{$\rho_0$};
    \end{tikzpicture}
    }}
    \hspace{1cm}
    \subfloat[\textbf{Gibbs states}: we can control two quantities: $\beta$ and $O$. We assume we have access to the Gibbs state at a temperature $\beta^{-1}$, and then measure the observable $O$.]{
    \label{subfig:gibbs}
    \parbox{.38\textwidth}{
    \begin{tikzpicture}
    \node (time) [rectangle, minimum width=2.5cm, minimum height=1.2cm,text centered, draw=black] {$\frac{\exp(-\beta H)}{\Tr(\exp(-\beta H))}$};
    
    \draw[-] (time.15) --++(0:0.5cm);
    \draw[-] (time.7) --++(0:0.5cm);
    \draw[-] (time.0) --++(0:0.5cm) node[right, meter] (met) {};
    \draw[-] (time.353) --++(0:0.5cm);
    \draw[-] (time.345) --++(0:0.5cm);
    
    \draw[dashed, thick] (-1.6,-1.45) rectangle (3,1.2);
    \draw[-] (time.0) --++(0:0.5cm) node[right, meter] (met) {};
    \draw[double,-{Implies},double distance=0.3mm] (met.south) --++ (270:0.55cm) -- ++(180:2.2cm) --++ (270:0.8cm) node (out) {};
    \draw[double,{Implies}-,double distance=0.3mm] (time.north) --++(90:0.9cm) node[above] {$\beta$};
    \node[yshift=-0.4cm] at (out) {$f(\beta)$};
    \node[yshift=0.8cm] at (met) {$O$};
    \end{tikzpicture}}
    
    }
    \caption[Classical interaction with quantum systems]{Two models of classical interaction with a black box quantum system. We view the system as a set of oracles indexed by the state preparation and measurement parameters $\rho_0, O$ in the time evolution case, and $O$ in the Gibbs state case. These oracles take some input $t$ or $\beta$, and we use their output to characterize the Hamiltonian.}
    \label{fig:classic-inter}
\end{figure}
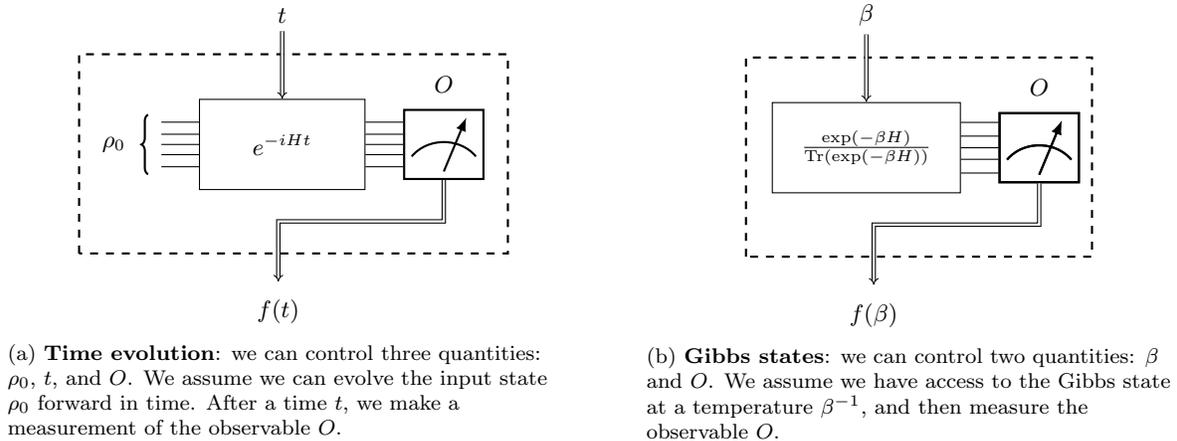

Using these two models of black-box interaction, we propose a method for Hamiltonian learning that relies on a simple intuition. For some state preparation and measurement (SPAM) settings, we can define a function $f_{\text{SPAM}}$ as the expectation value of an observable (which is specified by the SPAM settings) on the final state $\rho(t)$ and $\rho(\beta)$:
\begin{equation}
    f_\text{SPAM}(x) = \begin{cases}
    \Tr(O \rho(t=x)) &\qq{for unitary evolution} \\
    \Tr(O \rho(\beta=x)) &\qq{for Gibbs states.}
    \end{cases} \label{eqn:polynomial-pic}
\end{equation}
We will show that for the appropriate choice of SPAM parameters, $f_{\text{SPAM}}(x)$ can be viewed as black box function in $x$, whose Taylor expansion can be connected in a straightforward manner to the coefficients of the Hamiltonian. More specifically, for each SPAM setting, the first order coefficient $f'(x=0)$ will yield one of the Hamiltonian's coefficients. In this work, we will formalize this idea into a protocol that is efficient in theory, and demonstrate its usefulness in practice.

\section{Preliminaries}

Before describing the contributions of this work, we give a formalized definition of the Hamiltonian learning problem, and define the class of Hamiltonians that we restrict our attention to.
\begin{definition}[Hamiltonian learning problem]\label{defn:ham-learn}
Fix a Hamiltonian on an $n$-qubit system that has an expansion in the Pauli basis:
\begin{equation}
    H = \sum_{m=1}^r \theta_m P_m ,
\end{equation}
where each $P_m \in \qty{I, \sigma_x, \sigma_y, \sigma_z}^{\otimes n}$ is a Pauli operator and $\theta_m \in \mathbb{R}$ are the Hamiltonian coefficients. We will denote the vector of Hamiltonian coefficients $\Theta = \mqty[\theta_1, \ldots, \theta_r]^T$. We assume the Hamiltonian is traceless (i.e., $P_m \neq I^{\otimes n}$), and that we know the structure of the Hamiltonian (i.e., which Paulis $P_m$ are present in the expansion), but that the coefficients $\theta_m$ are unknown. The Hamiltonian learning problem is to infer all of the coefficients $\theta_m$ up to an additive error $\epsilon \cdot \max_m \abs{\theta_m}$ with success probability at least $1-\delta$. 
\end{definition}

In this work, we restrict our attention to a broad class of Hamiltonians that we call \textit{sparsely interacting} Hamiltonians. A sparsely interacting Hamiltonian is defined below.
\begin{definition}[Sparsely interacting Hamiltonian]\label{defn:inter-graph}
The interaction graph (called the ``dual'' interaction graph in \citet{tang2021}) of a Hamiltonian $\mathcal{G}$ consists of a set of vertices $V$ and edges $E$.
\begin{gather}
    V = \qty{P_i \mid i=1,\ldots,r} \ , \\
    E = \qty{(P_i, P_j) \mid \qty(\supp(P_i) \cap \supp(P_j) \neq \varnothing) \land (i \neq j)} \ .
\end{gather}
Each vertex represents one Pauli operator $P_i$ in the Hamiltonian, and there are edges between two vertices if the support of their corresponding Pauli operators overlap. The support of a Pauli, $\supp (P)$, is the set of sites that $P$ acts nontrivially on. We also define the degree of the Hamiltonian $\mathscr{D}$ to be the maximum degree of any node in the interaction graph:
\begin{equation}
    \mathscr{D} = \max_{v \in V} \text{deg}(v) \ .
\end{equation}
A Hamiltonian is sparsely interacting if $\mathscr{D} = \order{1}$ (that is, $\mathscr{D}$ does not depend on system size). Notably, this class of Hamiltonians includes $k$-local Hamiltonians, as this locality constraint implies that the number of terms overlapping with any Pauli term is a function of $k$ alone.

\end{definition}
\begin{example}\label{ex:tfim-model}
Below, we show a sample interaction graph for a 9-qubit transverse field Ising model (TFIM), whose Hamiltonian is
\begin{equation}
    H = \sum_{i=1}^8 \sigma_z^{(i)} \sigma_z^{(i+1)} + \sum_{i=1}^9 \sigma_{x}^{(i)}.
\end{equation}
The TFIM will serve as a prototypical example for the rest of this work.
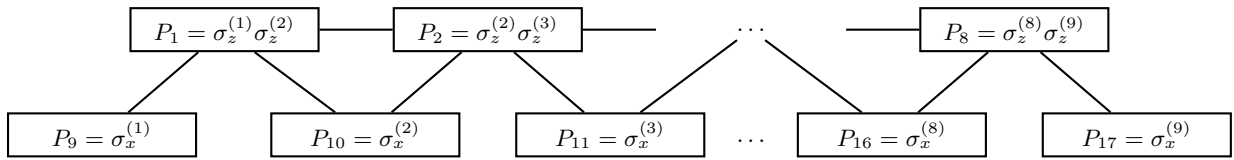
\begin{figure}[H]
    \centering
    \begin{tikzpicture}[node distance={35mm}, thick, main/.style = {draw, rectangle,minimum width=2.5cm}
    ] 
\node[main] (0) {$P_1=\sigma_z^{(1)} \sigma_z^{(2)}$}; 
\node[main] (1) [right of=0] {$P_2=\sigma_z^{(2)} \sigma_z^{(3)}$};
\node (2) [right of=1, minimum width=2.5cm, rectangle] {$\ldots$};
\node[main] (3) [right of=2] {$P_8=\sigma_z^{(8)} \sigma_z^{(9)}$};
\node[main] (4) [below left=8mm and -9mm of 0] {$P_9=\sigma_x^{(1)}$};
\node[main] (5) [below left=8mm and -9mm of 1] {$P_{10}=\sigma_x^{(2)}$};
\node[main] (6) [below right=8mm and -9mm of 1] {$P_{11}=\sigma_x^{(3)}$};
\node (7) [below=12mm of 2, minimum width=2.5cm, rectangle] {$\ldots$};
\node[main] (8) [below left=8mm and -9mm of 3] {$P_{16}=\sigma_x^{(8)}$};
\node[main] (9) [below right=8mm and -9mm of 3] {$P_{17}=\sigma_x^{(9)}$};

\draw[-] (0) -- (1);
\draw[-] (1) -- (2);
\draw[-] (2) -- (3);
\draw[-] (0) -- (4);
\draw[-] (0) -- (5);
\draw[-] (1) -- (5);
\draw[-] (1) -- (6);
\draw[-] (2) -- (6);
\draw[-] (2) -- (8);
\draw[-] (3) -- (8);
\draw[-] (3) -- (9);

\end{tikzpicture} 
    \caption[Interaction graph for a transverse field Ising model]{Interaction graph $\mathcal{G}$ for a $9$-qubit transverse field Ising model. The degree of this Hamiltonian is $\mathscr{D}=4$, since for instance $P_2$ is connected to 4 other Pauli terms.}
    \label{fig:int-graph}
\end{figure}
\end{example}

\section{Prior work and our contribution}

The assumption of a sparsely interacting (or $k$-local) Hamiltonian can often simplify the learning problem. For instance, in an early work \citep{dasilva2011}, it was shown that systems with local Hamiltonians can be efficiently characterized without the expensive requirements of full state tomography. However, this method was only applicable in certain cases, and was found to be prohibitively expensive in general. Later approaches demonstrated that machine learning could be applied successfully on small systems \citep{hentschel2010,hentschel2011,sergeevich2011,granade2012}, but these methods did not come with rigorous performance guarantees or scaling results that would allow them to be applied on larger systems. There have also been a number of proposals \citep{qi2019,bairey2019,evans2019} to learn the coefficients of the Hamiltonian by solving a system of linear equations, where the coefficient matrix is determined by local measurement outcomes. However, the performance of these approaches is determined by the spectral gap of this coefficient matrix, which is not yet well-characterized. 

A recent series of works \citep{anshu2021,tang2021,sbahi2022} focus on Hamiltonian learning with Gibbs states. Notably, \citet{tang2021} has found asymptotically optimal (with respect to error and failure probability) methods for Hamiltonian learning from high-temperature Gibbs states. They propose an algorithm that requires a number of copies of the Gibbs state (at temperature $\beta^{-1}$) that scales with $\beta^{-2}$ and polynomially in $\mathscr{D}$ (see \cref{defn:inter-graph}). However, they impose the high-temperature constraint $\beta^{-1} \geq 25e^6 (\mathscr{D}+1)^{10}$: Although this still results in polynomial scaling with $\mathscr{D}$, this is not feasible in practice, as this constraint on $\beta$ results in a query complexity with a prefactor that scales with $\mathscr{D}^{21}$. For one of the simplest non-trivial cases where we might apply Hamiltonian learning, the 1-dimensional transverse field Ising model (which has $\mathscr{D}=4$), the required number of copies of the Gibbs state has a prefactor $\gtrsim \mathscr{D} \beta^{-2} \geq 10^{22}$. Despite this, the techniques developed in this work are useful -- particularly, the central role of $\mathscr{D}$ in calculations.

Finally, there have also been proposals to learn the Hamiltonian from short time unitary dynamics \citep{hangleiter2021,yu2022,franca2022}. \citet{yu2022} considers the case where the Hamiltonian structure is not known \textit{a~priori}, and develop a protocol that is robust to circuit noise and SPAM errors, and avoids the exponential query requirements of full tomography. However, their protocol has relatively strong requirements on how to interact with the Hamiltonian: it requires the periodic insertion of gates between different time evolutions. Furthermore, their analysis does not include the effects of evolution time $t_0$ on their algorithm's performance. \citet{franca2022} has developed a protocol that has similar scaling to \citet{tang2021}. Notably, it can incorporate Lindbladian terms, as well as being compatible with algebraically decaying interactions. However, the protocol does not take advantage of prior knowledge about Pauli terms present in the Hamiltonian, which can potentially result in a measurement parallelization overhead that scales with $16^k$ (where $k$ is the locality of the Hamiltonian). 

In \cref{chap:methods}, we will describe an algorithm that addresses some of the shortcomings of previous works. Our protocol has a query complexity that scales like that achieved by \citet{tang2021}, except our dependence on the parameter $\mathscr{D}$ will be $\order{\mathscr{D}^4}$ rather than $\order{\mathscr{D}^{21}}$. Furthermore, we parallelize our measurements in such a way that avoids the $\order{16^k}$ scaling of \citet{franca2022}, and instead requires a parallelization overhead that scales with $\order{\mathscr{D}^2}$. This offers an improvement when we have prior information about the Pauli terms present in the Hamiltonian, so that $\mathscr{D} < 4^k$; indeed, a more typical assumption is $\mathscr{D} \sim \order{\text{poly}(k)}$. In summary, we achieve a query complexity
\begin{equation}
    \order{\epsilon^{-2} \mathscr{D}^4 \log (r/\delta) \polylog(\mathscr{D}/\epsilon)},
\end{equation}
and classical processing time complexity
\begin{equation}
    \order{\epsilon^{-2} \mathscr{D}^2 r \log (r/\delta) \polylog(\mathscr{D}/\epsilon)}.
\end{equation}
for Hamiltonian learning using unitary dynamics. Similar to \citet{franca2022}, this can be generalized, via careful selection of initial states and measurements, to learn the Lindbladian (when expanded in the Pauli basis) of open quantum systems undergoing Markovian dynamics. The query and classical processing time complexity using Gibbs states is only worse by a factor $\mathscr{D}$ and $\mathscr{D}^2$, respectively.  Finally, in \cref{chap:results}, we will discuss heuristic optimizations of our algorithm that can improve performance in practice. We will then show numerical results on an 80-qubit transverse field Ising model that indicates the feasibility our algorithm in practice.

\section{Methods}\label{chap:methods}

\subsection{Outline of our approach}

We outline our basic approach below. The following can be considered a generalization of the ideas in \citet{franca2022}. This generalization allows us to learn using Gibbs states as well as unitary dynamics, and is also able to reduce measurement parallelization overhead. 
\begin{enumerate}
    \item \textbf{Connect the oracle and the Hamiltonian parameters} \ Write the Taylor expansion of \cref{eqn:polynomial-pic}:
    \begin{equation}
        f_{\text{SPAM}}(x) = \sum_{k=0}^\infty c_k \frac{x^k}{k!}.
    \end{equation}
    Determine the connection between the first-order coefficient $c_1$ and the Hamiltonian. For both unitary dynamics and Gibbs states, this connection can be made extremely straightforward by appropriately setting the SPAM parameters.
    \item \textbf{Bound higher order derivatives} Establish a bound for the higher order derivatives $\abs{c_k}$ in terms of the structure parameter $\mathscr{D}$. This bound is similar in spirit to the shadow norm in shadow tomography \citep{huang2020}. The scaling we find for $\abs{c_k}$ varies depending on whether we are using unitary dynamics or Gibbs states (just as $\norm{O}_{\text{shadow}}^2$ may vary depending on whether random Clifford or Pauli measurements are used). Furthermore, similarly to \citep{huang2020}, this bound determines the performance of the rest of our algorithm. In this work, we find
    \begin{equation}
        \abs{c_k} \sim \begin{cases}
        \order{\mathscr{D}^k k!} &\qq{for general Hamiltonians using unitary dynamics} \\
        \order{\mathscr{D}^k} &\qq{for commuting Hamiltonians using unitary dynamics} \\
        \order{\mathscr{D}^{2k} k!} &\qq{for general Hamiltonians with Gibbs states.} \label{eqn:bounds} \\
        \end{cases}
    \end{equation}
    
    \item \textbf{Recover Hamiltonian parameters} \ Evaluate $f_{\text{SPAM}}$ at $L$ different points $x \in [0,A]$. Using a special form of polynomial regression, we can guarantee that $c_1$ (hence the Hamiltonian parameters) can be estimated with an error $\order{\frac{A^L \abs{c_L}}{L!}}$. Observe that if $\abs{c_L}$ grows no faster than a factorial, as is the case in \cref{eqn:bounds}, the error decreases (at least) as a power law in $L$ for suitably chosen $A$. However, our overall error scaling is nowhere near as good as this due to the presence of noise when evaluating $f_{\text{SPAM}}$ (increasing $L$ will result in an increase in the variance of our estimator for $c_1$). The modeling error (bias) must be carefully traded against the effects of noise (variance).
    \item \textbf{Apply simultaneous measurements} \ If possible, devise a scheme to execute the previous step using simultaneous measurements to estimate several parameters at once. When the SPAM settings use measurements whose locality does not exceed that of the Hamiltonian (as is the case in this work), this is possible. This enables a query complexity that is sublinear in the number of Hamiltonian coefficients $r$. 
\end{enumerate}
In \cref{sec:infer-first}, we first establish an elementary procedure for estimating the first order derivative $f'(0)$ given access only to noisy estimates of $f$. Then, in \cref{sec:recover}, we apply this procedure to Hamiltonian learning with unitary dynamics and Gibbs states.


\subsection{Inferring the First-Order Commutator}\label{sec:infer-first}

For a system evolving under a Hamiltonian $H$ and an initial state given by some density matrix $\rho_0$, the expectation value of any operator $P$ can be written as:
\begin{gather}
    \expval{P(t)} = \Tr(P \rho_0(t)) =\Tr(P e^{-iHt} \rho_0 e^{iHt}) = \sum_{m=0}^\infty \frac{(it)^m}{m!} \Tr(\qty[H^m P] \rho_0) \label{eqn:heis}, \\
    \textrm{where }\qty[H^m P] = \underbrace{[H,[H,\ldots,[H}_{m \text{ times}},P]\ldots]] \qq{with} [H^0 P] = P.
\end{gather}
This equality is simply using the Heisenberg expansion of the time-evolved operator $P(t)$. 

In this section, we define a critical subroutine of our Hamiltonian learning algorithm that infers the expectation $\Tr(\qty(i \comm{H}{P}) \rho_0)$ by measuring time-evolved expectation values. The main idea behind our algorithm is that $\Tr(\qty(i \comm{H}{P}) \rho_0)$ is the time derivative of the expectation $\Tr(P e^{-iH t} \rho_0 e^{i H t})$. More specifically, the Heisenberg expansion in \cref{eqn:heis} expresses the time-evolved expectation of an observable as
\begin{equation}
    \expval{P(t)} = \sum_{m=0}^\infty \frac{i^m}{m!} \Tr(\qty[H^m P] \rho_0) t^m .
\end{equation}
Therefore $\expval{P(t)}$ can be modeled as a univariate power series in time, $\sum_{m=0}^\infty c_m t^m$, with coefficients
\begin{equation}
    c_m = \frac{i^m}{m!} \Tr(\qty[H^m P] \rho_0) . \label{eqn:taylor-coeff}
\end{equation}
If we were able to access $\expval{P(t)}$ exactly, the most effective way to find $c_1$ would be to simply differentiate $\expval{P(t)}$ via finite differences with very small $\Delta t$ (i.e., $c_1 \approx \frac{\expval{P(\Delta t)}-\expval{P(0)}}{\Delta t}$). Since our measurements of $\expval{P(\Delta t)}$ are subject to shot noise, the variance of this estimator scales with $\order{(\Delta t)^{-2}}$, preventing us from using arbitrarily small $\Delta t$. However, as $\Delta t$ grows, the bias in the finite difference estimator grows. \cref{alg:first-comm} is a generalization of finite differencing, and uses Chebyshev regression (see \cref{sec:cheb}) to estimate $c_1$. This algorithm takes as input a maximum evolution time $A$ and an cutoff degree for the Chebyshev polynomial $L$. This finite cutoff degree induces biases in the recovered polynomial coefficients\footnote{There are biases in both the Chebyshev and Taylor expansion bases, hence the notation $\tilde{b}_i, \tilde{c}_i$ as opposed to the true coefficients $b_i$ and $c_i$.}, however, we will demonstrate that this bias is suppressed much more effectively than for the finite-difference estimator, as it turns out that these errors scale in a power-law with power $L$. As mentioned in the beginning of this section, this error bound depends on a bound for the derivative $\abs{\dv[L]{\expval{P(t)}}{t}} = \abs{\Tr(\qty[H^m P] \rho(t))}$. Since $\rho(t)$ is a density matrix, a simple application of the Von Neumann trace inequality \citep{mirsky1975} shows that $\abs{\Tr(\qty[H^m P] \rho(t))} \leq \norm{\qty[H^m P]}$ (where $\norm{\cdot}$ denotes the spectral norm). We can bound spectral norms of iterated commutators with the Hamiltonian as follows:

\begin{theorem}[Iterated commutator norm bound]\label{lem:norm-bound}
When $P$ is a single-qubit observable, the spectral norm of the iterated commutator is bounded by
\begin{equation}
    \norm{\qty[H^m P]} \leq (2 \mathscr{D} \norm{\Theta}_\infty)^m (m+1)! .\label{eqn:time-deriv-bound}
\end{equation}
where the $\ell_\infty$ norm denotes $\norm{\Theta}_\infty = \max_i \abs{\theta_i}$. For commuting Hamiltonians (i.e., every term $P_i$ in the Hamiltonian commutes with every other term), 
\begin{equation}
    \norm{\qty[H^m P]} \leq (2(\mathscr{D}+1)\norm{\Theta}_\infty)^m .\label{eqn:commute-bound}
\end{equation}
\end{theorem}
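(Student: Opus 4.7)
The plan is to unify both bounds by expanding the iterated commutator, exploiting the rigid Pauli algebra, and then estimating either combinatorially (general case) or algebraically (commuting case). The common starting point is
\begin{equation*}
[H^m P] = \sum_{m_1,\ldots,m_m = 1}^{r} \theta_{m_1}\cdots\theta_{m_m}\, \bigl[P_{m_1},[P_{m_2},\ldots,[P_{m_m},P]\ldots]\bigr],
\end{equation*}
together with the fact that the commutator of two Paulis is either zero or $\pm 2i$ times a Pauli. Hence every nonzero nested commutator in the expansion has spectral norm exactly $2^m$, and in both parts of the theorem the problem reduces to controlling the nonzero summands.

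For the general bound I would count the nonzero sequences via the interaction graph $\mathcal{G}$. Since $P$ is supported on a single qubit $q$, the innermost commutator $[P_{m_m},P]$ vanishes unless $P_{m_m}$ touches $q$; as any two Paulis sharing qubit $q$ pairwise overlap and therefore form a clique in $\mathcal{G}$, there are at most $\mathscr{D}+1$ admissible choices for $m_m$. Moving outward, each successive $P_{m_{i-1}}$ must have $\supp(P_{m_{i-1}})$ intersecting the support of the inner commutator, which is contained in $\bigcup_{j\ge i}\supp(P_{m_j})$; applying the same clique observation at each previously selected Pauli bounds the candidates at that step by $(m-i+1)(\mathscr{D}+1)$. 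Multiplying these counts over the $m$ nesting levels, combining with the per-term norm bound $2^m\norm{\Theta}_\infty^m$, and absorbing the resulting factorial together with the $(\mathscr{D}+1)/\mathscr{D}$ correction into the stated $(m+1)!$ prefactor yields $(2\mathscr{D}\norm{\Theta}_\infty)^m(m+1)!$. The delicate bookkeeping here is the union bound on overlapping Paulis: I must charge each new Pauli against its overlap with a single previously chosen Pauli, rather than against the full accumulated support, so that the per-step bound stays linear in $\mathscr{D}$.

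For the commuting case I would bypass the combinatorics using an algebraic reduction. Partition $H = H_S + H_{S^c}$, where $S = \{m : \{P_m,P\}=0\}$ collects the terms anticommuting with $P$. In the commuting setting every term with $m \notin S$ commutes with $P$ and with every term of $H_S$, so those terms contribute nothing to any iterated commutator of $H$ with $P$, and a short induction gives $[H^m P] = [H_S^m P]$. Because every term of $H_S$ anticommutes with $P$, so does $H_S$ itself, and another one-line induction using $H_S P = -P H_S$ yields $[H_S^m P] = (2H_S)^m P$, hence $\norm{[H^m P]} \le (2\norm{H_S})^m$. The argument concludes by noting that the $P_m$'s in $H_S$ all touch the same single qubit and therefore form a clique in $\mathcal{G}$ of size at most $\mathscr{D}+1$, which gives $\norm{H_S} \le (\mathscr{D}+1)\norm{\Theta}_\infty$. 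I expect the combinatorial counting in the general case to be the only genuinely nontrivial step; the commuting case is essentially painless once one observes that $H_S$ anticommutes with $P$ as a whole.
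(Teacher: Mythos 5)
Your overall strategy for the general bound --- expand $[H^m P]$ into nested Pauli commutators, bound each nonzero summand by $2^m\norm{\Theta}_\infty^m$, and count the non-vanishing index sequences by walking outward through the interaction graph --- is exactly the paper's strategy (the paper phrases the count as one over labeled subtrees of a ``support tree''). The gap is in your final step. Your outward walk gives at most $\mathscr{D}+1$ choices at the innermost level and at most $(k-1)(\mathscr{D}+1)$ choices at the $k$-th level, hence at most $(\mathscr{D}+1)^m(m-1)!$ non-vanishing sequences and
\begin{equation*}
\norm{\qty[H^m P]} \le \bigl(2(\mathscr{D}+1)\norm{\Theta}_\infty\bigr)^m (m-1)!,
\end{equation*}
which you then propose to ``absorb'' into the stated prefactor. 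That absorption requires $(1+1/\mathscr{D})^m \le m(m+1)$, which fails once $m$ is large compared with $\mathscr{D}\log m$ (e.g.\ $\mathscr{D}=2$, $m=20$ gives $(3/2)^{20}\approx 3.3\times10^{3}$ versus $420$). So you prove a bound of the same quality, but not the literal inequality \eqref{eqn:time-deriv-bound} for all $m$ --- and since $m=L\sim\log\epsilon^{-1}$ downstream, large $m$ is not a vacuous regime. The paper closes this by counting more tightly: the admissible sequences biject onto labeled rooted subtrees of size $m+1$ of a $(\mathscr{D}+1)$-regular tree whose labels increase away from the root, and a generating-function computation gives exactly $\prod_{n=1}^{m}(n\mathscr{D}+1)\le\mathscr{D}^m(m+1)!$ of these (\cref{thm:counting,cor:counting}). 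The essential difference from your product bound is that a $k$-node subtree of a $(\mathscr{D}+1)$-regular tree has only $k\mathscr{D}+1$ open child slots, whereas your union bound over the $k-1$ previously chosen Paulis counts $(k-1)(\mathscr{D}+1)$ candidates with multiplicity. Either redo the count this way or state the bound you actually establish.

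Your commuting-case argument is correct and takes a genuinely different, slightly sharper route than the paper's. The paper splits off $H_1$, the sum of \emph{all} terms overlapping $\supp P$, proves $[H^m P]=[H_1^m P]$ by induction, and then invokes the generic inequality $\norm{\qty[A^m B]}\le 2^m\norm{A}^m\norm{B}$. You restrict further to $H_S$, the terms anticommuting with $P$, and obtain the exact identity $[H_S^m P]=(2H_S)^m P$, so the factor $2^m$ appears as an equality rather than an estimate. Both arguments finish with the same observation that every term touching the single qubit supporting $P$ lies in a clique of $\mathcal{G}$ of size at most $\mathscr{D}+1$, giving $\norm{H_S}\le(\mathscr{D}+1)\norm{\Theta}_\infty$ and hence \eqref{eqn:commute-bound}.
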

\begin{proof}
See \cref{sec:bound-iter}.
\end{proof}

\begin{definition}[Typical scales]\label{defn:typical-scale}
The form of the above bound is indicative of two typical scales. We will take
\begin{equation}
    \tau = \frac{1}{2\mathscr{D} \norm{\Theta}_\infty}
\end{equation}
to define a typical time scale for our Hamiltonian, and
\begin{equation}
    \gamma = \tau^{-1} = 2 \mathscr{D} \norm{\Theta}_\infty
\end{equation}
to define a typical scale for our Hamiltonian coefficients.
\end{definition}

\begin{algorithm}[H]
\caption{Estimating the first derivative $\Tr(\qty(i \comm{H}{P}) \rho_0)$}\label{alg:first-comm}
\begin{algorithmic}[1]
\Procedure {EstimateDerivative}{$A$, $L$, $P$, $N$, $\rho_0$}
\For{$\ell \gets 1, L$}   \Comment{Construct the dataset $\mathcal{D}$ (\cref{defn:dataset})}
\State $z_\ell \gets -\cos(\frac{2\ell-1}{L} \pi)$
\State $t_\ell \gets \frac{A}{2}(1+z_\ell)$ 
\State $y_\ell \gets \text{estimate of }\Tr(P e^{-iH t_{\ell}} \rho_0 e^{iH t_{\ell}})$ \Comment{Average $N$ measurement outcomes of $P$}
\EndFor
\For{$m \gets 1, L-1$} \Comment{Estimate the Chebyshev coefficients (\cref{thm:cheb-fit})}
\State $\tilde{b}_m \gets \frac{2}{L}\sum_{\ell=1}^{L} y_\ell T_m(z_\ell)$ 
\EndFor
\State{$\tilde{c}_1 \gets -\frac{2}{A} \sum_{m=1}^{L-1} (-1)^m \tilde{b}_m m^2$} \Comment{Infer $\Tr(i \comm{H}{P} \rho_0)$ (\cref{lem:linear})}
\State \textbf{return} $\tilde{c}_1$
\EndProcedure
\end{algorithmic}
\end{algorithm}

\begin{definition}[Dataset]\label{defn:dataset}
We construct a dataset with $L$ evaluations of the expectation $\expval{P(t)}$, evaluated at the roots of the $L$th Chebyshev polynomial. Our dataset comprises of $L$ points:
\begin{equation} \label{eqn:dataset}
\begin{gathered}
    \mathcal{D} = \qty{(t_1, y_1), (t_2, y_2), \ldots, (t_L, y_L)} \text{, where} \\
    t_i = \frac{A}{2} (1+z_i), \\
    y_i \sim Y_i,
\end{gathered}
\end{equation}
with $z_i$ the roots of the $L$th Chebyshev polynomial and $Y_i$ is a random variable with $\mathbb{E}[Y_i]=\expval{P(t_i)}$ and $\mathbb{V}[Y_i]=\sigma_i^2$. The notation $y_i \sim Y_i$ indicates that $y_i$ is a random sample of $Y_i$. The mapping $t_i = \frac{A}{2} (1+z_i)$ ensures that the evolution time is nonnegative and never exceeds $A$. 
\end{definition}

The following theorem shows that for the appropriate choice of evolution time $A$ and Chebyshev degree $L$, the error scaling is close to being noise-limited. 

\begin{theorem}[Query complexity for one coefficient]\label{thm:err-scale}
Fix some failure probability $\delta$ and an error $\epsilon$. Assume that we have access to an unbiased (single-shot) estimator of $\expval{P(t)}$ with variance $\sigma^2 \leq 1$. Then there is some choice of $A \sim \tau$ and $L \sim \log \epsilon^{-1}$ such that with
\begin{equation}
    \order{\log(1/\delta) \polylog(1/\epsilon) \epsilon^{-2}}
\end{equation}
query complexity, we can construct an estimator $\tilde{c}_1$ such that $\frac{\abs{c_1-\tilde{c}_1}}{\gamma} \leq \epsilon$, except with failure probability at most $\delta$.
\end{theorem}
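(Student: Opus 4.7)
The plan is to split the error $|\tilde c_1 - c_1|$ into a deterministic \emph{bias} term, coming from truncating the Chebyshev series at degree $L-1$, and a \emph{variance} term, coming from the shot noise in the $y_\ell$'s, and then choose $A$ and $L$ so both are of order $\epsilon\gamma$. Concretely, since the algorithm outputs $\tilde c_1 = -\tfrac{2}{A}\sum_{m=1}^{L-1}(-1)^m m^2 \tilde b_m$, and since (referring to \cref{thm:cheb-fit,lem:linear}) the identity $c_1 = -\tfrac{2}{A}\sum_{m=1}^{\infty}(-1)^m m^2 b_m$ holds for the true Chebyshev coefficients $b_m$ of $f(t)=\expval{P(t)}$ rescaled to $[-1,1]$, we can write
\begin{equation}
\tilde c_1 - c_1 = \underbrace{-\tfrac{2}{A}\sum_{m=1}^{L-1}(-1)^m m^2 (\tilde b_m - b_m)}_{\text{noise}} \;+\; \underbrace{\tfrac{2}{A}\sum_{m=L}^{\infty}(-1)^m m^2 b_m}_{\text{bias}}.
\end{equation}

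For the bias, I would invoke \cref{lem:norm-bound}: the $m$-th time derivative of $f$ is bounded in modulus by $(2\mathscr D \|\Theta\|_\infty)^m (m+1)! = (m+1)!/\tau^m$. After rescaling $t=\tfrac{A}{2}(1+z)$, the derivatives of $z\mapsto f(\tfrac{A}{2}(1+z))$ pick up a factor $(A/2)^m$, so standard Chebyshev-truncation estimates (i.e.\ bounding $|b_m|$ by the $m$-th derivative via an analytic-continuation/Bernstein ellipse argument) give $|b_m|\lesssim (A/(2\tau))^m\,(m+1)$. Choosing $A = c\,\tau$ for a fixed constant $c<2$ (the claim $A\sim\tau$), the tail sum $\sum_{m\ge L} m^2|b_m|$ is geometric with ratio $c/2<1$, so the bias is bounded by $C_1\,\tau^{-1}\,(c/2)^L\,\mathrm{poly}(L)$. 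Taking $L = \Theta(\log(1/\epsilon))$ with a suitable constant makes this at most $\tfrac{\epsilon\gamma}{2}$.

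For the variance, observe that each $\tilde b_m = \tfrac{2}{L}\sum_\ell y_\ell T_m(z_\ell)$ is a linear combination of independent averages of $N$ measurement outcomes, so $\operatorname{Var}(\tilde b_m)\le \tfrac{4}{L^2}\sum_\ell T_m(z_\ell)^2\,\tfrac{\sigma^2}{N} \le \tfrac{4}{LN}$. Propagating through the linear combination and using independence across $m$ only where it is available (otherwise bounding with Cauchy--Schwarz), I get $\operatorname{Var}(\tilde c_1)\le \tfrac{C_2}{A^2 N}\sum_{m=1}^{L-1} m^4 \le \tfrac{C_3 L^5}{A^2 N}$. Setting $N = C_4\,L^5 \tau^{-2}/(A^2\epsilon^2\gamma^2) = \Theta(\mathrm{polylog}(1/\epsilon)/\epsilon^2)$ makes the standard deviation at most $\tfrac{\epsilon\gamma}{4}$. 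Chebyshev's inequality then gives constant failure probability, which I upgrade to $\delta$ by running the whole subroutine $\Theta(\log(1/\delta))$ times and taking a median (median-of-means). The total query count is $L\cdot N \cdot \Theta(\log(1/\delta)) = \order{\log(1/\delta)\,\mathrm{polylog}(1/\epsilon)\,\epsilon^{-2}}$, giving the stated bound.

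I expect the main obstacle to be the rigorous control of the bias, specifically producing a clean exponential-in-$L$ decay of the Chebyshev tail from the factorial-in-$m$ derivative bound of \cref{lem:norm-bound}. The standard route is to note that $f$ extends analytically into a complex strip around $[0,A]$ whose width is controlled by $\tau$, and then to use the Bernstein-ellipse bound $|b_m|\le 2M\rho^{-m}$ with $\rho$ determined by $A/\tau$; this is what forces the condition $A<2\tau$ and thereby pins down the ``$A\sim\tau$'' part of the theorem. A secondary subtlety is that the $m^2$ weighting in the reconstruction of $c_1$ means truncation error and noise are both amplified by polynomial-in-$L$ factors, which is exactly what prevents a fully noise-limited $\epsilon^{-2}$ scaling and produces the $\mathrm{polylog}(1/\epsilon)$ overhead.
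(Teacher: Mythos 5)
Your overall skeleton --- bias--variance split, $A=\Theta(\tau)$, $L=\Theta(\log\epsilon^{-1})$, $N=\Theta(\mathrm{polylog}(1/\epsilon)\epsilon^{-2})$, and a median-of-means wrapper for the $\log(1/\delta)$ factor --- matches the paper's proof in \cref{sec:proof-err}, and your variance calculation is essentially the paper's (you get $L^5$ via Cauchy--Schwarz where the paper gets $(L-\tfrac12)^4$ by exploiting the discrete orthogonality of \cref{lem:disc-orth} exactly; this is immaterial for the stated complexity). The bias analysis, however, is where you diverge and where there is a genuine gap. You decompose $\tilde c_1-c_1$ into a ``noise'' piece $\propto\sum_{m<L}m^2(\tilde b_m-b_m)$ and a ``tail'' piece $\propto\sum_{m\ge L}m^2 b_m$, and you implicitly treat the first piece as zero-mean, controlling it only through its variance and Chebyshev's inequality. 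But $\tilde b_m$ is an \emph{interpolation} coefficient computed from $L$ nodes, not a projection coefficient: by the general case of \cref{lem:disc-orth}, $\mathbb{E}[\tilde b_m]=b_m+\sum_{j>L,\,j\equiv\pm m\ (\mathrm{mod}\ 2L)}b_j$, so your ``noise'' term carries a deterministic aliasing bias that your argument never bounds. The gap is repairable with the same tool you already invoke --- the aliased indices all satisfy $j\ge 2L-m>L$, so under the geometric decay $\abs{b_j}\lesssim\rho^{-j}$ from your Bernstein-ellipse argument the aliasing contribution is of the same (in fact smaller) order as the truncation tail --- but as written the proof is incomplete. You should also verify that the Bernstein-ellipse step itself goes through: $f$ is entire, but the bound $M=\max_{E_\rho}\abs{f}$ obtained from the derivative bounds of \cref{lem:norm-bound} degrades as the ellipse approaches the radius-$\tau$ disk of guaranteed convergence, and this is precisely what pins down the admissible constant in $A=c\tau$; asserting $\abs{b_m}\lesssim(A/2\tau)^m(m+1)$ without tracking $M$ leaves the constant unjustified.

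For contrast, the paper avoids both issues by never expanding $f$ in its exact Chebyshev series: it invokes a derivative-interpolation error theorem of Howell, $\abs{f'(0)-\tilde f'(0)}\le\abs{\omega_1(0)}\,\abs{f^{(L)}}/L!$, where $\omega_1$ is the node polynomial. The factorial $(L+1)!$ from \cref{lem:norm-bound} is cancelled directly by the $1/L!$, and bounding $\abs{\omega_1(0)}\le L^2(A/4)^{L-1}$ yields the clean bias bound $4L^2(L+1)(A\gamma/4)^L/A$ with no aliasing bookkeeping and no analytic continuation. Your route is workable and arguably more self-contained (it does not import an external interpolation theorem), but it requires the two additional steps above to close.
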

\begin{proof}
See \cref{sec:proof-err}.
\end{proof}

\subsection{Recovering Hamiltonian Coefficients}\label{sec:recover}
With an efficient algorithm for accurately estimating first-order commutators $\Tr(i \comm{H}{P} \rho_0)$, it is possible to construct an algorithm that can infer the coefficients of $H$ using these commutators. The idea is to carefully choose $\rho_0$ and $P$ so that $\Tr(i \comm{H}{P} \rho_0)$ corresponds to one parameter at a time.

First, we introduce the notation that $\rho_0^{(X)}$ and $P^{(X)}$ will be the portion of a density matrix or Pauli matrix (respectively) that is restricted to the qubits in $X$, and $X'$ will be the set of all qubits not in $X$.

\begin{lemma}[Term selection]\label{thm:param-select}
Let $P$ be some Pauli operator such that there exists some $i \in \qty{1, \ldots, r}$ where $\supp P \subseteq \supp P_i$ and $\frac{i\comm{P_i}{P}}{2} \neq 0$. Let
\begin{gather}
    X = \supp P_i ,\\
    Y = \qty(\bigcup \qty{\supp P_j \mid \supp P_j \cap X \neq \varnothing}) \setminus X, \\
    Z = (X \cup Y)', \\
    \rho_0 = \qty(\frac{\mathbb{I} + i\comm{P_i}{P}/2}{2^{\abs{X}}})^{(X)} \otimes \qty(\frac{\mathbb{I}}{2^{\abs{Y}}})^{(Y)} \otimes \rho_0^{(Z)}.
\end{gather}
In words, $Y$ is a neighborhood around $X$ that contains the support of all Paulis that intersect with $X$, and $Z$ is the set of all qubits that are not in $X \cup Y$. The state $\rho_0$ is defined such that for all qubits in $Y$, it is the maximally mixed state and for qubits inside $X$, $\rho_0$ is defined in a way such that $\Tr(i\comm{P_i}{P} \rho_0^{(X)}/2)=1$, and for all other qubits, $\rho_0$ can be anything. Then:
\begin{equation}
    \Tr(i \comm{H}{P} \rho_0) = \theta_i .
\end{equation}
\end{lemma}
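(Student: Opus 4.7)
The plan is to linearize in the Hamiltonian, writing $\Tr(i\comm{H}{P}\rho_0) = \sum_m \theta_m \Tr(i\comm{P_m}{P}\rho_0)$, and show that every $P_m$ with $m \neq i$ contributes zero while the $m = i$ term contributes exactly $\theta_i$. I would organize the terms $P_m$ according to how $\supp P_m$ intersects $X = \supp P_i$, and use the tensor-product structure of $\rho_0$ across the three disjoint regions $X$, $Y$, $Z$ to factorize each trace region by region.

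First I would dispatch the $P_m$ whose support does not lie entirely inside $X$. If $\supp P_m \cap X = \varnothing$, then $P_m$ and $P$ act on disjoint qubits so $\comm{P_m}{P} = 0$. Otherwise, the definition of $Y$ guarantees $\supp P_m \subseteq X \cup Y$, so I can write $P_m = P_m^{(X)} \otimes P_m^{(Y)} \otimes \mathbb{I}^{(Z)}$ and $P = P^{(X)} \otimes \mathbb{I}^{(Y \cup Z)}$, which gives the tensor factorization $\comm{P_m}{P} = \comm{P_m^{(X)}}{P^{(X)}} \otimes P_m^{(Y)} \otimes \mathbb{I}^{(Z)}$. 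Against the product state $\rho_0$, the trace splits into three factors; the $Y$-factor is $\Tr(P_m^{(Y)})/2^{\abs{Y}}$, which vanishes whenever $P_m^{(Y)}$ is a non-identity Pauli. Hence only $P_m$ with $\supp P_m \subseteq X$ can contribute.

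For the surviving terms, I would substitute $\rho_0^{(X)} = (\mathbb{I} + i\comm{P_i}{P}/2)/2^{\abs{X}}$ and expand. The identity piece contributes zero because a commutator is traceless. The remaining piece is proportional to $\Tr_X(\comm{P_m}{P}\comm{P_i}{P})$, which I would analyze using the Pauli dichotomy: any two Paulis either commute or anticommute. If $P_m$ commutes with $P$ then $\comm{P_m}{P} = 0$; if $P_m$ anticommutes with $P$ then $\comm{P_m}{P} = 2P_mP$, and by hypothesis $\comm{P_i}{P} = 2P_iP$, so using $PP_i = -P_iP$ collapses the product to $-4P_mP_i$. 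The surviving trace $\Tr_X(P_mP_i)$ is nonzero only when $P_m = P_i$ by Hilbert--Schmidt orthogonality of distinct Paulis; together with $\supp P_i = X$, this isolates the single term $m = i$.

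The final step is to verify that the $m = i$ term reproduces the normalization built into $\rho_0^{(X)}$, namely $\Tr(i\comm{P_i}{P}\rho_0^{(X)}/2) = 1$, using $(i\comm{P_i}{P}/2)^2 = \mathbb{I}^{(X)}$, which holds because $P$ and $P_i$ anticommute. The hard part is really only bookkeeping: keeping the signs from the successive anticommutations straight and cleanly separating the contributions from $X$, $Y$, and $Z$. Once the three-region factorization is in place, the remaining algebra is short, relying only on the Pauli commutation/anticommutation dichotomy and the trace orthogonality of distinct Paulis.
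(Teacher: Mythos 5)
Your proposal is correct and follows essentially the same route as the paper's proof: split the terms $P_m$ by whether their support touches $Y$ (killed by the maximally mixed state there) or lies inside $X$ (killed by tracelessness of the commutator plus Pauli orthogonality, isolating $m=i$). Your explicit reduction of $\Tr_X\!\left(\comm{P_m}{P}\comm{P_i}{P}\right)$ to $-4\Tr_X(P_mP_i)$ via the anticommutation dichotomy is just a more spelled-out version of the paper's appeal to orthonormality of distinct Paulis.
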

\begin{proof}
See \cref{sec:master-theorem}.
\end{proof}

This defines a simple algorithm for Hamiltonian learning. For simplicity, for any Pauli $P_i$, we will simply set the observable $P$ to be a single qubit Pauli acting on one site in $X$ such that $\comm{P_i}{P} \neq 0$. 
\begin{algorithm}[H]
\caption{Naive Hamiltonian learning}\label{alg:naive-learn}
\begin{algorithmic}[1]
\Procedure {NaiveInferCoefficients}{$\tau, L, A, N$}
\For{$i \gets 1 \ldots r$}
\State $P \gets \text{single qubit Pauli acting on one site in $X$ where } \comm{P_i}{P} \neq 0$
\State $\rho_0 = \qty(\frac{\mathbb{I} + i \comm{P_i}{P}/2}{2^{\abs{X}}})^{(X)} \otimes \qty(\frac{\mathbb{I}}{2^{\abs{Y}}})^{(Y)} \otimes \rho_0^{(Z)}$ \Comment{$\rho_0^{(Z)}$ is any density matrix}
\State $\tilde{\theta}_i \gets \textsc{EstimateDerivative}(A, L, P, N, \rho_0)$
\EndFor
\EndProcedure
\end{algorithmic}
\end{algorithm}
However, the runtime of this algorithm is $\Omega(r)$, since this procedure must be called once for each term in the Hamiltonian. We propose an improvement of this algorithm wherein we estimate $\Tr(P e^{-iH t} \rho_0 e^{iHt})$ for many different choices of $P$ simultaneously. We aim to set $\rho_0$ in such a way that we can extract coefficients for many terms simultaneously. Yet, rather than using shadow tomography (as done in \cite{franca2022}), which can result in $\order{16^k}$ scaling, we carefully take advantage of our knowledge about the Hamiltonian structure to get a smaller parallelization overhead. The way forward relies on the fact that in \cref{thm:param-select}, $\rho_0^{(Z)}$ can be anything. Similarly to \citep{tang2021}, we partition the terms of our Hamiltonian into groups of terms that can each be inferred simultaneously. This partition is based on a graph coloring, which we define below.

\begin{definition}[Squared graph]\label{defn:sq-graph}
Let the square of the interaction graph, $\mathcal{G}^2$, be the graph with the same vertex set as $\mathcal{G}$ and in which any two vertices are connected if their distance in $\mathcal{G}$ is at most 2. In words, the edges for $\mathcal{G}^2$ are
\begin{equation}
    \qty{(i,k) \mid \exists j \ \qty(\supp P_i \cap \supp P_j \neq \varnothing) \land \qty(\supp P_j \cap \supp P_k \neq \varnothing) \land (i \neq k)}
\end{equation}
\end{definition}

Our algorithm will rely on a graph coloring of $\mathcal{G}^2$. The essential idea is that for Paulis of the same color, there is always a ``moat" separating them. This moat will then be filled with maximally mixed states, which completely suppresses the influence of terms that we are not interested in. A partitioning of the Hamiltonian terms via some $C$-coloring of $\mathcal{G}^2$ makes it natural to rewrite the Hamiltonian using a double sum notation:
\begin{equation}
    H = \sum_{i=1}^C \sum_{j=1}^{\abs{\mathbf{V}_i}} \theta_{i,j} P_{i,j},
\end{equation}
where $\mathbf{V}_i$ is the set of all Paulis with the same color $C_i$. For instance, see \cref{ex:sq-coloring} for a coloring of the squared interaction graph for a 9-qubit TFIM.

\begin{lemma}[Simultaneous inference for a partition]\label{thm:simul-inf}
Let $\mathbf{V}_i$ be a partition in a coloring of $\mathcal{G}^2$. The coefficient for each Pauli in $\mathbf{V}_i$ can be inferred with up to an error $\epsilon \norm{\Theta}_\infty$, with failure probability for each \textbf{individual} coefficient being at most $\delta$ (so the overall failure probability is upper bounded by $\delta \abs{\mathbf{V}_i}$). This can be done with query complexity
\begin{equation}
    \order{\mathscr{D}^2 \log(1/\delta) \polylog(\mathscr{D}/\epsilon) \epsilon^{-2}}. \label{eqn:partition-complexity}
\end{equation}
\end{lemma}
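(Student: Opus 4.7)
The plan is to simultaneously run \cref{alg:first-comm} for every $P_{i,j} \in \mathbf{V}_i$ from a single shared dataset of queries. For each $j$, I would pick a qubit $x_j \in X_j := \supp P_{i,j}$ and a single-qubit observable $P^{(j)} = \sigma_\mu^{(x_j)}$ that anticommutes with the action of $P_{i,j}$ on $x_j$. The defining property of a color class in $\mathcal{G}^2$ is that its Paulis are pairwise at distance $\geq 3$ in $\mathcal{G}$, which simultaneously implies (a) the supports $X_j$ are disjoint and (b) no term $P_m$ of $H$ intersects two distinct $X_j$'s (otherwise $P_m$ would give a length-two path between two vertices of the color class). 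Consequently the chosen sites $x_j$ are distinct, the $P^{(j)}$ commute, and their tensor-product extension $P$ can be measured in a single shot; reading the $\pm 1$ outcome at $x_j$ yields an unbiased single-shot estimator of $\expval{P^{(j)}(t)}$ of variance at most $1$, exactly the hypothesis of \cref{thm:err-scale}.

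Next I would build a single initial state $\rho_0$ that simultaneously plays the role required by \cref{thm:param-select} for every $j$: on each $X_j$, place the spike $(\mathbb{I} + i\comm{P_{i,j}}{P^{(j)}}/2)/2^{\abs{X_j}}$; on $\bigl(\bigcup_j Y_j\bigr) \setminus \bigl(\bigcup_j X_j\bigr)$, place the maximally mixed state; and on the remaining qubits, place anything (e.g.\ maximally mixed). Each spike is the uniform mixture of $2^{\abs{X_j}-1}$ single-qubit stabilizer product states, so the full $\rho_0$ is preparable by classical randomization over allowed tensor-product inputs. The distance-$3$ property is used again here to conclude that, for each $j$, the moat $Y_j$ is disjoint from every $X_{j'}$ with $j'\neq j$, so the restriction of $\rho_0$ to $Y_j$ is genuinely maximally mixed. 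The remaining hypotheses of \cref{thm:param-select} are satisfied $j$-by-$j$, yielding $\Tr(i\comm{H}{P^{(j)}}\rho_0) = \theta_{i,j}$ for every $j$.

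With these pieces in place, the complexity count is routine. Invoking \cref{thm:err-scale} per coefficient, but with its relative-error tolerance set to $\epsilon/(2\mathscr{D})$ so that the absolute error $\epsilon \cdot \gamma = 2\mathscr{D}\norm{\Theta}_\infty \cdot \epsilon/(2\mathscr{D}) = \epsilon\norm{\Theta}_\infty$ matches the lemma's guarantee, inflates the per-coefficient sample cost by a factor of $\mathscr{D}^2$ and produces $\order{\mathscr{D}^2 \log(1/\delta)\polylog(\mathscr{D}/\epsilon)\epsilon^{-2}}$ queries. Because the same $\rho_0$ and $P$ serve every $j \in \mathbf{V}_i$, the dataset is shared, so this is also the total query cost for the whole partition, giving \cref{eqn:partition-complexity}. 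The per-coefficient failure probability $\delta$ inherits from \cref{thm:err-scale}, and a union bound over $\abs{\mathbf{V}_i}$ coefficients furnishes the stated overall failure probability.

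The main obstacle is the structural bookkeeping of the first two paragraphs: one has to verify carefully that the square-graph coloring is exactly the right notion of independence (an ordinary coloring of $\mathcal{G}$ would still permit a common neighbor $P_m$ intersecting two spike regions, destroying the moat), and that preparing each spike as a convex combination of tensor-product inputs does not spoil either the unbiasedness or the single-shot variance bound demanded by \cref{thm:err-scale}. Once those two points are handled, the lemma follows from a parallel application of the single-coefficient guarantee.
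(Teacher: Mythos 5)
Your proposal is correct and follows essentially the same route as the paper's proof: single-qubit observables with disjoint supports measured in one shot, a product initial state with a spike on each $\supp P_{i,j}$ and a maximally mixed moat justified by the distance-$3$ property of a $\mathcal{G}^2$ color class, followed by a per-coefficient application of \cref{thm:err-scale} with the tolerance rescaled to $\epsilon/(2\mathscr{D})$. The only difference is that you spell out two details the paper leaves implicit (that no Hamiltonian term can touch two spike regions, and that each spike decomposes as a uniform mixture of product stabilizer states), which strengthens rather than changes the argument.
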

\begin{proof}
See \cref{sec:master-theorem}.
\end{proof}

\begin{theorem}[Hamiltonian learning with unitary dynamics]\label{thm:master-theorem}
Fix a sparsely interacting Hamiltonian $H$ that has $r$ terms in its Pauli expansion with coefficients $\Theta$. For the appropriate choice of Chebyshev degree $L$ and evolution time $A$, \cref{alg:partition-learn} solves the quantum Hamiltonian learning problem (with an additive error $\epsilon \norm{\Theta}_\infty$ and failure probability at most $\delta$) with query complexity
\begin{equation}
    \order{\frac{\mathscr{D}^4 \log (r/\delta) \polylog(\mathscr{D}/\epsilon)}{\epsilon^{2}}}, \label{eqn:total-query}
\end{equation}
and classical processing time complexity
\begin{equation}
    \order{\frac{\mathscr{D}^2 r \log (r/\delta) \polylog(\mathscr{D}/\epsilon)}{\epsilon^{2}}}. \label{eqn:total-time}
\end{equation}
\end{theorem}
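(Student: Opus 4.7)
The plan is to combine Lemma 2 (simultaneous inference within one partition) with a suitable graph coloring of $\mathcal{G}^2$, and then sum contributions over color classes with a union bound to control the global failure probability. First, I would specify Algorithm 3 (the partitioned version of Algorithm 2, referenced as \cref{alg:partition-learn}): compute a proper coloring of $\mathcal{G}^2$ producing color classes $\mathbf{V}_1,\ldots,\mathbf{V}_C$, then for each $i$ invoke the simultaneous-inference routine of \cref{thm:simul-inf} with accuracy parameter $\epsilon$ and per-coefficient failure probability $\delta' = \delta/r$, and use the choices $A\sim\tau$ and $L\sim\log\epsilon^{-1}$ guaranteed by \cref{thm:err-scale}.

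The key combinatorial input is a bound on $C$. Because every vertex of $\mathcal{G}$ has degree at most $\mathscr{D}$, any vertex of $\mathcal{G}^2$ can be connected only to vertices within graph-distance $2$, of which there are at most $\mathscr{D}+\mathscr{D}(\mathscr{D}-1)=\mathscr{D}^2$. Thus $\mathcal{G}^2$ has maximum degree at most $\mathscr{D}^2$, and a greedy coloring yields $C\le \mathscr{D}^2+1=\order{\mathscr{D}^2}$ color classes. I would compute this coloring in classical time $\order{r\mathscr{D}^2}$ (linear in the number of edges of $\mathcal{G}^2$), which is absorbed in the bounds below.

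With this coloring in hand, I apply \cref{thm:simul-inf} to each of the $C=\order{\mathscr{D}^2}$ partitions with failure parameter $\delta/r$. The per-partition query cost becomes
\begin{equation}
    \order{\mathscr{D}^2 \log(r/\delta)\,\polylog(\mathscr{D}/\epsilon)\,\epsilon^{-2}},
\end{equation}
and summing over $C$ partitions multiplies this by another $\mathscr{D}^2$, yielding \cref{eqn:total-query}. For failure probability, a union bound over all $r$ coefficients gives total failure probability at most $r\cdot(\delta/r)=\delta$, so the additive error $\epsilon\|\Theta\|_\infty$ holds uniformly across all coefficients with probability at least $1-\delta$.

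For classical processing time, the bookkeeping is that each of the $\abs{\mathbf{V}_i}$ coefficients in a partition must be extracted from the measurement outcomes, and the Chebyshev regression for a single coefficient runs in time $\polylog(\mathscr{D}/\epsilon)\cdot \epsilon^{-2}\log(r/\delta)$ (the same order as its query cost). Summing $\abs{\mathbf{V}_i}$ over all partitions gives $r$ total coefficients, each contributing this cost, which produces \cref{eqn:total-time}. The main obstacle I anticipate is the careful accounting in this last step: one has to verify that the simultaneous-measurement scheme from \cref{thm:simul-inf} actually permits each shot to be reused for every coefficient in its partition without blowing up either the variance bound required by \cref{thm:err-scale} or the additive bias, so that the $\mathscr{D}^2$ query overhead per partition truly amortizes across all $\abs{\mathbf{V}_i}\sim r/\mathscr{D}^2$ coefficients. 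Once that is in place the theorem follows by direct bookkeeping.
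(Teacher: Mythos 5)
Your proposal is correct and follows essentially the same route as the paper: color $\mathcal{G}^2$ with $\order{\mathscr{D}^2}$ colors, apply \cref{thm:simul-inf} to each color class with per-coefficient failure probability $\delta/r$, take a union bound, and tally queries over the $\chi(\mathcal{G}^2)$ partitions versus classical processing over the $r$ coefficients. The only cosmetic slip is that your explicit per-coefficient classical-processing cost omits the $\mathscr{D}^2$ factor arising from the rescaling $\epsilon \to \epsilon/(2\mathscr{D})$ in $N$, but your parenthetical ``same order as its query cost'' supplies it, so \cref{eqn:total-time} still follows.
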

\begin{proof}
We find a coloring of the squared interaction graph. There is an efficient way to do this with at most $\mathscr{D}^2$ colors (see \cref{defn:color}). Now, we apply \cref{thm:simul-inf} to each of these partitions. For the detailed proof, see \cref{sec:master-theorem}.
\end{proof}

\begin{algorithm}[H]
\caption{Hamiltonian learning with unitary dynamics}\label{alg:partition-learn}
\begin{algorithmic}[1]
\Procedure {PartitionInferCoefficients}{$\tau, \mathcal{G}, N, L, A, K$}
\State{$\qty{\mathbf{V}_i} \gets  \textsc{GraphColor}(\mathcal{G}^2)$} \Comment{Find $\mathscr{D}^2+1$ partitions of $\mathcal{G}^2$}
\For{$i \gets 1, \ldots, \mathscr{D}^2+1$}
\For{$j \gets 1, \ldots, \abs{\mathbf{V}_i}$} \Comment{Define the observables and states (\cref{thm:simul-inf})}
\State{$P_j' \gets $ a single-qubit Pauli such that $\comm{P_j'}{P_j} \neq 0$}
\State{$\rho_0^{(\supp P_j)} = (\mathbb{I} + i \comm{P_j}{P_j'}/2)/2$}
\EndFor
\For{$q \in \qty(\supp \mathbf{V}_i)'$} \Comment{Fill the moats}
\State{$\rho_0^{(q)} \gets \mathbb{I}/2$}
\EndFor
\For{$k \gets 1, \ldots K$}
\For{$\ell \gets 1, \ldots L$} \Comment{Construct the dataset (\cref{defn:dataset})}
\State $z_\ell \gets -\cos(\frac{2i-1}{L} \pi)$
\State $t_\ell \gets \frac{A}{2}(1+z_\ell)$ 
\State {$\mathsf{M}_\ell \gets N$ simultaneous measurements of $\Tr(P_j' e^{-iH t_{\ell}} \rho_0 e^{iH t_{\ell}})$ for $j \in \qty{1,\ldots,\abs{\mathbf{V}_i}}$}
\EndFor
\For{$j \gets 1, \ldots, \abs{\mathbf{V}_i}$} \Comment{Estimate the first commutator (\cref{alg:first-comm}) for each Pauli in $\mathbf{V}_i$}
\State {$y_\ell \gets $ estimate of $\Tr(P_j' e^{-iH t_{\ell}} \rho_0 e^{iH t_{\ell}})$ by averaging over $\mathsf{M}_\ell$ for $\ell=1, \ldots, L$}
\For{$m \gets 1, \ldots, L-1$} \Comment{Estimate the Chebyshev coefficients (\cref{thm:cheb-fit})}
\State $\tilde{b}_m \gets \frac{2}{L}\sum_{\ell=1}^{L} y_\ell T_m(z_\ell)$ 
\EndFor
\State{$\tilde{\theta}_{i,j}^{(k)} \gets -\frac{2}{A} \sum_{m=1}^{L-1} (-1)^m \tilde{b}_m m^2$} \Comment{$k$th estimate for the coefficient $\theta_{i,j}$}
\EndFor
\EndFor
\For{$j \gets 1, \ldots, \abs{\mathbf{V}_i}$} \Comment{Calculate median of means for each coefficient}
\State $\tilde{\theta}_{i,j} \gets \text{median}_{k=1,\ldots,K}\qty{\tilde{\theta}_{i,j}^{(k)}}$
\EndFor
\EndFor
\EndProcedure
\end{algorithmic}
\end{algorithm}

In a different setup, we may be given access to copies of a Gibbs state at a temperature $\beta^{-1}$. If we measure an observable $P_i$, the expectation will be
\begin{equation}
    \expval{P_i}_\beta = \frac{\Tr(P_i \exp(-\beta H))}{\Tr(\exp(-\beta H))} \label{eqn:gibbs-expec}
\end{equation}
In what follows, we apply the analysis of \citet{tang2021} to formulate $\expval{P_i}_\beta$ as a polynomial in $\beta$, in accordance to the framework in \cref{eqn:polynomial-pic}. We will show that we can learn the coefficients of the Hamiltonian from the first order term in this polynomial, therefore mapping the problem of Hamiltonian learning from Gibbs states onto Hamiltonian learning with unitary dynamics.

\begin{theorem}[Hamiltonian learning with Gibbs states]
The Hamiltonian learning problem (with an additive error $\epsilon \norm{\Theta}_\infty$ and failure probability at most $\delta$) can be solved using
\begin{equation}
    \order{\frac{\mathscr{D}^5 \log (r/\delta) \polylog(\mathscr{D}/\epsilon)}{\epsilon^{2}}} \label{eqn:gibbs-query2}
\end{equation}
copies of the Gibbs state. This can be achieved with a time complexity
\begin{equation}
    \order{\frac{\mathscr{D}^4 r \log(1/\delta) \polylog(\mathscr{D}/\epsilon)}{ \epsilon^{2}}}.
\end{equation}
\end{theorem}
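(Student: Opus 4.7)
The plan is to recast the Gibbs-state problem in the polynomial framework of \cref{eqn:polynomial-pic} with $\beta$ playing the role of $t$ and $f(\beta)=\expval{P}_\beta$, and then to re-run the three-step program of \cref{sec:recover}. First, I would Taylor-expand the numerator and denominator of \cref{eqn:gibbs-expec} about $\beta=0$. For any traceless Pauli $P$ the numerator has vanishing constant term, and since $\Tr H=0$ the denominator equals $2^n(1+O(\beta^2))$. Choosing $P=P_i$ and invoking Pauli orthogonality then gives $\expval{P_i}_\beta = -\beta\theta_i + O(\beta^2)$, so the linear Taylor coefficient $c_1=-\theta_i$ directly returns the desired parameter; no analogue of the careful $\rho_0$ construction in \cref{thm:param-select} is needed, because the normalization $\mathbb{I}/2^n$ (effectively $\rho(\beta=0)$) already diagonalizes the problem through Pauli orthogonality.

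Next, I would bound the higher-order coefficients $|c_k|$. Each $c_k$ can be written as a polynomial in the moments $\Tr(P H^{k_1})\Tr(H^{k_2})\cdots/2^{nm}$ arising from multiplying the numerator series by the reciprocal of the denominator series. Expanding $H^k$ in the Pauli basis, the surviving contributions correspond to walks of length at most $k$ on the interaction graph $\mathcal{G}$ rooted at $P_i$, and the denominator corrections reorganize into a cluster expansion along the lines of \citet{tang2021}. Bounding the number of such walks by $\mathscr{D}^k$ and controlling the additional contribution from the reciprocal series yields $|c_k|\leq \order{\mathscr{D}^{2k} k!}$, as asserted in \cref{eqn:bounds}. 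Establishing this scaling cleanly is the main obstacle: unlike the unitary case, where the iterated-commutator bound (\cref{lem:norm-bound}) is fairly direct, here one must carefully track cancellations between the numerator and denominator series, and it is precisely the extra $\mathscr{D}^k$ factor that shrinks the admissible typical scale from $\tau\sim 1/\mathscr{D}$ to $\beta_* \sim 1/\mathscr{D}^2$.

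Finally, I would feed this bound into the Chebyshev regression machinery. Replacing the typical scale $\tau$ in \cref{defn:typical-scale} by $\beta_* = 1/(2\mathscr{D}^2 \|\Theta\|_\infty)$ and invoking \cref{thm:err-scale} with $A=\Theta(\beta_*)$ and $L\sim\log(1/\epsilon)$ produces an estimator for any single $\theta_i$ with additive error $\epsilon\|\Theta\|_\infty$ using $\order{\mathscr{D}^4\log(1/\delta)\polylog(\mathscr{D}/\epsilon)/\epsilon^2}$ Gibbs-state copies --- an extra factor of $\mathscr{D}^2$ relative to the unitary case, because $\beta_*$ is smaller by a factor of $\mathscr{D}$. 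To parallelize, I would partition the $r$ Pauli terms into the $\mathscr{D}+1$ color classes of a greedy coloring of $\mathcal{G}$ itself (no $\mathcal{G}^2$-coloring is required, since there is no $\rho_0$ around which to engineer ``moats''): Paulis within the same color class have disjoint support and therefore commute, so all of them can be measured simultaneously on each Gibbs-state copy. Running the algorithm color by color and boosting individual failure probabilities to $\delta/r$ via median-of-means over $\order{\log(r/\delta)}$ repetitions then yields the claimed query complexity $\order{\mathscr{D}^5 \log(r/\delta)\polylog(\mathscr{D}/\epsilon)/\epsilon^2}$ and classical processing time $\order{\mathscr{D}^4 r \log(1/\delta)\polylog(\mathscr{D}/\epsilon)/\epsilon^2}$.
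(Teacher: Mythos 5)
Your proposal is correct and follows essentially the same route as the paper: the first-order coefficient $-\theta_i$ obtained via tracelessness and Pauli orthogonality, a derivative bound of order $\norm{\Theta}_\infty^m\,\mathscr{D}^{2m}\,m!$ imported from the combinatorial results of \citet{tang2021}, Chebyshev regression with the typical scale rescaled to $\sim 1/(\mathscr{D}^2\norm{\Theta}_\infty)$ (costing an extra $\mathscr{D}^2$ per coefficient), and parallelization via a $(\mathscr{D}+1)$-coloring of $\mathcal{G}$ itself rather than $\mathcal{G}^2$. The only cosmetic difference is that the paper sidesteps your ``main obstacle'' of tracking numerator/denominator cancellations by writing $\expval{P_i}_\beta = -\tfrac{1}{\beta}\,\partial_{\theta_i}\log\Tr\exp(-\beta H)$ and bounding derivatives of the log-partition function directly with Lemma 3.7 and Proposition 3.8 of \citet{tang2021}, rather than multiplying the numerator series by the reciprocal of the denominator series as you suggest.
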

\begin{proof}
The protocol is a near mirror image of the Hamiltonian learning protocol using unitary dynamics. For the full proof, see \cref{sec:ham-gibbs}.
\end{proof}

\section{Numerical Results and Discussion}\label{chap:results}

In this section, we demonstrate numerically the performance of \cref{alg:partition-learn} on a simulated Hamiltonian learning problem using unitary dynamics. We will apply \cref{alg:partition-learn} for learning a transverse field Ising model:
\begin{equation}
    H = \sum_{i=1}^{n-1} J_i \sigma_z^{(i)} \otimes \sigma_z^{(i+1)} + \sum_{i=1}^n B_i \sigma_x^{(i)},
\end{equation}
where $J_i, B_i \sim \text{Unif}(-1,1)$. To simulate the dynamics of this Hamiltonian, we use the time-evolution block-decimation method \citep{zwolak2004,paeckel2019,white2004,daley2004,vidal2004}.


There are two hyperparameters in \cref{alg:partition-learn} that are crucial for determining the performance of the Hamiltonian learning routine: the maximum evolution time $A$ and the Chebyshev degree $L$. Setting these parameters is a delicate balance between noise-induced error and modelling errors. If $A$ is too low or $L$ is too high, the variance in the dataset will dominate the error, and on the other hand, if $A$ is too high or $L$ is too low, the modelling error will dominate. It is generally desirable to set these two parameters such that the modelling and noise errors are comparable.\footnote{However, in some settings, it may be desirable to let the dataset variance grow somewhat larger than the modelling error, since this error can be quantified exactly via \cref{eqn:var-est}, where $\sigma_\ell^2$ can be obtained by a bootstrap estimate from the dataset. There are no similar methods to quantify the modelling error.} A~possible method for setting $A$ and $L$ can be to optimize the error bounds (see \cref{fig:hyperparam-settings}). Numerically, these optimal values behave as anticipated in \cref{thm:err-scale}: the optimal $L^*$ scales with $\order{\log \epsilon^{-1}}$, and $A/\tau \sim 1$ which leads to shot requirements scaling with $\order{\text{polylog}(1/\epsilon)\epsilon^{-2}}$. 
\begin{figure}
    \centering
    \includegraphics[width=0.8\textwidth]{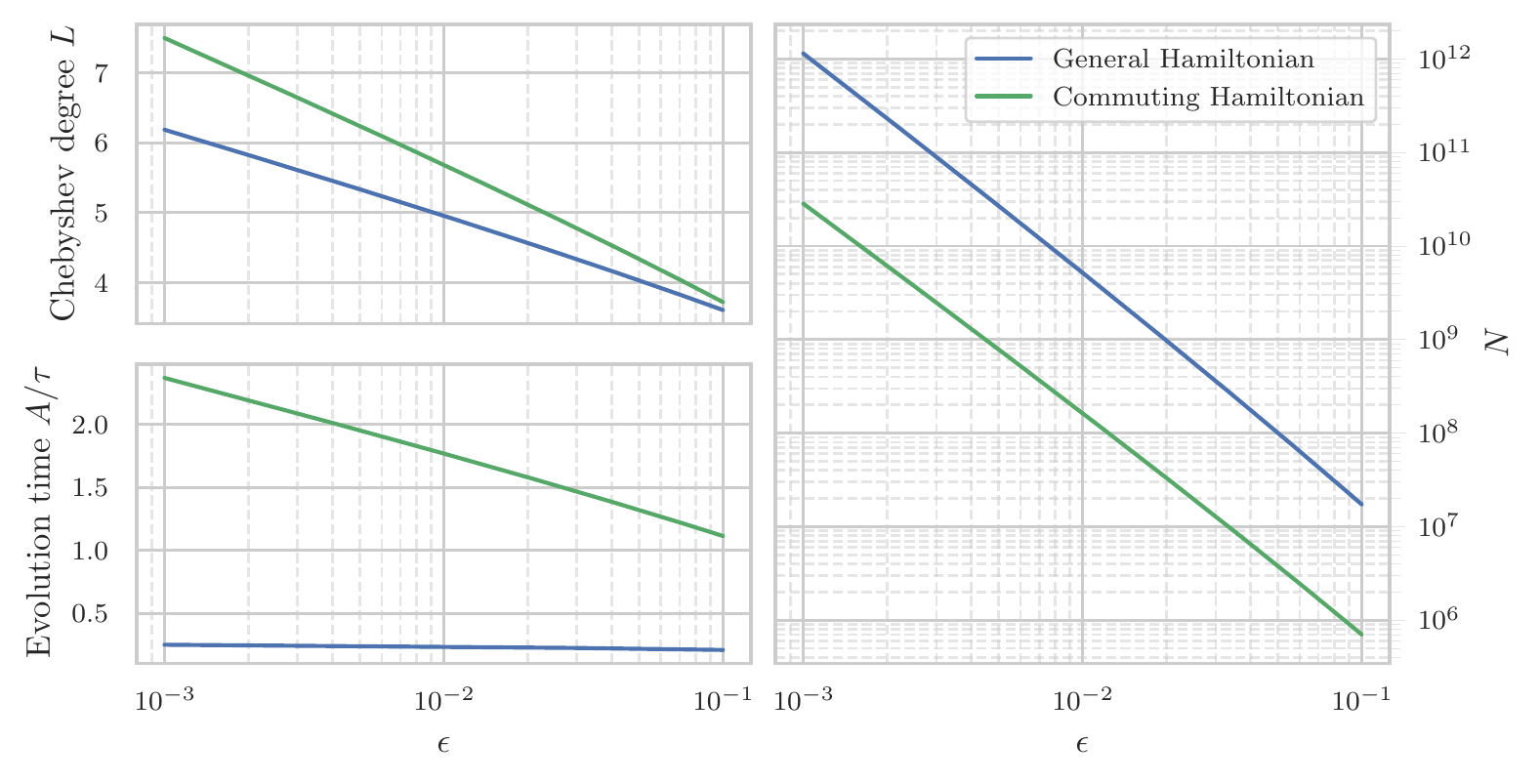}
    \caption[Optimal hyperparameter settings]{Settings for $A$ and $L$ as a function of the desired error $\epsilon$. These settings are found based on minimizing the upper bound on $N \cdot L$ in \cref{eqn:n-bound}, plugging in the two types of derivative bounds found in \cref{lem:norm-bound} for arbitrary and commuting Hamiltonians. For the case of the arbitrary Hamiltonians, note that although we were free to set $A$ and $L$ in the proof of \cref{thm:err-scale}, we indeed find the optimal $L = \order{\log \epsilon^{-1}}$, $A = \order{\tau}$, and $N = \order{\text{polylog}(1/\epsilon)\epsilon^{-2}}$. We find similar scaling for the case of the commuting Hamiltonian in every variable except $A/\tau$, which also to scale as $ \order{\log \epsilon^{-1}}$: this is because the modeling error term is suppressed by a factorial, and scales like $\nicefrac{(A/\tau)^L}{L!}$. Despite this, the overall query complexity is only better than the general case by a constant factor.}
    \label{fig:hyperparam-settings}
\end{figure}

In \cref{fig:model-err}, we show the error in the recovered Hamiltonian parameters corresponding to a target error of $\epsilon=0.021$. As expected, the theoretical prediction for the noise error is close to perfect. However, the modelling error is drastically overestimated by nearly four orders of magnitude. This is reasonable, and originates in the upper bound for the norm of the iterated commutator $\norm{\qty[H^m P]}$ found in \cref{lem:norm-bound}, since the counting bound in \cref{thm:counting} is unlikely to be saturated. This miscalculated modelling error has important consequences for the algorithm, since it results in a poorly specified evolution time $\tau$. Our bounds in \cref{lem:norm-bound} are too loose, hence our calculated optimal evolution time is too small, resulting in a higher noise error than necessary. We propose a number of ways to remedy this, beginning with a heuristic attempt to tighten the bound in \cref{lem:norm-bound}. The reason for the first two optimizations is nonobvious, and is described in detail in \cref{sec:heur}.

\begin{figure}[H]
    \centering
    \includegraphics[width=0.65\textwidth]{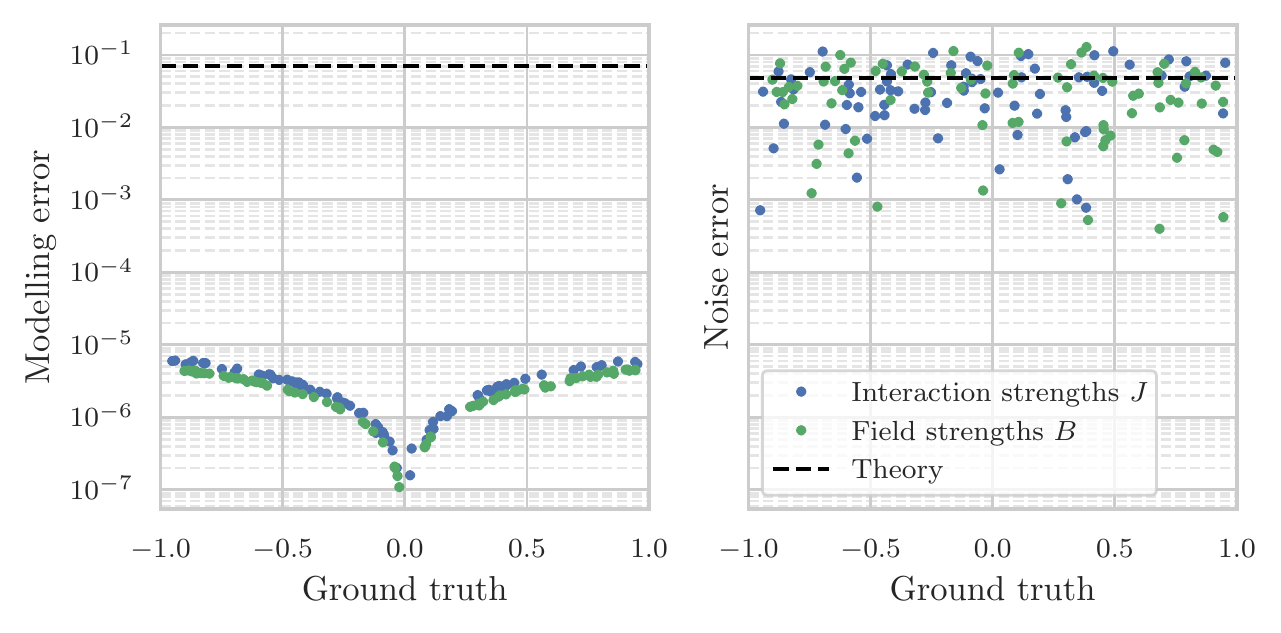}
    \caption[Empirical error of the Hamiltonian learning protocol]{The empirical modelling and noise error of the Hamiltonian learning protocol using the optimal $A$ and $L$ for $\epsilon=0.021$ as prescribed in \cref{fig:hyperparam-settings}. The modelling errors are calculated with a noise-free dataset, and the noise errors are calculated from a single noisy dataset. The dashed line indicates the maximum theoretical modelling error on the left and indicates the predicted variance due to noise on the right.}
    \label{fig:model-err}
\end{figure}

\begin{optimization}\label{opt:1}
When calculating the optimal evolution time $A$ and polynomial degree $L$, we replace $\mathscr{D}$ with
\begin{equation}
    \overline{\mathscr{D}} = \frac{1}{2\abs{V}}\sum_{v \in V} \deg(v).
\end{equation}
\end{optimization}

\begin{optimization}\label{opt:2}
A further improvement of algorithm performance can be gained by optimizing the number of queries allocated to each evaluation of $\expval{P(t)}$. Let 
\begin{equation}
    c_\ell \equiv \qty(\sum_{m=1}^{L-1} (-1)^m m^2 T_m(z_\ell))^2.
\end{equation}
Then, for a fixed number of total queries $N_\textrm{max}$, we allocate a number of queries
\begin{equation}
    N_{\ell} = \frac{\sqrt{c_\ell}}{\sum_\ell \sqrt{c_\ell}} N_\textrm{max}
\end{equation}
when estimating $\expval{P(t_\ell)}$.
\end{optimization}
\begin{optimization}\label{opt:3}
Since we know the polynomial at $t=0$ to have a value of $0$ (our SPAM parameters are set such that this is always true), we constrain our solution for the Chebyshev coefficients such that the constant term in the polynomial is $0$. That is, rather than fitting a generic polynomial $\sum_{\ell=0} c_\ell t^\ell$ to the data, we fit $\sum_{\bm{\ell=1}} c_\ell t^\ell$. This further reduces the variance of our solution, mitigating the noise-induced error.
\end{optimization}

We apply \cref{opt:1,opt:2,opt:3} (the heuristic calculation of $\overline{\mathscr{D}}$, improved shot allocation, and solution constraints), to the TFIM model. The error distributions shown in \cref{fig:err} are the result of running the Hamiltonian learning algorithm on 40 random instances of the 80-qubit TFIM problem. Since we can already calculate the improvements realized by \cref{opt:1} (decrease in error by a factor $\gtrsim \nicefrac{\mathscr{D}}{\overline{\mathscr{D}}}$), we show only the effects of \cref{opt:2,opt:3}. Empirically, we are able to realize significant improvements by applying these optimizations. There is a decrease in error by a factor $\sim 1.7$, enabling us to use $\gtrsim 3$ times fewer shots.

\begin{figure}[H]
    \centering
    \includegraphics[width=0.8\textwidth]{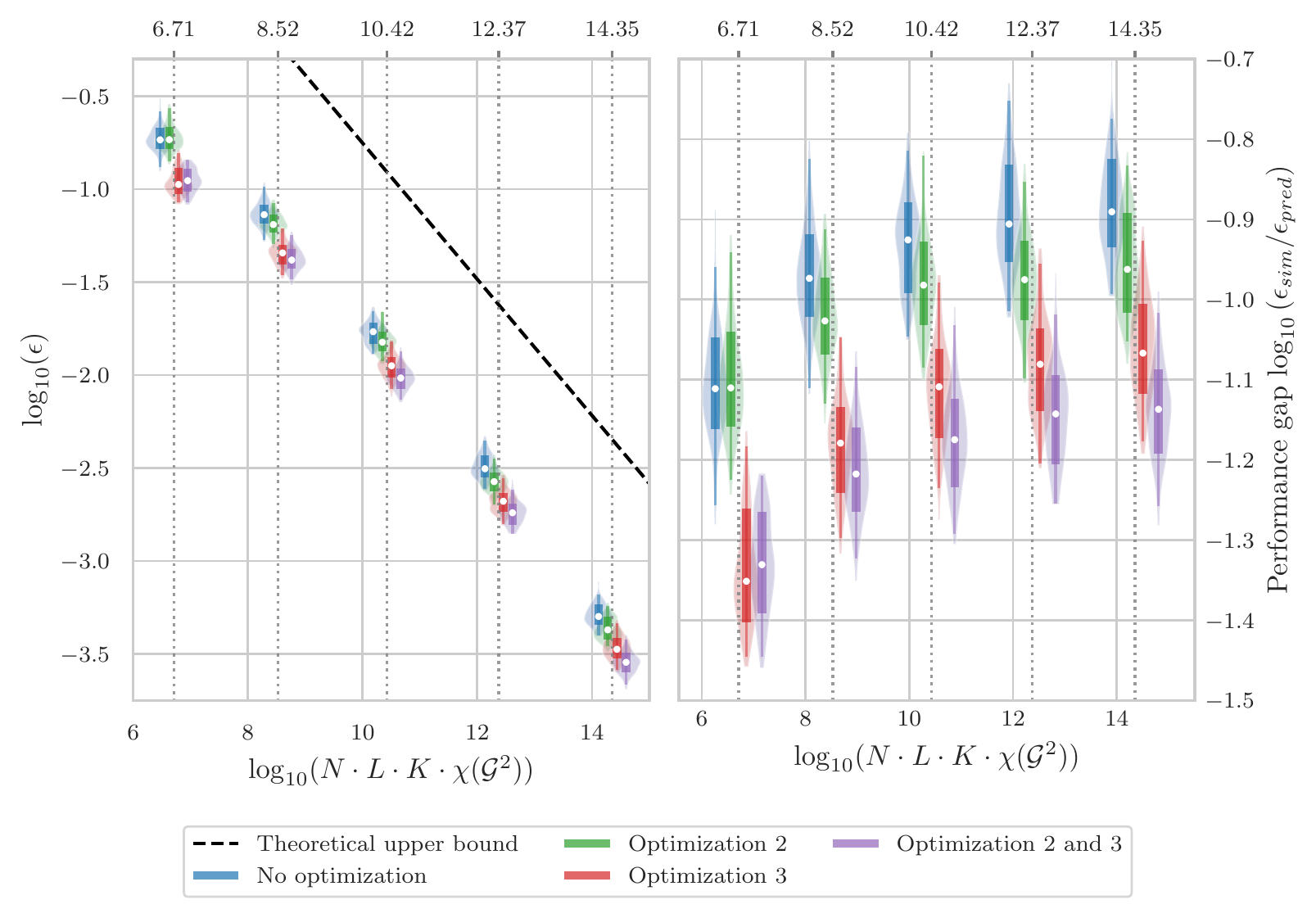}
    \caption[Empirical error distribution]{On the left, we show the maximum absolute error across all 159 coefficients of the 80-qubit TFIM model, plotted against the total number of queries $N \cdot L \cdot K \cdot \chi(\mathcal{G}^2)$, and on the right, we show the difference between the theoretical error upper bound and the empirical errors from numerical simulations (note the log-log scale for both plots). The violin plots show the distribution of maximum absolute errors from 100 random initializations of the TFIM (with coefficients sampled uniformly between $-1$ and $1$). The distributions show the $\qty[1\%,99\%]$ interval in a narrow line, a $\qty[16\%,84\%]$ interval in a wider line, and the median marked in white. The violin plots are offset by a small amount for visualization purposes, but each cluster of four violin plots used the same number of queries marked by the dotted grey lines. 
    Recall $N$ and $K$ are defined in \cref{thm:err-scale}, $L$ is the Chebyshev degree, and $\chi(\mathcal{G}^2)$ is the chromatic number of the squared interaction graph (\cref{defn:color}).
    We set the failure probability to $\delta=15\%$.}
    \label{fig:err}
\end{figure}

As shown on the right of \cref{fig:err}, there is a significant gap (by a factor $\sim 10$) between the theoretical error upper bound (calculated using the effective degree $\overline{\mathscr{D}}$) and the empirical error. We propose two possible reasons for this gap. First, despite the heuristic improvements made in \cref{opt:1}, there is still a significant gap between the theoretical and empirical modelling error (see \cref{fig:model-err-heur}). This adds a discrepancy by a factor $\sim 2$ between the theoretical and empirical errors, since the optimal configurations always have a theoretical error that is composed roughly of half modeling error and half noise error. Secondly, the settings for $N$ and $K$ used in \cref{thm:err-scale} to guarantee $\mathbb{P}(\text{maximum absolute error} > \epsilon) < \delta$ are overestimates of the true requirements.

\section{Conclusions}
In this work, we have discussed the quantum Hamiltonian learning problem. We introduced a unifying model for Hamiltonian learning using both unitary dynamics and Gibbs states. By subsuming these two approaches into the same model, we were able to describe an abstract routine for learning the Hamiltonian of a quantum many-body system given limited access to the system. This routine was based on fixing certain SPAM parameters, then viewing the system as a function $f$ of a single variable (in this work, we consider this variable to be either time $t$ or inverse temperature $\beta$). We argued that the coefficients in the Taylor expansion of $f$ (particularly, the first order coefficient) could be connected with the Hamiltonian parameters in a straightforward manner, then showed that the relevant coefficient could be inferred both accurately and efficiently from noisy evaluations of $f$. Finally, we concluded by describing how our protocol could achieve better than linear query complexity in $r$ (the number of Hamiltonian parameters) by using SPAM configurations amenable to simultaneous measurements. 

This culminated in our main result, wherein we proposed an algorithm that achieves an almost noise-limited ($\sim \frac{\text{polylog}(\epsilon^{-1})}{\epsilon^2}$) query complexity, similar to that of \citet{tang2021} and \citet{franca2022}. However, our work represents an advance for several reasons. In comparison to \citet{tang2021}, we significantly reduce their dependence on the parameter $\mathscr{D}$ from $\mathscr{D}^{21}$ to $\mathscr{D}^4$. In comparison to \citet{franca2022}, we generalize their protocol to learning with Gibbs states and improve their measurement parallelization overhead from $\order{16^k}$ to $\order{\mathscr{D}^2}$ -- this can be a significant improvement when we have prior knowledge of the Pauli terms in the Hamiltonian (i.e., $\mathscr{D} < 4^k$). Furthermore, by deriving explicit bounds on the performance of our algorithm, we were able to provide precise numerical prescriptions for theoretically optimal hyperparameters such as maximum evolution time and Chebyshev degree. We concluded by proposing a number of heuristic improvements to our algorithm, and argued they were reasonable to apply in general. This combination of improvements makes our algorithm applicable in practice for non-trivial Hamiltonians, which we demonstrated by applying our protocol on a simulated, large (80-qubit) problem.

Although we have demonstrated a successful application of our learning algorithm on a simulated problem, this simulation did not include possible detrimental experimental effects. Our algorithm makes minimal SPAM requirements (requiring only single-qubit measurements and simple product states), but further investigation is needed to determine the effects of SPAM errors on performance. We also leave for later works a study of how this protocol can be improved by making stronger assumptions on either the Hamiltonian or the suite of interactions available to us. For instance, we already showed a constant (but significant) drop in the number of measurements required for learning a commuting Hamiltonian with unitary dynamics. We expect a similar effect for Hamiltonian learning with Gibbs states. Furthermore, if we assume we can interact with our system using a trusted quantum simulator of our own, a variety of approaches become possible. Among these is Hamiltonian learning with Loschmidt echoes, as done in \citet{wiebe2014a}. Rigorous performance bounds have not yet been found for this approach, but we speculate that a similar application of our techniques may yield improved performance -- however, we leave this for future works.

\begin{acknowledgments}
The authors thank Hsin-Yuan Huang (Robert), Matthias C.~Caro, Diego Garc\'ia-Mart\'in, and Marco Cerezo for inspiring discussions and for comments on an earlier draft. AG acknowledges support from the U.S. Department of Energy (DOE) through a quantum computing program sponsored by the Los Alamos National Laboratory (LANL) Information Science \& Technology Institute. LC was supported by the Laboratory Directed Research and Development (LDRD) program of LANL under project number 20210116DR. PJC was supported by the LANL ASC Beyond Moore's Law project. 
\end{acknowledgments}

\clearpage
\appendix
\section{Chebyshev Regression}\label{sec:cheb}
To fit a black-box function $f$ where we can query $f(x)$ for $x$ within some window $[A,B]$, we are often interested in approximating $f$ with a polynomial of degree $L$. This is known as polynomial interpolation. When we have complete freedom in choosing the location of the points $x \in [A,B]$, a popular method is Chebyshev interpolation. There are a number of favorable properties associated with this method. Importantly, it achieves close to an optimal approximation error on $[A,B]$. That is, if $\tilde{f}$ is the degree-$L$ polynomial resulting from Chebyshev interpolation, $\max_{x \in [A,B]} \abs{f(x)-\tilde{f}(x)}$ is close to the minimal possible value among all polynomials of degree $L$ \citep{mason2003,press2007}.

Chebyshev interpolation takes its name from its extensive use of a class of polynomials known as Chebyshev polynomials. This class of polynomials is unique in that for $z \in [-1,1]$, they can be written:
\begin{equation}
    T_n(z) = \cos(n \arccos z)
\end{equation}
Although this is not in polynomial form, by using the identity
\begin{equation}
    \cos(n \theta) = \sum_{\substack{r=0\\2r \leq n}} (-1)^r \binom{n}{2r} \cos^{n-2r}(\theta) \sin^{2r}(\theta),
\end{equation}
we find:
\begin{align}
    T_n(z) &= \sum_{\substack{r=0\\2r \leq n}} (-1)^r \binom{n}{2r} \cos^{n-2r}(\arccos z) \sin^{2r}(\arccos z) \\
    &= \sum_{\substack{r=0\\2r \leq n}} (-1)^r \binom{n}{2r} z^{n-2r} (1-z^2)^r \label{eqn:tn-expansion}.
\end{align}
This makes clear that the degree of $T_n$ is $n$. The first seven Chebyshev polynomials are
\begin{align}
\begin{split}
    T_0(z) &= 1 \\
    T_1(z) &= z \\
    T_2(z) &= 2z^2-1 \\
    T_3(z) &= 4z^3 - 3z \\
    T_4(z) &= 8z^4 - 8z^2 + 1 \\
    T_5(z) &= 16z^5 - 20 z^3 + 5z \\
    T_6(z) &= 32z^6 - 48 z^4 + 18 z^2 - 1.
\end{split}
\end{align}

Below, we list a number of useful properties of Chebyshev polynomials.
\begin{lemma}\label{lem:cheb-roots}
The roots of the $n$th Chebyshev polynomial are
\begin{equation}
    z_m = -\cos(\frac{2m-1}{2n} \pi) \qc m=1,\ldots,n
\end{equation}
\end{lemma}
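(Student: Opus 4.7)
The plan is to argue directly from the trigonometric definition $T_n(z) = \cos(n \arccos z)$ for $z \in [-1,1]$, which the excerpt establishes. Since $T_n$ is a polynomial of degree $n$ (as follows from the expansion in \cref{eqn:tn-expansion}), it has at most $n$ roots, so it suffices to exhibit $n$ distinct values $z_m \in [-1,1]$ at which $T_n$ vanishes.

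First I would observe that $T_n(z) = 0$ if and only if $n \arccos z \in \{(2k-1)\pi/2 : k \in \mathbb{Z}\}$, since these are exactly the zeros of $\cos$. Setting $\arccos z = (2k-1)\pi/(2n)$ and inverting gives the candidate roots
\begin{equation}
    z = \cos\!\left(\frac{(2k-1)\pi}{2n}\right).
\end{equation}
For $k=1,\ldots,n$, the arguments $(2k-1)\pi/(2n)$ lie strictly between $0$ and $\pi$, so $\arccos$ is a valid inverse and the resulting $z$ values are all in $(-1,1)$. Moreover, these $n$ values are distinct because $\cos$ is strictly monotonic on $(0,\pi)$, so this exhausts the roots.

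Next, I would reconcile this with the form stated in the lemma. Using $-\cos(\theta) = \cos(\pi - \theta)$,
\begin{equation}
    -\cos\!\left(\frac{2m-1}{2n}\pi\right) = \cos\!\left(\pi - \frac{2m-1}{2n}\pi\right) = \cos\!\left(\frac{2(n-m+1)-1}{2n}\pi\right),
\end{equation}
which, under the reindexing $k = n-m+1$, is exactly the family of roots found above. Hence as $m$ ranges over $1,\ldots,n$, the values $z_m = -\cos((2m-1)\pi/(2n))$ run through the same set of $n$ distinct roots, completing the argument.

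There is no real obstacle here beyond careful index bookkeeping; the only subtlety is confirming that the $n$ values produced are distinct and lie in $[-1,1]$ so that we have not over- or under-counted roots of a degree-$n$ polynomial. Once monotonicity of $\cos$ on $(0,\pi)$ is invoked, the proof is immediate.
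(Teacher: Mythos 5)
Your proof is correct and takes essentially the same approach as the paper: both arguments rest on the trigonometric form $T_n(z)=\cos(n\arccos z)$, with the paper simply substituting the stated $z_m$ and verifying $T_n(z_m)=0$, while you solve $\cos(n\arccos z)=0$ and reconcile indices. Your version is slightly more complete in that it also checks the $n$ roots are distinct and therefore exhaust the zeros of a degree-$n$ polynomial, a point the paper's one-line substitution leaves implicit.
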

\begin{proof}
This follows from a simple substitution. 
\begin{align*}
    T_n(z_m) &= \cos(n \arccos(-\cos(\frac{2m-1}{2n}) \pi)) \\
    &= \cos(-\qty(m-\frac{1}{2}) \pi) \\
    &= 0
\end{align*}
\end{proof}

\begin{lemma}\label{lem:disc-orth}
If $\qty{z_m \mid m=1, \ldots, n}$ are the roots of $T_n$, then for any $i, j \leq n$:
\begin{equation}
    \sum_{m=1}^n T_i(z_m) T_j(z_m) = \begin{cases}
    0 & \text{if $i \neq j$} \\
    \frac{n}{2} & \text{if $i=j \neq 0$} \\
    n & \text{if $i=j=0$}
    \end{cases}
\end{equation}
In the general case, if either $i > n$ or $j > n$: 
\begin{equation}\label{eqn:gen-dist-orth}
    \sum_{m=1}^n T_i(z_m) T_j(z_m) = \begin{cases}
    \frac{n}{2} & \text{if either $i + j \mid 2n$ or $i-j \mid 2n$} \\
    n & \text{if $(i+j) \mid 2n$ and $i-j \mid 2n$} \\
    0 & \text{otherwise}
    \end{cases}
\end{equation}
\end{lemma}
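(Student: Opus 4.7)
The plan is to reduce the claim to evaluating sums of cosines at arithmetic progressions and then attack those with a standard geometric-series argument. By \cref{lem:cheb-roots} one can write $z_m = -\cos\theta_m$ with $\theta_m = (2m-1)\pi/(2n)$. The defining identity $T_k(\cos\phi) = \cos(k\phi)$, applied with $\phi = \pi - \theta_m$, gives $T_k(z_m) = (-1)^k \cos(k\theta_m)$, so the sign factors from the two Chebyshev values combine into a single $(-1)^{i+j}$. A product-to-sum identity then turns the sum into
\begin{equation}
\sum_{m=1}^n T_i(z_m)\, T_j(z_m) \;=\; \frac{(-1)^{i+j}}{2}\bigl[S(i-j)+S(i+j)\bigr], \qquad S(k) \;=\; \sum_{m=1}^n \cos\!\left(\frac{k(2m-1)\pi}{2n}\right).
\end{equation}

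Next I would compute $S(k)$ directly by writing it as the real part of $\sum_{m=1}^n e^{ik(2m-1)\pi/(2n)}$. Pulling out the phase $e^{ik\pi/(2n)}$ leaves a geometric sum whose closed form has numerator $1 - e^{ik\pi} = 1 - (-1)^k$. This already kills $S(k)$ whenever $k$ is even and $2n \nmid k$; for odd $k$ the identity $1 - e^{i\alpha} = -2i\, e^{i\alpha/2} \sin(\alpha/2)$ shows the remaining expression is purely imaginary, so its real part is again zero. The only surviving case is $2n \mid k$, where every summand reduces to $(-1)^{k/(2n)}$ and hence $S(k) = (-1)^{k/(2n)}\, n$.

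Assembling these pieces, the lemma follows by a case analysis on which of $i-j$ and $i+j$ are multiples of $2n$. For $i,j \le n$ one has $|i-j| \le n$ and $0 \le i+j \le 2n$, so $S(i-j)$ is nonzero only when $i=j$ and $S(i+j)$ is nonzero only when $i+j \in \{0, 2n\}$; this immediately produces the $0$, $n/2$, $n$ trichotomy in the simple regime. The general formula \cref{eqn:gen-dist-orth} then falls out by separately collecting the surviving contributions from $S(i-j)$ and $S(i+j)$ without the restriction $i,j \le n$.

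The only mildly delicate part is the sign bookkeeping between the prefactor $(-1)^{i+j}$ and the parity factors $(-1)^{(i\pm j)/(2n)}$ arising in $S$. Because the quantities $i \pm j$ must be even whenever the divisibility condition $2n \mid (i \pm j)$ triggers (as $2n$ itself is even), these signs combine cleanly so that the nonzero contributions take the absolute values $n/2$ or $n$ claimed in the statement. Everything else is routine computation.
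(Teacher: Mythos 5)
Your proposal is correct and follows essentially the same route as the paper's own proof: express $T_i(z_m)T_j(z_m)$ through cosines of the angles $\theta_m=(2m-1)\pi/(2n)$, apply product-to-sum, and evaluate the resulting cosine sums over an arithmetic progression. The paper quotes the closed form $\sum_{m=1}^n\cos\left(\left(m-\tfrac12\right)\theta\right)=\sin(n\theta)/(2\sin(\theta/2))$ where you re-derive it from a geometric series, but that is the same computation.

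The one substantive difference is your sign bookkeeping, and it is not pedantry. The paper writes $T_i(z_m)=\cos(i\theta_m)$, silently dropping the factor $(-1)^i$ forced by $z_m=-\cos\theta_m$, and it resolves the degenerate case of the cosine sum by declaring its value to be $n$ whenever the denominator vanishes. Your factors $(-1)^{i+j}$ and $(-1)^{k/(2n)}$ show this is only correct up to sign, and in fact they expose small errors in the statement being proved: for $i=2n$, $j=0$ one has $T_{2n}\equiv-1$ on the roots of $T_n$ (since $T_{2n}=2T_n^2-1$), so the sum is $-n$ rather than the $n$ asserted in \cref{eqn:gen-dist-orth} (whose divisibility conditions should also read $2n\mid(i\pm j)$ rather than $(i\pm j)\mid 2n$); and the first display fails at the boundary $i=j=n$, where $\sum_m T_n(z_m)^2=0$ rather than $n/2$. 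Your formula handles all of these correctly, and your hedge about ``absolute values'' is precisely the right caveat. None of this propagates into the rest of the paper, since the lemma is only ever invoked with $i,j\le L-1<L=n$, a regime in which the offending signs and boundary cases never arise.
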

\begin{proof}
The case of $i=j=0$ is trivial, since $T_0=1$. So, below, we will assume either $i \neq 0$ or $j \neq 0$.
\begin{align*}
    \sum_{m=1}^n T_i(z_m) T_j(z_m) &= \sum_{m=1}^n \cos(\frac{i (2m-1)}{2n} \pi) \cos(\frac{j (2m-1)}{2n} \pi)
    \intertext{Using the identity $\cos(a) \cos(b) = \frac{\cos(a-b) + \cos(a+b)}{2}$:}
    &= \frac{1}{2} \qty(\sum_{m=1}^n \cos(\frac{(i-j)(2m-1)}{2n} \pi) + \cos(\frac{(i+j)(2m-1)}{2n} \pi))
    \intertext{We use the identity $\sum_{m=1}^n \cos((m-1/2)\theta) = \frac{\sin(n \theta)}{2 \sin(\theta/2)}$, with the understanding that if $\theta=0$, this evaluates to $n$ (using the identity $\lim_{\theta \rightarrow 0} \frac{\sin(n \theta)}{\sin \theta}=n$).}
    &= \frac{1}{2} \qty(\frac{\sin((i-j)\pi)}{2\sin(\frac{i-j}{2n} \pi)} + \frac{\sin((i+j)\pi)}{2\sin(\frac{i+j}{2n} \pi)})
    \intertext{For $i, j \leq n$, this reduces to:}
    &= \begin{cases}
    0 & \text{if $i \neq j$} \\
    \frac{n}{2} & \text{if $i=j \neq 0$} \\
    n & \text{if $i=j=0$}
    \end{cases}
\end{align*}
The general case follows identically.
\end{proof}

With these preliminaries, we now describe Chebyshev interpolation. After fixing a degree $L-1$ for the interpolating polynomial, we evaluate $f$ at the roots of $T_L(z)$.\footnote{This assumes the window in which we are approximating $f$ is $\qty[-1,1]$, however, the generalization to arbitrary windows is straightforward.} This results in a dataset $\mathcal{D}=\qty{(z_i, y_i) \mid i = 1,\ldots,L}$ where $z_i$ are the roots of $T_L(z)$ and $y_i=f(z_i)$. The functional form with which we interpolate $\qty{y_i \mid i=1,\ldots,L}$ is simply a linear combination of the first $L$ Chebyshev polynomials:
\begin{equation}
    \tilde{f}(z; \qty{b_i}) = \sum_{\ell=0}^{L-1} b_\ell T_\ell(z)
\end{equation}

\begin{theorem}[Chebyshev interpolation]\label{thm:cheb-fit}
The coefficients $\qty{b_i}$ such that $\tilde{f}(z; \qty{b_i})$ perfectly fits our dataset are:
\begin{equation}\label{eqn:cheb-coeff}
    b_\ell = \begin{cases}
        \frac{1}{L} \sum_{i=1}^L y_i & \text{for $\ell=0$} \\
        \frac{2}{L} \sum_{i=1}^L y_i T_\ell(z_i) & \text{for $\ell \neq 0$}
    \end{cases}
\end{equation}
\end{theorem}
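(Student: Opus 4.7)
The plan is to verify the claimed formulas directly by exploiting the discrete orthogonality of Chebyshev polynomials at the roots of $T_L$, which was already established in Lemma \ref{lem:disc-orth}. The key preliminary observation is that $\{T_0, T_1, \ldots, T_{L-1}\}$ is a basis for the space of polynomials of degree at most $L-1$, and that there exists a unique polynomial of this degree interpolating $L$ data points with distinct abscissae (since the associated Vandermonde system in the nodes $z_1, \ldots, z_L$ is invertible). Therefore any interpolant can be expanded uniquely as $\tilde{f}(z) = \sum_{\ell=0}^{L-1} b_\ell T_\ell(z)$, and it suffices to identify the coefficients $b_\ell$ explicitly.

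To do this I would multiply both sides of the interpolation condition $y_j = \sum_{\ell=0}^{L-1} b_\ell T_\ell(z_j)$ by $T_m(z_j)$ and sum over $j = 1, \ldots, L$. After swapping the order of summation on the right-hand side, the inner sum becomes $\sum_{j=1}^{L} T_\ell(z_j) T_m(z_j)$, which is precisely the quantity evaluated in Lemma \ref{lem:disc-orth}. For $\ell, m \leq L-1$ this collapses to $L$ when $\ell = m = 0$, to $L/2$ when $\ell = m \neq 0$, and to zero otherwise, so the sum over $\ell$ reduces to a single term proportional to $b_m$. Solving for $b_m$ then yields the claimed formulas, with the $1/L$ versus $2/L$ discrepancy between the $m = 0$ and $m \neq 0$ cases arising purely from the asymmetric normalization associated with $T_0 \equiv 1$.

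I do not anticipate any substantive obstacle, since the essential combinatorial identity was already absorbed into Lemma \ref{lem:disc-orth}. The only care required is in handling the $\ell = 0$ case separately to track the correct prefactor, and in noting that because all relevant indices satisfy $\ell, m \leq L - 1 < L$, I only ever need the simple form of the orthogonality relation rather than the more general identity in equation \eqref{eqn:gen-dist-orth}.
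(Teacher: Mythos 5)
Your proposal is correct and follows essentially the same route as the paper: both arguments reduce to the identical linear system $\sum_{i} y_i T_m(z_i) = \sum_{\ell} b_\ell \sum_i T_\ell(z_i) T_m(z_i)$ and resolve it with the discrete orthogonality of \cref{lem:disc-orth}. The only cosmetic difference is that the paper arrives at this system via the least-squares normal equations (noting that the least-squares fit of $L$ points by a degree-$(L-1)$ polynomial is exact), whereas you project the interpolation conditions directly after justifying uniqueness via the Vandermonde argument.
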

\begin{proof}
We minimize the squared distance squared error between $\tilde{f}$ and the dataset. Then, since any set of $L$ points can be perfectly fitted by a degree $L-1$ polynomial, the resulting coefficients will be the interpolation solution. We set $\sum_i \pdv{b_\ell} \qty(y_i - \tilde{f}(z_i; B))^2 = 0$:
\begin{align*}
    0 &= \sum_{i=1}^L \qty(y_i - \tilde{f}(z_i; B)) T_\ell(z_i) \\
    \sum_{i=1}^L y_i T_\ell(z_i) &= \sum_{m=0}^{L-1} b_m \sum_{i=1}^L T_\ell(z_i) T_m(z_i)
    \intertext{We now use \cref{lem:disc-orth}. If $\ell=0$:}
    \sum_{i=1}^L y_i T_\ell(z_i) &= L \sum_{m=0}^{L-1} b_m \delta_{\ell m} \\
    b_0 &= \frac{1}{L} \sum_{i=1}^L y_i
    \intertext{This simply says that the constant term in the fitted polynomial is the average over $\qty{y_i}$, an intuitive result. For $\ell \neq 0$:}
    b_\ell &= \frac{2}{L} \sum_{i=1}^L y_i T_\ell(z_i)
\end{align*}
\end{proof}

\section{Proof of \texorpdfstring{\cref{lem:norm-bound}}{Theorem \ref{lem:norm-bound}}}\label{sec:bound-iter}

\begin{definition}[Types of tuples]
\label{defn:tuples}
We introduce some nomenclature that abbreviates much of our later proofs. First, let $\Theta$ be a tuple of the Hamiltonian parameters $(\theta_1, \ldots, \theta_r)$. Also, we introduce multi-index sets $\alpha \in \qty(\mathbb{Z}_{\geq 0})^r$, which are tuples of nonnegative integers $(\alpha_1, \alpha_2, \ldots, \alpha_r)$. We define:
\begin{equation}
\abs{\alpha} = \sum_i \alpha_i \qc \Theta^{\alpha} = \prod_i \theta_i^{\alpha_i}
\end{equation}
We call $\abs{\alpha}$ the size of $\alpha$.
Finally, we define ``term tuples" $S \in \qty{1, \ldots, r}^m$ of size $m$ to be ordered tuples $(s_1, \ldots, s_m)$. Each entry in $S$ is meant to point at a term in the Hamiltonian. 

There is a correspondence between size $m$ multi-index sets and term tuples. A term tuple $S$ with size $m$ maps to $\alpha$ as follows: $\qty(\alpha(S))_i = \abs{\qty{j \mid S_j = i, j=1,\ldots,m}}$. This mapping essentially counts the number of occurrences of $i$ in $S$.
Graphically, this is represented in \cref{fig:s-def}. This mapping is surjective, in the sense that any multi-index set $\alpha$ with size $m$ can be written as $\alpha(S)$ for some $S$ with size $m$. However, it is not injective, since multiple term tuples can map to the same multi-index set.

\begin{figure}[H]
    \centering
    \begin{tikzpicture}[node distance={25mm}, thick, main/.style = {rectangle, minimum height=1cm}
    ] 
\node[main] (0) {$H$}; 
\node[main] (1) [right=0mm of 0, text height=3mm] {$=$};
\node[main] (2) [right=25mm of 1] {$\theta_{\textcolor{red}{1}} P_{\textcolor{red}{1}}$};
\node[main] (3) [right=25mm of 2] {$+$};
\node[main] (4) [right=0mm of 3] {$\theta_{\textcolor{blue}{2}} P_{\textcolor{blue}{2}}$};
\node[main] (5) [right=0mm of 4] {$+$};
\node[main] (6) [right=10mm of 5] {$\theta_{\textcolor{purple}{3}} P_{\textcolor{purple}{3}}$};
\node[main] (7) [right=10mm of 6] {$+$};
\node[main] (8) [right=0mm of 7] {$\ldots$};
\node[main] (9) [right=0mm of 8] {$+$};
\node[main] (10) [right=0mm of 9] {$\theta_{\textcolor{green}{r}} P_{\textcolor{green}{r}}$};

\node[main] (11) [below=5mm of 0] {$\alpha$};
\node[main] (12) [below=5mm of 1, text height=2mm] {$=$};
\node[main] (13) [below=5mm of 2] {$\alpha_{\textcolor{red}{1}}=3$};
\node[main] (14) [below=5mm of 3, text height=4mm] {$,$};
\node[main] (15) [below=5mm of 4] {$\alpha_{\textcolor{blue}{2}}=0$};
\node[main] (16) [below=5mm of 5, text height=4mm] {$,$};
\node[main] (17) [below=5mm of 6] {$\alpha_{\textcolor{purple}{3}}=2$};
\node[main] (18) [below=5mm of 7, text height=4mm] {$,$};
\node[main] (19) [below=5mm of 8, text height=4mm] {$\ldots$};
\node[main] (20) [below=5mm of 9, text height=4mm] {$,$};
\node[main] (21) [below=5mm of 10] {$\alpha_{\textcolor{green}{r}}=1$};

\node[main] (22) [below=5mm of 11] {$S$};
\node[main] (23) [below=5mm of 12, text height=2mm] {$=$};
\node[main] (24) [below left=5mm and 2mm of 13] {$s_1=\textcolor{red}{1}$};
\node[main] (25) [right=0mm of 24, text height=4mm] {$,$};   
\node[main] (26) [below=5mm of 13] {$s_2=\textcolor{red}{1}$};
\node[main] (27) [right=0mm of 26, text height=4mm] {$,$};   
\node[main] (28) [below right=5mm and 2mm of 13] {$s_3=\textcolor{red}{1}$};
\node[main] (29) [right=0mm of 28, text height=4mm] {$,$};   

\node[main] (30) [below left=5mm and -5mm of 17] {$s_4={\textcolor{purple}{3}}$};
\node[main] (31) [right=0mm of 30, text height=4mm] {$,$};   
\node[main] (32) [below right=5mm and -5mm of 17] {$s_5={\textcolor{purple}{3}}$};
\node[main] (33) [below=5mm of 18, text height=4mm] {$,$};   
\node[main] (34) [below=5mm of 19, text height=4mm] {$\ldots$};   
\node[main] (35) [below=5mm of 20, text height=4mm] {$,$};   
\node[main] (36) [below=5mm of 21] {$s_m={\textcolor{green}{r}}$};

\draw[<-] (13) -- (24);
\draw[<-] (13) -- (26);
\draw[<-] (13) -- (28);

\draw[<-] (17) -- (30);
\draw[<-] (17) -- (32);

\draw[<-] (21) -- (36);
\end{tikzpicture} 
    \caption[The correspondence of $s_1,\ldots,s_m$ with $\alpha_1,\ldots,\alpha_n$]{The correspondence of $s_1,\ldots,s_m$ with $\alpha_1,\ldots,\alpha_n$. As the colors indicate, the value of each $s_i$ indicate a term in the Hamiltonian (e.g., $s_i=\mathbf{3}$ refers to the $\theta_{\mathbf{3}} P_{\mathbf{3}}$ term).}
    \label{fig:s-def}
\end{figure}

\end{definition}

\begin{lemma}
The iterated commutator $\qty[H^m P]$ can be expanded as:
\begin{equation}
    \qty[H^m P] = \sum_{S \in \qty{1, \ldots, r}^m} \Theta^{\alpha(S)} \qty[P_{s_1}, [P_{s_2}, \ldots, [P_{s_m}, P]]] .\label{eqn:poly-expand}
\end{equation}
\textit{We emphasize that \cref{eqn:poly-expand} is nothing more than a polynomial expansion, in the Hamiltonian coefficients, of the iterated commutator.}
\end{lemma}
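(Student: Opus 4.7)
The plan is to prove the identity by induction on $m$, exploiting the multi-linearity of the commutator bracket in each of its arguments. For the base case $m = 0$, both sides equal $P$: on the left, $[H^0 P] = P$ by the stated convention, and on the right, the sum ranges over the unique empty tuple $S$, for which $\Theta^{\alpha(S)} = \prod_i \theta_i^0 = 1$ and the nested commutator degenerates to $P$. (Alternatively, one can anchor at $m=1$, where the identity reduces to $[H,P] = \sum_{s_1=1}^r \theta_{s_1} [P_{s_1}, P]$, which is immediate from the linearity of $[\cdot, P]$ combined with $H = \sum_{s_1} \theta_{s_1} P_{s_1}$.)

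For the inductive step, assume the identity holds for some $m \geq 0$. I would write $[H^{m+1} P] = [H, [H^m P]]$, apply the inductive hypothesis to the inner iterated commutator, then expand the outer commutator by writing $H = \sum_{s_0=1}^r \theta_{s_0} P_{s_0}$ and using bilinearity. This yields
\begin{equation*}
[H^{m+1} P] = \sum_{s_0=1}^r \sum_{S \in \qty{1,\ldots,r}^m} \theta_{s_0} \, \Theta^{\alpha(S)} \, [P_{s_0},[P_{s_1},[P_{s_2},\ldots,[P_{s_m},P]]]].
\end{equation*}
The natural bijection reindexes the double sum as a single sum over tuples $S' = (s_0, s_1, \ldots, s_m) \in \qty{1,\ldots,r}^{m+1}$, so the claim reduces to checking the coefficient identity $\theta_{s_0} \, \Theta^{\alpha(S)} = \Theta^{\alpha(S')}$.

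This coefficient identity is the one piece of combinatorial bookkeeping and follows immediately from \cref{defn:tuples}: prepending $s_0$ to $S$ increments $\alpha_{s_0}$ by $1$ and leaves every other entry unchanged, so $\Theta^{\alpha(S')} = \theta_{s_0} \prod_i \theta_i^{\alpha_i(S)} = \theta_{s_0} \Theta^{\alpha(S)}$. There is no substantive obstacle in the proof; the lemma is essentially a formal restatement of the multilinear expansion of a nested commutator, and the only care required is in interpreting the tuple-to-multi-index correspondence correctly so that the exponents of the $\theta_i$ in $\Theta^{\alpha(S)}$ track the multiplicities with which each Hamiltonian term appears in the bracket indexed by $S$.
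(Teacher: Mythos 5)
Your proposal is correct and follows essentially the same route as the paper's own proof: induction on $m$ via $[H^{m+1}P]=[H,[H^mP]]$, expansion of $H$ using linearity of the commutator, and a reindexing of the resulting tuple sum (the paper appends the new index and relabels, whereas you prepend it, which is a cosmetic difference). Your explicit check that $\theta_{s_0}\,\Theta^{\alpha(S)}=\Theta^{\alpha(S')}$ is the same bookkeeping the paper performs when it rewrites $\theta_{s_1}\cdots\theta_{s_m}$ as $\Theta^{\alpha(S)}$.
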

\begin{proof}
We inductively arrive at the following formula for the iterated commutator $\comm{H^m}{P}$. 
\begin{equation}
    \qty[H^m P] = \sum_{s_1=1}^r \sum_{s_2=1}^r \underset{m \text{ times}}{\ldots} \sum_{s_m=1}^n \theta_{s_1} \theta_{s_2} \ldots \theta_{s_m} \qty[P_{s_1}, [P_{s_2}, \ldots, [P_{s_m}, P]]] \label{eqn:naive-iter}
\end{equation}
This follows simply from the linearity of the commutator (i.e., $\comm{A+B}{C}=\comm{A}{C}+\comm{B}{C}$): 
\begin{align*}
    \qty[H^{m+1} P] &= \comm{H}{\qty[H^m P]}
    \intertext{By the induction hypothesis:}
    &= \comm{H}{\sum_{s_1=1}^r \sum_{s_2=1}^r \ldots \sum_{s_m=1}^r \theta_{s_1} \theta_{s_2} \ldots \theta_{s_m} \qty[P_{s_1}, [P_{s_2}, \ldots, [P_{s_m}, P]]]}
    \intertext{By linearity of the commutator:}
    &= \sum_{s_{m+1}=1}^r \sum_{s_1=1}^r \ldots \sum_{s_m=1}^r \theta_{s_1} \ldots \theta_{s_m} \theta_{s_{m+1}} \qty[P_{s_{m+1}}, [P_{s_1}, \ldots, [P_{s_m}, P]]]
\end{align*}
as desired, up to relabeling the indices. These tuples are members of $(\mathbb{Z}_n)^m$, so we rewrite \cref{eqn:poly-expand} as:
\begin{align*}
    \qty[H^m P] &= \sum_{S \in \qty{1, \ldots, r}^m} \theta_{s_1} \ldots \theta_{s_m} \qty[P_{s_1}, [P_{s_2}, \ldots, [P_{s_m}, P]]] \\
    &= \sum_{S \in \qty{1, \ldots, r}^m} \Theta^{\alpha(S)} \qty[P_{s_1}, [P_{s_2}, \ldots, [P_{s_m}, P]]]
\end{align*}
\end{proof}

Without any structural assumptions about the Hamiltonian $H$, the form of this expansion is not useful. However, by applying our assumption that the Hamiltonian is sparsely interacting, we can find useful bounds using the above expansion.

\begin{definition}[Support tree]\label{defn:supp-tree}
Let $P$ be any Pauli operator such that there is some $P_i$ where $\supp P \subseteq \supp P_i$ (for the remainder of this work, wherever we write $P$, it will always satisfy this assumption). For this Pauli, we can define the support tree induced by $P$, which we call $\mathcal{T}_P$ (see \cref{fig:supp-tree}). This tree is defined as follows: the children of any node $P_i$ is $$\text{children}(P_i) = \qty{P_j \mid \supp(P_i) \cap \supp(P_j) \neq \varnothing}.$$ The root of $\mathcal{T}_P^{(m)}$ is $P$.
\end{definition}
\begin{example}
The following is a support tree induced by $P=\sigma_x^{(2)}$ on the 9-qubit TFIM shown in \cref{ex:tfim-model}.
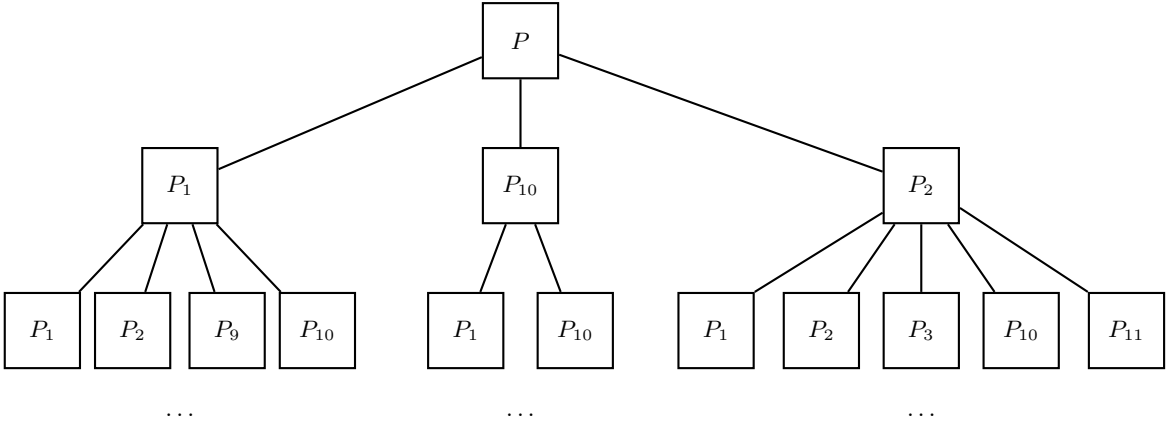
\begin{figure}[H]
    \centering
    \begin{tikzpicture}[node distance={30mm}, thick, main/.style = {draw, rectangle, minimum size=1cm}
    ] 
\node[main] (0) {$P$}; 
\node[main] (1) [below left=9mm and 35mm of 0] {$P_1$};
\node[main] (2) [below=9mm of 0] {$P_{10}$}; 
\node[main] (3) [below right=9mm and 43mm of 0] {$P_{2}$}; 

\node[main] (5) [below left=9mm and 8mm of 1] {$P_1$}; 
\node[main] (6) [below left=9mm and -4mm of 1] {$P_2$}; 
\node[main] (7) [below right=9mm and -4mm of 1] {$P_{9}$}; 
\node[main] (8) [below right=9mm and 8mm of 1] {$P_{10}$}; 

\node[main] (9) [below left=9mm and -3mm of 2] {$P_1$}; 
\node[main] (10) [below right=9mm and -3mm of 2] {$P_{10}$};

\node[main] (12) [below left=9mm and 17mm of 3] {$P_1$}; 
\node[main] (13) [below left=9mm and 3mm of 3] {$P_{2}$}; 
\node[main] (14) [below=9mm of 3] {$P_3$}; 
\node[main] (15) [below right=9mm and 3mm of 3] {$P_{10}$}; 
\node[main] (16) [below right=9mm and 17mm of 3] {$P_{11}$}; 

\draw[-] (0) -- (1);
\draw[-] (0) -- (2);
\draw[-] (0) -- (3);
\draw[-] (1) -- (5);
\draw[-] (1) -- (6);
\draw[-] (1) -- (7);
\draw[-] (1) -- (8);
\draw[-] (2) -- (9);
\draw[-] (2) -- (10);
\draw[-] (3) to [in=45, out=215,looseness=0] (12);
\draw[-] (3) -- (13);
\draw[-] (3) -- (14);
\draw[-] (3) -- (15);
\draw[-] (3) to [in=135, out=330,looseness=0] (16);
\node (17) [below=24mm of 1] {$\ldots$};
\node (18) [below=24mm of 2] {$\ldots$};
\node (19) [below=24mm of 3] {$\ldots$};
\end{tikzpicture} 
    \caption[The support tree $\mathcal{T}_P$]{The support tree $\mathcal{T}_P$ for the TFIM system shown in \cref{fig:int-graph}, induced by some Pauli $P$ that acts on qubit 2 (e.g., $P=\sigma_x^{(2)}$). For illustration purposes, we truncate the tree at a depth of 2. Each node has at most $\mathscr{D}+1$ children. The structure of this tree is entirely dependent on the interaction graph (see \cref{defn:inter-graph}).}
    \label{fig:supp-tree}
\end{figure}

\end{example}

\begin{theorem}[Counting labeled subtrees]\label{thm:counting}
Let $\mathcal{T}$ be an infinite rooted $d$-regular tree (i.e., every node has $d$ children). There are exactly
\begin{equation}
    a_m = \prod_{n=1}^{m-1} (n(d-1)+1) \leq (d-1)^{m-1} m!
\end{equation}
labelled rooted subtrees in $\mathcal{T}$, of size $m$ and labels $\qty{1,\ldots,m}$, such that the label of any given node is less than that of all of its descendants.
\end{theorem}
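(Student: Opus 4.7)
The plan is to count the subtrees by an incremental construction that adds nodes one at a time in increasing order of label, and show that at each step the number of valid choices is exactly $n(d-1)+1$. The key observation driving this is structural: since labels strictly increase along every path from the root downward, the root of the subtree must carry label $1$, and more generally, once we delete from the size-$m$ subtree all nodes with labels $>n+1$, the node labeled $n+1$ must appear as a leaf of the remaining subtree (all its descendants inside the full subtree have strictly larger labels, so none of them survive the deletion). This gives a canonical way to peel off the nodes in reverse order of labels, or equivalently, to build the subtree up leaf-by-leaf in order of labels.

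With this established, I would set up a bijection between valid labeled subtrees of size $m$ and sequences of $m-1$ choices $(c_1, c_2, \ldots, c_{m-1})$, where $c_n$ specifies the location at which the node labeled $n+1$ attaches to the subtree of size $n$ built so far (spanning labels $\{1,\ldots,n\}$). At step $n$, the new node must be attached at some unoccupied child slot of the current subtree, viewed as sitting inside the ambient $d$-regular tree $\mathcal{T}$. The count of such free slots is easy: the current subtree has $n$ nodes, each with $d$ child slots in $\mathcal{T}$, for $nd$ slots in total; of these, exactly $n-1$ are already occupied (one for each edge of the current subtree), leaving $nd - (n-1) = n(d-1)+1$ admissible choices. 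Multiplying over $n = 1, \ldots, m-1$ gives
\begin{equation}
a_m = \prod_{n=1}^{m-1}\bigl(n(d-1)+1\bigr),
\end{equation}
which is the claimed exact formula.

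For the upper bound, I would use the elementary inequality $n(d-1)+1 \leq (d-1)(n+1)$, valid for all $d \geq 2$ and $n \geq 1$ (it reduces to $1 \leq d-1$). Taking the product yields $a_m \leq (d-1)^{m-1}\prod_{n=1}^{m-1}(n+1) = (d-1)^{m-1}\, m!$, since the product telescopes to $2 \cdot 3 \cdots m = m!$. This completes the proof.

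The one place where care is needed is verifying that the incremental construction is truly a bijection: every sequence of choices yields a distinct valid subtree, and every valid subtree is recovered uniquely by this procedure. The forward direction is immediate from the counting of slots. For the reverse direction, given any valid labeled subtree, the sequence of attachment choices is read off by stripping the maximum-labeled leaf at each stage — this is well-defined precisely because of the increasing-labels condition, which guarantees that the maximum label always sits at a leaf. I do not expect any serious obstacle beyond this bookkeeping; the whole argument is essentially a variant of the well-known count of increasing labelings of tree structures, adapted to the fact that we also choose the shape of the subtree within $\mathcal{T}$.
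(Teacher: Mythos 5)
Your proof is correct, and it takes a genuinely different route from the paper's. The paper decomposes a labeled subtree according to the sizes $(k_1,\ldots,k_d)$ of the subtrees hanging off the root's children, obtaining a recurrence $a_m = \sum_{\sum k_j = m-1}\binom{m-1}{k_1,\ldots,k_d}\prod_i a_{k_i}$, then passes to the exponential generating function $G(z)=\sum_m (a_m/m!)\,z^m$, which satisfies $G'=G^d$, solves that ODE in closed form, and recovers $a_m$ by an induction on the derivatives of $G$. You instead grow the subtree one node at a time in increasing order of label: the increasing-label condition forces the restriction to labels $\{1,\ldots,n\}$ to be a connected subtree containing the root (every node's parent carries a smaller label), so the node labeled $n+1$ must occupy one of the $nd-(n-1)=n(d-1)+1$ free child slots of the current subtree, and the product formula drops out of an honest bijection between valid subtrees and choice sequences. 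Your argument is more elementary and more transparent --- it explains \emph{why} the answer is a product of linear factors, whereas the ODE calculation delivers that product somewhat opaquely; what the generating-function route buys is a framework that would still apply if the per-step choice count were not uniform (non-regular trees, weighted counts), where a clean slot-counting bijection may not be available. Two minor points worth making explicit in a final write-up: the injectivity and surjectivity of your bijection both rest on the observation you already state (the maximum label always sits at a leaf, so peeling is well defined and recovers the choice sequence uniquely), and your inequality $n(d-1)+1\le (d-1)(n+1)$ requires $d\ge 2$, which is harmless here since the tree in the application is $(\mathscr{D}+1)$-regular with $\mathscr{D}\ge 1$.
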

\begin{proof}
Let $a_m$ be the number of labelled rooted subtrees of size $m$ that satisfy the requirements in the above theorem. If we neglect the labeling, so that any subtrees are identical if they are structurally identical, then $a_m$ satisfies the following recurrence:
\begin{equation}
    a_m = \sum_{\substack{0 \leq k_1,\ldots,k_d \leq d \\ \sum k_j = m-1}} \prod_{i=1}^d a_{k_i}
\end{equation}
Each summand represents a configuration in which the root node has a tree of size $k_1$ on its first child, size $k_2$ on its second child, and so on. We now need to adjust this recurrence to account for labeling. 

For some configuration $(k_1,k_2,\ldots,k_d)$, observe that we can combine the configurations as follows. There are $m$ slots that need to be labeled. The first must be the root. Then, we can place the labels from the first subtree in any of $\binom{m-1}{k_1}$ positions (while maintaining relative ordering within the subtree\footnote{This maintains the property that the label of any given node is less than that of all its descendants.}). Next, we can place the labels from the second subtree in any of $\binom{m-1-k_1}{k_2}$ positions, and so on. In total, there are $\frac{(m-1)!}{k_1! (m-1-k_1)!} \frac{(m-1-k_1)!}{(m-1-k_1-k_2)! k_2!} \ldots = \frac{(m-1)!}{k_1! k_2! \ldots k_d!}$ ways of placing the labels. This is simply the multinomial coefficient $\binom{m-1}{k_1,\ldots,k_d}$. Therefore, the correct recurrence relation is:
\begin{equation}
    a_m = \sum_{\substack{0 \leq k_1,\ldots,k_d \leq d\\ \sum k_j = m-1}} \frac{(m-1)!}{k_1! \ldots k_d!} \prod_{i=1}^d a_{k_i}.
\end{equation}
This can be written in a more convenient form if we define a $b_m \equiv \frac{a_m}{m!}$, so that we have:
\begin{equation}
    b_m = \frac{1}{m} \sum_{\substack{0 \leq k_1,\ldots,k_d \leq d \\ \sum k_j = m-1}} \prod_{i=1}^d b_{k_i}.
\end{equation}

To obtain a closed form for $b_m$, we define the generating function $G(z)=b_0 + b_1 z + b_2 z^2 + \ldots=\sum_m b_m z^m$.
\begin{align*}
    \qty(G(z))^d &= \sum_{m=0}^\infty \qty(\sum_{\substack{0 \leq k_1,\ldots,k_d \leq d \\ \sum k_j = m}} \prod_{i=1}^d b_{k_i}) z^m \\
    &= \sum_{m=0}^\infty (m+1) b_{m+1} z^{m} \\
    &= G'(z)
    \intertext{This is a first order equation that can be solved in closed form.}
    1 &= G'(z) G^{-d}(z) \\
    z+C &= \int G^{-d} \dd{G} \\
    z+C &= \frac{G(z)^{1-d}}{1-d} \\
    G(z) &= \sqrt[1-d]{-(d-1)(z+C)}
    \intertext{Since $G(0)=b_0=a_0=1$:}
    &= \sqrt[1-d]{-(d-1)\qty(z-\frac{1}{d-1})}
    \intertext{For brevity, write $\alpha=d-1$.}
    &= \sqrt[-\alpha]{-\alpha \qty(z-\alpha^{-1})}
\end{align*}
Now, observe that $\eval\dv[m]{G}{z}|_{z=0}=b_m m!=a_m$. We show by induction that
\begin{equation}
    \dv[m]{G}{z}= \frac{\prod_{n=1}^{m-1} (n \alpha+1)}{ (1-\alpha z)^{1+1/\alpha}}
\end{equation}
For the base case $m=1$, the derivative is $\frac{1}{(1-\alpha z)^{1+1/\alpha}}$. Then, assuming the induction hypothesis,
\begin{align*}
    \dv[m+1]{G}{z} &= \dv{z} \frac{\prod_{n=1}^{m-1} (n \alpha+1)}{ (1-\alpha z)^{1+1/\alpha}} =  \frac{\prod_{n=1}^{m} (n \alpha+1)}{ (1-\alpha z)^{1+1/\alpha}}
\end{align*}
This completes the induction. Finally, we see that:
\begin{align*}
    a_m &= \prod_{n=1}^{m-1} (n\alpha+1) \\
    &\leq \alpha^{m-1} \prod_{n=1}^{m-1} (n+1) \\
    &= (d-1)^{m-1} m!
\end{align*}
\end{proof}

\begin{corollary}\label{cor:counting}
We define a non-vanishing tuple $S$ of size $m$ as a tuple where
\begin{equation*}
    \qty[P_{s_1},\ldots,\qty[P_{s_m},P]] \neq 0.
\end{equation*}
There are at most $\mathscr{D}^{m} (m+1)!$ non-vanishing tuples of size $m$.
\end{corollary}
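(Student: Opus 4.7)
The plan is to produce an injection from non-vanishing tuples of size $m$ into labelled rooted subtrees of the support tree $\mathcal{T}_P$ of size $m+1$, and then invoke \cref{thm:counting} with $d = \mathscr{D}+1$. The first step is to confirm that every node of $\mathcal{T}_P$ has at most $\mathscr{D}+1$ children: by definition, a node's children are the Paulis whose supports overlap its own, which is the node itself together with its neighbors in the interaction graph $\mathcal{G}$. The root $P$ satisfies the same bound because $\supp P \subseteq \supp P_i$ for some $i$, so any Pauli overlapping with $\supp P$ is either $P_i$ or a neighbor of $P_i$ in $\mathcal{G}$.

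Given a non-vanishing tuple $(s_1, \ldots, s_m)$, I would build the labelled subtree $T$ incrementally. Assign the root of $\mathcal{T}_P$ (namely $P$) the label $1$. Then, for $i = m, m-1, \ldots, 1$, insert $P_{s_i}$ into $T$ with label $m - i + 2$, attached as a child of some already-placed node in $\mathcal{T}_P$. A necessary condition for the iterated commutator to be nonzero is $\supp P_{s_i} \cap \bigl(\supp P \cup \bigcup_{j > i} \supp P_{s_j}\bigr) \neq \varnothing$, which follows by a short induction on the support of $\qty[P_{s_{i+1}}, [\ldots, [P_{s_m}, P]]]$. Hence at least one valid attachment point always exists; when several candidates exist, we select a canonical one (e.g., the candidate with the smallest label already assigned) so that the map is well-defined.

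The resulting $T$ is a size-$(m+1)$ subtree with labels $\{1, \ldots, m+1\}$ satisfying parent-label $<$ descendant-labels, because every node is added after all of its ancestors. The map is injective: each node of $\mathcal{T}_P$ sits at a fixed position that corresponds to a specific Pauli, so the Pauli carrying label $m - i + 2$ in $T$ unambiguously equals $P_{s_i}$, and the entire tuple can be decoded from $T$. Applying \cref{thm:counting} with $d = \mathscr{D}+1$ and size $m+1$ then upper-bounds the number of such subtrees by $a_{m+1} \leq \mathscr{D}^{m}(m+1)!$, which is the claimed bound.

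The main obstacle I anticipate is being careful about the inductive support-growth argument for iterated commutators, and confirming that injectivity is genuinely unaffected by the tie-breaking rule (which it is, because labels are decoded directly into specific Paulis). A secondary point is that $\mathcal{T}_P$ need not be literally $(\mathscr{D}+1)$-regular, since some nodes may have strictly fewer children, but any deficit only decreases the number of realizable subtrees, so \cref{thm:counting} still supplies a valid upper bound.
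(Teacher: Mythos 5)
Your proposal is correct and follows essentially the same route as the paper: both establish a correspondence between non-vanishing tuples and ancestor-ordered labelled rooted subtrees of $\mathcal{T}_P$ of size $m+1$ (built innermost-first, from $s_m$ down to $s_1$), note that the map need only be an injection since some subtrees may correspond to vanishing tuples, and then invoke \cref{thm:counting} with $d=\mathscr{D}+1$. The paper phrases the correspondence via an explicit enumeration algorithm (\cref{alg:counting}) rather than your direct injection with a tie-breaking rule, but the underlying argument is the same.
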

\begin{proof}
We argue that non-vanishing tuples are in one-to-one correspondence with the subtrees of size $m+1$ described in the previous theorem. We do this by construction.
\begin{algorithm}[H]
\caption{Finding all possible non-vanishing tuples of size $m$}\label{alg:counting}
\begin{algorithmic}[1]
\Procedure {NonvanishingTuples}{$S$, $F$, $m$}
\If{$m=0$}
    \State \textbf{return} $[P]$
\Else
    \State $R \gets []$
    \For{$v \in F$}
        \State $S[m] \gets v$
        \State $F' \gets F$ \Comment{Copy $F$ to $F'$}
        \State Remove $v$ from $F'$
        \State Add the children of $v$ to $F'$
        \State $R.\texttt{extend}(\textsc{NonvanishingTuples}(S, F', m-1))$
    \EndFor
    \State \textbf{return} $R$
\EndIf
\EndProcedure
\State $S_0 \gets \text{empty list of size } m+1$
\State $S_0[m+1] \gets P$
\State $F_0 \gets$ the children of $P$ in the support tree of $\mathcal{T}_P$
\State $\qty{S} \gets \textsc{NonvanishingTuples}(S_0, F_0, m)$
\end{algorithmic}
\end{algorithm}
In the above, $S$ represents a given non-vanishing tuple, and $F$ represents a frontier of Paulis whose support intersects with the support of at least one Pauli in $S$. This algorithm builds non-vanishing tuples from inside-out, first selecting all $P_{s_m}$ such that $\supp P_{s_m} \cap \supp P \neq \varnothing$, then finds $P_{s_{m-1}}$ such that $\supp P_{s_{m-1}} \cap \qty(\supp P_{s_m} \cup \supp P) \neq \varnothing$, and so on.

Now, we observe that this algorithm is constructing, structurally speaking, rooted subtrees of size $m+1$ in $\mathcal{T}_P$ (where the $+1$ is due to the root $P$). The recursion takes the current subtree, searches its `frontier' (i.e., the set of nodes that is connected to at least one node in the current subtree) and adds one of these frontier nodes to the current subtree, repeating until the subtree is size $m+1$. However, the algorithm does not only output structures: there is an ordering to each structure depending on the order in which node is visited. A given subtree may be counted several times, with different ordering. This ordering corresponds to a labeling of the nodes in the subtree with $\qty{1,\ldots,m}$, with the labels representing the order in which the nodes are visited. The constraint on the labeling is that no descendent of a node can be visited before its parent -- this simply means the label of any node must be strictly less than that of its descendants. Therefore, \cref{alg:counting} counts all labelled subtrees of size $m$ with the property that the label of any node is less than that of its descendants. Since $\mathcal{T}_P$ is at most $(\mathscr{D}+1)$-regular, by \cref{thm:counting}, there are at most $\mathscr{D}^{m} (m+1)!$ non-vanishing tuples.
\end{proof}
\begin{remark}\label{rem:overcount}
It is not guaranteed that every tuple returned by \cref{alg:counting} will be nonvanishing, since it is possible, for example, that $P_{s_1}$ commutes with $\qty[P_{s_2},\ldots,[P_{s_m},P]]$. However, it is guaranteed that every non-vanishing tuple is contained in the set $\qty{S}$ found by \cref{alg:counting}.
\end{remark}
\begin{remark}
\cref{cor:counting} is a similar approach to \citep{tang2021}. However, they use the known result that the number of structurally unique subtrees with size $m$ of a $d$-regular tree is $\frac{1}{(d-2)m+1} \binom{(d-1)m}{m}$ \citep{alon1991,knuth1969}. This scales with $\sim (e \cdot d)^m$, and if we want to count all non-vanishing tuples, to allow for relabelling, we are forced to introduce a factor $m!$. This gives a bound on the number of non-vanishing tuples $\sim (e \cdot d)^m m!$. By being more careful with relabeling (since not all labelings are permitted), the counting in \cref{thm:counting} is better by a factor $\order{\frac{e^m}{m}}$.
\end{remark}

\begin{theorem}\label{lem:norm-bound2}
The spectral norm of the iterated commutator is bounded by
\begin{equation}
    \norm{\qty[H^m P]} \leq (2 \mathscr{D} \norm{\Theta}_\infty)^m (m+1)! \label{eqn:time-deriv-bound2}
\end{equation}
where the $\ell_\infty$ norm denotes $\norm{\Theta}_\infty = \max_i \abs{\theta_i}$.
\end{theorem}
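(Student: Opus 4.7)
My plan is to assemble the three pieces already developed in this section: the polynomial expansion of the iterated commutator in \cref{eqn:poly-expand}, the spectral norm behavior of nested Pauli commutators, and the counting bound from \cref{cor:counting}. The strategy is a triangle inequality on the expansion, followed by bounding each surviving summand in magnitude and then counting how many summands can survive.

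Concretely, I would start from
\begin{equation*}
\qty[H^m P] = \sum_{S \in \qty{1,\ldots,r}^m} \Theta^{\alpha(S)} \qty[P_{s_1},[P_{s_2},\ldots,[P_{s_m},P]]],
\end{equation*}
take spectral norms, and apply the triangle inequality to get $\norm{\qty[H^m P]} \le \sum_S \abs{\Theta^{\alpha(S)}} \cdot \norm{\qty[P_{s_1},\ldots,[P_{s_m},P]]}$. For the coefficient factor, since $\abs{\alpha(S)} = m$, we immediately have $\abs{\Theta^{\alpha(S)}} = \prod_i \abs{\theta_i}^{\alpha(S)_i} \le \norm{\Theta}_\infty^m$. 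For the operator factor, I would use the fact that each commutator of two Paulis is either zero or equal to $\pm 2i$ times another Pauli (which has unit spectral norm). Nesting $m$ such commutators around $P$ therefore yields either $0$ or an operator of spectral norm exactly $2^m$; in either case the norm is bounded by $2^m$.

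At this point every nonzero summand contributes at most $(2\norm{\Theta}_\infty)^m$, so the bound reduces to counting how many tuples $S$ produce a nonzero nested commutator. This is exactly the content of \cref{cor:counting}, which states that the number of non-vanishing tuples of size $m$ is at most $\mathscr{D}^m (m+1)!$. Multiplying, I obtain
\begin{equation*}
\norm{\qty[H^m P]} \le \mathscr{D}^m (m+1)! \cdot (2 \norm{\Theta}_\infty)^m = (2\mathscr{D}\norm{\Theta}_\infty)^m (m+1)!,
\end{equation*}
as claimed.

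The main obstacle is really upstream in \cref{cor:counting} and \cref{thm:counting}, where one must be careful that tuples correspond bijectively to labeled rooted subtrees of $\mathcal{T}_P$ with ancestor-less-than-descendant labels, so that labelings are not overcounted; once that counting lemma is in hand, the proof of \cref{lem:norm-bound2} itself is almost mechanical. A minor subtlety worth remarking on is that \cref{cor:counting} may overcount (as noted in \cref{rem:overcount}), but this only makes the bound looser, not invalid, so the inequality still goes through in the correct direction. The commuting-Hamiltonian refinement \cref{eqn:commute-bound} would require a separate argument exploiting that consecutive $\comm{P_{s_i}}{\cdot}$ applied to $P$ cannot repeat in a way that produces new support, but that bound is not what is being proven here.
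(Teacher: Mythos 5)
Your proposal is correct and follows essentially the same route as the paper's proof: triangle inequality on the expansion in \cref{eqn:poly-expand}, the bound $\abs{\Theta^{\alpha(S)}} \leq \norm{\Theta}_\infty^m$, the $2^m$ bound on each nested Pauli commutator, and \cref{cor:counting} to count the surviving terms. Your observation that each nonzero nested commutator has norm exactly $2^m$ (rather than merely at most $2^m$ via $\norm{\comm{A}{B}} \leq 2\norm{A}\norm{B}$) and your explicit note that the overcounting in \cref{rem:overcount} only loosens the bound are both fine refinements but do not change the argument.
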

\begin{proof}
We refer to the expression for the iterated commutator in \cref{eqn:poly-expand}:
\begin{equation*}
    \qty[H^m P] = \sum_{S \in \qty{1, \ldots, r}^m} \Theta^{\alpha(S)} \qty[P_{s_1}, [P_{s_2}, \ldots, [P_{s_m}, P]]]
\end{equation*}
First, each commutator in this sum has norm at most $2^m$, since $\norm{\comm{A}{B}} \leq 2\norm{A} \norm{B}$ and all Paulis have spectral norm $\norm{P}=1$. Also,
\begin{equation*}
    \Theta^{\alpha(S)}  = \prod_{i=1}^n \theta_i^{\alpha(S)_i} \leq \prod_{i=1}^n \norm{\Theta}_\infty^{\alpha(S)_i} = \norm{\Theta}_\infty^{\abs{\alpha(S)}} = \norm{\Theta}_\infty^m
\end{equation*}
Finally, by \cref{cor:counting}, there are at most $\mathscr{D}^m (m+1)!$ non-vanishing term tuples. Combining these together, we get that
\begin{align*}
    \norm{\qty[H^m P]} &\leq \sum_{S \in \qty{1, \ldots, r}^m} \Theta^{\alpha(S)} \norm{\qty[P_{s_1}, \qty[P_{s_2}, \ldots, \qty[P_{s_m}, P]]]} \numberthis \label{eqn:triangle-ineq} \\
    &\leq \underbrace{\norm{\Theta}_\infty^m}_{\text{From $\Theta^{\alpha(S)}$}} \quad \underbrace{2^m}_{\text{From each commutator}} \quad \underbrace{\mathscr{D}^m (m+1)!}_{\text{Counting non-vanishing $S$}} \\
    &= (2 \mathscr{D} \norm{\Theta}_\infty)^m (m+1)!
\end{align*}
\end{proof}
\begin{theorem}\label{lem:commute-bound2}
For commuting Hamiltonians (i.e., every term $P_i$ in the Hamiltonian commutes with every other term), when $P$ is a single-qubit observable
\begin{equation}
    \norm{\qty[H^m P]} \leq (2(\mathscr{D}+1)\norm{\Theta}_\infty)^m \label{eqn:commute-bound2}
\end{equation}
for all $m$. 
\end{theorem}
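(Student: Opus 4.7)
The key idea is that in the commuting case the adjoint operators $\mathrm{ad}_{P_i}$ themselves commute, enabling a multinomial expansion of $\mathrm{ad}_H^m$ that replaces the factorial-heavy tuple counting of \cref{cor:counting}. Since $\comm{\mathrm{ad}_A}{\mathrm{ad}_B}=\mathrm{ad}_{\comm{A}{B}}$ by Jacobi, the assumption $\comm{P_i}{P_j}=0$ for all $i,j$ gives $\comm{\mathrm{ad}_{P_i}}{\mathrm{ad}_{P_j}}=0$, and the multinomial theorem yields
\begin{equation*}
    \qty[H^m P]=\qty(\sum_{i=1}^r \theta_i\,\mathrm{ad}_{P_i})^m(P)=\sum_{\abs{\alpha}=m}\binom{m}{\alpha}\,\Theta^\alpha\,\prod_i \mathrm{ad}_{P_i}^{\alpha_i}(P),
\end{equation*}
where the factors in the product may be applied in any order.

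Next I would show that each summand $\prod_i \mathrm{ad}_{P_i}^{\alpha_i}(P)$ either vanishes or has spectral norm exactly $2^m$. For any Pauli $Q$, $\mathrm{ad}_{P_i}(Q)$ equals $0$ if $\comm{P_i}{Q}=0$ and $2 P_i Q$ otherwise. Because the $P_j$'s mutually commute, $P_i$ anticommutes with an operator of the form $P_{j_1}\cdots P_{j_k} P$ if and only if it anticommutes with $P$ itself (the intermediate $P_j$'s contribute no sign when $P_i$ is commuted past them). Iterating this observation, the full product is nonzero if and only if every $P_i$ with $\alpha_i>0$ anticommutes with $P$, in which case it equals $2^m$ times a product of Paulis and hence has spectral norm $2^m$.

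Finally, let $\mathcal{A}=\{i:\comm{P_i}{P}\neq 0\}$. Since $P$ is a single-qubit Pauli on some site $q$, every $P_i\in\mathcal{A}$ must act nontrivially on $q$, so the elements of $\mathcal{A}$ pairwise share $q$ in their support and thus form a clique in $\mathcal{G}$. Any fixed member of this clique has degree at most $\mathscr{D}$, forcing $\abs{\mathcal{A}}\leq\mathscr{D}+1$. Combining the triangle inequality with the identity $\sum_{\abs{\alpha}=m,\,\supp\alpha\subseteq\mathcal{A}}\binom{m}{\alpha}=\abs{\mathcal{A}}^m$ (the multinomial theorem evaluated at $x_i=1$) gives
\begin{equation*}
    \norm{\qty[H^m P]}\leq 2^m\,\norm{\Theta}_\infty^m\,\abs{\mathcal{A}}^m\leq \qty(2(\mathscr{D}+1)\norm{\Theta}_\infty)^m.
\end{equation*}
The main obstacle is the second step: carefully verifying that pairwise commutativity of the $P_i$'s decouples the anticommutation condition so that nonvanishing of the product requires only individual anticommutations with $P$, rather than with the accumulated intermediate operator. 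Once this is in hand, the combinatorics collapse from the tree-counting bound $\mathscr{D}^m(m+1)!$ to the cleaner $(\mathscr{D}+1)^m$, eliminating the factorial entirely.
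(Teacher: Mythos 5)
Your proof is correct, and it takes a genuinely different route from the paper's. The paper splits $H = H_1 + H_2$, where $H_1$ collects the (at most $\mathscr{D}+1$) terms whose support meets $\supp P$, proves inductively that $\qty[H^m P] = \qty[H_1^m P]$ using $\comm{H_1}{H_2}=0$ and $\comm{H_2}{P}=0$, and then applies the coarse bound $\norm{\qty[A^m B]} \leq 2^m \norm{A}^m \norm{B}$ together with $\norm{H_1} \leq (\mathscr{D}+1)\norm{\Theta}_\infty$. You instead expand $\mathrm{ad}_H^m$ multinomially (valid because the $\mathrm{ad}_{P_i}$ commute, via the Jacobi identity), identify exactly which summands vanish by observing that commuting $P_i$ past the accumulated $P_j$'s contributes no sign, and count the survivors with the multinomial theorem. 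Both arguments ultimately rest on the same clique bound $\abs{\mathcal{A}} \leq \mathscr{D}+1$ for single-qubit $P$, and they land on the identical constant. Your version is somewhat more informative: it pinpoints a necessary and sufficient vanishing criterion for each summand, and would let you sharpen $\mathscr{D}+1$ to the number of terms that actually anticommute with $P$ (a subset of those merely overlapping its support); the paper's version is shorter because it delegates all combinatorics to the operator-norm inequality for $H_1$. One point worth making fully explicit in your second step: when $\alpha_i \geq 2$, repeated application of $\mathrm{ad}_{P_i}$ still does not annihilate the term, because $P_i$ commutes with itself and with the other accumulated factors, so its (anti)commutation with the running operator remains governed solely by its relation to $P$. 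Your stated argument does cover this, but it is the one place a careful reader will pause.
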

\begin{proof}
We separate $H=H_1+H_2$, where $H_1$ is composed of all terms in the Hamiltonian that have a support that overlaps with the support of $P$. We then inductively show \cref{eqn:commute-bound2} by proving the strong statement
\begin{equation*}
    \qty[H^m P] = \qty[H_1^{m} P].
\end{equation*}
Assuming the induction hypothesis, we have:
\begin{align*}
    \qty[H^{m+1} P] &= \qty[H_1^{m+1} P] + \comm{H_2}{\qty[H_1^m P]}
    \intertext{Using the identity $\qty[A^m B] = \sum_{k=0}^m (-1)^k \binom{m}{k} A^{m-k} B A^k$:}
    &= \qty[H_1^{m+1} P] + \sum_{k=0}^m (-1)^k \binom{m}{k} \qty(H_2 H_1^{m-k} P H_1^k - H_1^{m-k} P  H_1^k H_2)
    \intertext{By commutativity of the Hamiltonian $\comm{H_1}{H_2}=0$, so we can freely rearrange $H_2$ and $H_1$. That is, $H_2 H_1^{m-k} P H_1^k=H_1^{m-k} H_2 P H_1^k$ and similarly $H_1^{m-k} P  H_1^k H_2=H_1^{m-k} P H_2 H_1^k$. This gives:}
    &= \qty[H_1^{m+1} P] + \sum_{k=0}^m (-1)^k \binom{m}{k} H_1^{m-k} \comm{H_2}{P} H_1^k
    \intertext{By definition of $H_2$, $\supp H_2 \cap \supp P = \varnothing$, so they commute.}
    &= \qty[H_1^{m+1} P]
\end{align*}
Since $H_1$ contains at most $\mathscr{D}+1$ terms, by applying the triangle inequality for the matrix norm, we have $\norm{H_1} \leq \norm{\Theta}_\infty (\mathscr{D}+1)$. Finally, applying $\norm{\qty[A^m B]} \leq 2^m \norm{A}^m \norm{B}$, we find $\norm{\qty[H^m P]} \leq (2(\mathscr{D}+1) \norm{\Theta}_\infty)^m$, as desired.

Notably, \cref{eqn:commute-bound2} is smaller than the general bound in \cref{eqn:time-deriv-bound2} by a factor $\sim m!$. This can be attributed to the fact that when $H$ is commuting, the support $\qty[H^m P]$ never grows beyond a ring around $P$ -- more precisely, $\supp \qty[H^m P] \subseteq \supp H_1$ for all $m$.
\end{proof}

\section{Proof of \texorpdfstring{\cref{thm:err-scale}}{Theorem \ref{thm:err-scale}}}\label{sec:proof-err}
We first establish one preliminary: since we use Chebyshev regression, it is natural to first write the expectation in the form $\expval{P(t)} = \sum_{m=0}^\infty b_m T_m(2t/A - 1)$, where $T_m$ is the $m$th Chebyshev polynomial. However, we need to find an expression relating $b_m$ to the Taylor expansion coefficients in \cref{eqn:taylor-coeff} -- specifically, since we are interested in expressing $c_1$ as a function of $\qty{b_m}$.
\begin{lemma}\label{lem:linear}
If we have a polynomial of degree $L$ represented in the Chebyshev basis $$f(t) = \sum_{m=0}^L b_m T_m(2t/A - 1),$$ if we write the same polynomial with $f(t) = \sum_{m=0}^L c_m t^m$, we have
\begin{equation}
    c_1 = -\frac{2}{A} \sum_{m=1}^L (-1)^m b_m m^2 \label{eqn:first-order-coeff}
\end{equation}
\end{lemma}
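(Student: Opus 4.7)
The plan is to identify $c_1$ as the derivative $f'(0)$ and then differentiate the Chebyshev representation term by term. Since $f(t) = \sum_{m=0}^L c_m t^m$, we immediately have $c_1 = f'(0)$. Applying the chain rule to $f(t) = \sum_{m=0}^L b_m T_m(2t/A - 1)$ gives $f'(t) = \frac{2}{A} \sum_{m=0}^L b_m T_m'(2t/A - 1)$, so evaluating at $t=0$ yields $c_1 = \frac{2}{A} \sum_{m=0}^L b_m T_m'(-1)$. The problem thus reduces to evaluating $T_m'(-1)$.

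For this, I would use the trigonometric definition $T_m(\cos\theta) = \cos(m\theta)$. Differentiating both sides with respect to $\theta$ yields $-\sin(\theta) T_m'(\cos\theta) = -m\sin(m\theta)$, hence $T_m'(\cos\theta) = m\sin(m\theta)/\sin(\theta)$. At $z = -1$ we take $\theta \to \pi$; since both numerator and denominator vanish, a single application of L'H\^opital's rule (or directly using the identity $\sin(m\pi - x) = (-1)^{m+1}\sin(x)$ for small $x$) gives $\lim_{\theta \to \pi} \sin(m\theta)/\sin(\theta) = (-1)^{m+1} m$. Therefore $T_m'(-1) = (-1)^{m+1} m^2$.

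Substituting back, $c_1 = \frac{2}{A} \sum_{m=0}^L b_m (-1)^{m+1} m^2 = -\frac{2}{A}\sum_{m=1}^L (-1)^m b_m m^2$, where the $m=0$ term drops out due to the $m^2$ factor, matching \cref{eqn:first-order-coeff}.

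There is really no serious obstacle here; the only mildly delicate step is the evaluation of the $0/0$ limit defining $T_m'(-1)$, but this is a standard exercise with Chebyshev polynomials. An alternative route, if one prefers to avoid the limit, is to use the known closed form $T_m'(z) = m U_{m-1}(z)$ where $U_{m-1}$ is the Chebyshev polynomial of the second kind, together with $U_{m-1}(-1) = (-1)^{m-1} m$; this gives the same conclusion $T_m'(-1) = (-1)^{m+1} m^2$ without any limiting argument.
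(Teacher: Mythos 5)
Your proposal is correct and follows essentially the same route as the paper: identify $c_1 = f'(0)$, differentiate the Chebyshev expansion term by term via the chain rule, and use $T_m'(-1) = (-1)^{m+1} m^2$. The only difference is that the paper simply cites this last formula from \citet[Section 2.4.5]{mason2003}, whereas you derive it from the trigonometric definition (correctly); this is a cosmetic rather than substantive divergence.
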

\begin{proof}
Since $c_1 \equiv f'(0)$:
\begin{align*}
    c_1 &= \sum_{m=1}^L b_m \dv{t} \eval{T_m(2t/A-1)}_{t=0} \\
    &= \sum_{m=1}^L \frac{2}{A} b_m T_m'(-1)
    \intertext{We apply the formula $T_m'(-1) = (-1)^{m+1} m^2$ \citep[Section 2.4.5]{mason2003}.}
    &= -\frac{2}{A} \sum_{m=1}^L (-1)^m b_m m^2.
\end{align*}
\end{proof}

\begin{theorem}[Error bound]\label{thm:master-error2}
The estimator $\tilde{c}_1$ proposed in \cref{alg:first-comm} achieves an error:
\begin{equation}
    \mathbb{E}\qty[\frac{\qty(c_1 - \tilde{c}_1)^2}{\gamma^2}] \leq \frac{8}{(A/\tau)^2} \qty[\quad \underbrace{\ \frac{(L-\frac{1}{2})^4}{5} \sigma^2 \ }_{\text{Noise induced}} \quad + \underbrace{4L^4 (L+1)^2 (A/4\tau)^{2L}}_{\text{Modeling error}}] \label{eqn:master-error2}.
\end{equation}
Since $\gamma$ defines a typical scale for the Hamiltonian coefficients, $\sqrt{\mathbb{E}\qty[\qty(c_1 - \tilde{c}_1)^2/\gamma^2]}$ can be interpreted as a relative error.
\end{theorem}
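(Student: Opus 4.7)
The plan is the standard bias--variance decomposition
$\mathbb{E}[(c_1 - \tilde c_1)^2] = \mathrm{Var}(\tilde c_1) + (c_1 - \mathbb{E}[\tilde c_1])^2,$
whose two terms line up directly with the noise-induced and modeling pieces on the right-hand side of \cref{eqn:master-error2}. I would bound each in turn.

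For the variance, I would first unroll the estimator as a linear combination of the data, $\tilde c_1 = \sum_\ell w_\ell y_\ell$ with $w_\ell = -\tfrac{4}{AL}\sum_{m=1}^{L-1}(-1)^m m^2 T_m(z_\ell)$. Independence of the $y_\ell$ gives $\mathrm{Var}(\tilde c_1) \leq \sigma^2 \sum_\ell w_\ell^2$. Squaring the weight, swapping the $\ell$ sum with the $(m,m')$ sums, and invoking the discrete orthogonality of \cref{lem:disc-orth}, the inner sum $\sum_\ell T_m(z_\ell) T_{m'}(z_\ell)$ collapses to $\tfrac{L}{2}\delta_{m m'}$, leaving $\tfrac{8\sigma^2}{A^2 L}\sum_{m=1}^{L-1} m^4$. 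The integral bound $\sum_{m=1}^{L-1} m^4 \leq (L-\tfrac{1}{2})^5/5$, together with dividing by $\gamma^2 = 1/\tau^2$, reproduces the $\tfrac{(L-1/2)^4}{5}\sigma^2$ term.

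For the bias, I would work with the exact Chebyshev expansion $f(t) = \sum_{m\geq 0} b_m T_m(2t/A-1)$, which exists because $f$ is entire, so that an infinite-sum version of \cref{lem:linear} gives $c_1 = -\tfrac{2}{A}\sum_{m\geq 1}(-1)^m m^2 b_m$. The expectation $\mathbb{E}[\tilde c_1]$ only accesses the discrete coefficients $\mathbb{E}[\tilde b_m]$ for $m \leq L-1$, and the general (aliased) case of \cref{lem:disc-orth} identifies $\mathbb{E}[\tilde b_m] = b_m + \sum_{j\geq 1}(b_{2jL-m} + b_{2jL+m})$. Subtracting and applying the triangle inequality reduces the bias to a truncation tail $\sum_{m\geq L} m^2 |b_m|$ plus an aliasing contribution $\sum_{m=1}^{L-1} m^2 \sum_{j\geq 1}(|b_{2jL-m}|+|b_{2jL+m}|)$, both controlled once $|b_m|$ decays fast enough.

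The main technical obstacle is establishing the sharp estimate $|b_m| \lesssim (m+1)(A/4\tau)^m$ needed to close the loop. Starting from \cref{lem:norm-bound}, the relation $c_k^{\mathrm{Tay}} = i^k \Tr(\qty[H^k P]\rho_0)/k!$ yields Taylor bounds $|c_k^{\mathrm{Tay}}| \leq (k+1)\gamma^k$. Changing basis with $(1+z)^k = 2^{-k}\binom{2k}{k} + 2^{1-k}\sum_{j=1}^k \binom{2k}{k-j} T_j(z)$ gives $b_j = 2\sum_{k\geq j} c_k^{\mathrm{Tay}}(A/4)^k \binom{2k}{k-j}$. The naive bound $\binom{2k}{k-j} \leq 4^k$ would erase the crucial factor of $4$ in the base, so instead I would use the generating function $\sum_{k\geq j}\binom{2k}{k-j}x^k = x^{-j}(1-4x)^{-1/2}\bigl(\tfrac{1-\sqrt{1-4x}}{2}\bigr)^{2j}$ evaluated at $x = A/(4\tau) < 1/4$, whose dominant $j$-dependence is $x^j$; the extra $(k+1)$ factor is absorbed by one $x$-derivative, preserving geometric decay up to a polynomial prefactor. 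Substituting back, the truncation tail is $\lesssim L^3(A/4\tau)^L$ and the aliasing piece, dominated by $b_{L+1}$, gives $\lesssim L^3(A/4\tau)^{L+1}$, so $|c_1-\mathbb{E}[\tilde c_1]| \lesssim \tfrac{L^2(L+1)}{A}(A/4\tau)^L$. Squaring and dividing by $\gamma^2$ produces the $\tfrac{32 L^4(L+1)^2}{(A/\tau)^2}(A/4\tau)^{2L}$ modeling term, and the explicit constants $8$, $4$, and $\tfrac{1}{5}$ follow from careful bookkeeping of the integral and geometric bounds.
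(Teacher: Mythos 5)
Your bias--variance split and your treatment of the variance term are exactly what the paper does: write $\tilde c_1$ as a linear combination of the $y_\ell$, use independence of the data and the discrete orthogonality of \cref{lem:disc-orth} to collapse the double sum over $(m_1,m_2)$ to $\tfrac{L}{2}\sum_{m=1}^{L-1} m^4$, and bound that sum by $(L-\tfrac{1}{2})^5/5$; this part is correct. The bias term is where you depart from the paper, and where your argument breaks. The paper never touches the Chebyshev coefficients of the exact $f$: it imports a result of \citet{howell1991} on the derivative of a Chebyshev interpolant, $\abs{f'(0)-\tilde f'(0)} \leq \abs{\omega_1(0)}\,\abs{f^{(L)}}/L!$ with $\omega_1(t)=\prod_{\ell=1}^{L-1}(t-\eta_\ell)$ a nodal-type polynomial. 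The factor $(A/4\tau)^L$ then comes from the elementary estimate $\abs{\omega_1(0)}\leq L^2(A/4)^{L-1}$ (via $\prod_{\ell\geq 2}(z-z_\ell)=T_L(z)/(2^{L-1}(z-z_1))$) together with the iterated-commutator bound on $\abs{f^{(L)}}$, and that argument is valid for every $A>0$.

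Your route --- aliasing identities plus a decay estimate $\abs{b_m}\lesssim (m+1)(A/4\tau)^m$ --- has a concrete failure. The estimate you propose comes from a termwise triangle inequality on $b_j = 2\sum_{k\geq j} c_k (A/4)^k\binom{2k}{k-j}$ combined with $\abs{c_k}\leq (k+1)\gamma^k$, i.e.\ from the series $\sum_{k\geq j}(k+1)\binom{2k}{k-j}x^k$ at $x=A/(4\tau)$. That series has radius of convergence $1/4$, so your bound only exists when $A<\tau$; the theorem is stated for general $A$, and the choice made downstream in \cref{thm:err-scale2} is $A=4e^{-3/L_1}\tau$, i.e.\ $x$ close to $1$, far outside convergence. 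Even inside the radius of convergence, the generating function you quote gives a $j$-dependence $(xC(x)^2)^j$ with $C(x)=(1-\sqrt{1-4x})/(2x)$, and $xC(x)^2=C(x)-1>x$ (tending to $4x$ as $x\to 1/4$), so the base of the geometric decay is strictly worse than $A/4\tau$ and the stated modeling-error term would not be recovered. To salvage your approach you would need a coefficient bound that does not pass through the monomial-to-Chebyshev change of basis term by term (e.g.\ a Bernstein-ellipse estimate), at which point you are essentially re-deriving the interpolation-error theorem the paper cites.
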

\begin{proof}
We use the identity $\mathbb{E}[(c_1-\tilde{c}_1)^2] = (c_1 - \mathbb{E}[c_1])^2 + \mathbb{V}[\tilde{c}_1]$, where $\mathbb{V}[\tilde{c}_1]$ is variance in the estimator $\tilde{c}_1$ due to randomness in the dataset.
\begin{align*}
    \mathbb{V}[\tilde{c}_1] &= \mathbb{V}\qty[\frac{2}{A} \sum_{m=1}^{L-1} (-1)^m \tilde{b}_m m^2]
    \intertext{Applying \cref{eqn:cheb-coeff}:}
    &= \mathbb{V}\qty[\frac{2}{A} \frac{2}{L} \sum_{m=1}^{L-1} (-1)^m m^2 \sum_{\ell=1}^{L} y_\ell T_m(z_\ell)]
    \intertext{Swapping the order of the two sums:}
    &= \mathbb{V}\qty[\frac{2}{A} \frac{2}{L} \sum_{\ell=1}^{L} y_\ell \sum_{m=1}^{L-1} (-1)^m m^2 T_m(z_\ell)]
    \intertext{Since each of the $y_\ell$ are statistically independent:}
    &= \frac{16}{L^2 A^2}\sum_{\ell=1}^{L} \sigma_\ell^2 \qty(\sum_{m=1}^{L-1} (-1)^m m^2 T_m(z_\ell))^2 \numberthis \label{eqn:var-est} \\
    &\leq \frac{16\sigma^2}{L^2 A^2} \sum_{\ell=1}^L \qty(\sum_{m=1}^{L-1} (-1)^m m^2 T_m(z_\ell))^2 \qq{where $\sigma^2 = \max_\ell \sigma_\ell^2$} \\
    &= \frac{16\sigma^2}{L^2 A^2}\sum_{\ell=1}^L \sum_{m_1=1}^{L-1} \sum_{m_2=1}^{L-1} (-1)^{m_1+m_2} (m_1m_2)^2 T_{m_1}(z_\ell) T_{m_2}(z_\ell) \\
    &= \frac{16\sigma^2}{L^2 A^2}\sum_{m_1=1}^{L-1} \sum_{m_2=1}^{L-1} (-1)^{m_1+m_2} (m_1m_2)^2 \sum_{\ell=1}^L T_{m_1}(z_\ell) T_{m_2}(z_\ell)
    \intertext{By the discrete orthogonality conditions, the sum is non-vanishing only when $m_1=m_2$.}
    &= \frac{16\sigma^2}{L^2 A^2}\frac{L}{2} \sum_{m=1}^{L-1} m^4
    \intertext{Since $\sum_{k=1}^n k^4 = \frac{n(n+1)(2n+1)(3n^2+3n-1)}{30}$:}
    &= \frac{4\sigma^2}{L^2 A^2}\frac{L (L-1) L (2L-1) (3L^2 - 3L - 1)}{15} \\
    &= \frac{4(L-1)(2L-1)(3L^2-3L-1)}{15A^2} \sigma^2 \\
    &\leq \frac{8\qty(L-\frac{1}{2})^4}{5A^2} \sigma^2 .
\end{align*}
Next, we evaluate the bias $(c_1 - \mathbb{E}[\tilde{c}_1])^2$. Since $\mathbb{E}[\tilde{c}_1]$ corresponds to Chebyshev interpolation with no noise, we make use of a theorem \citep[Equation 4.2]{howell1991} concerning the derivative error bounds for Chebyshev interpolation. This theorem says that if $\tilde{f}$ is a degree $L-1$ Chebyshev interpolation of some function $f$, $\abs{f'(0)-\tilde{f}'(0)} \leq \frac{\abs{\omega_1(0)} \abs{f^{(L)}}}{L!}$, where $f^{(L)}$ is the $L$th derivative of $f$, $\omega_1(t) \equiv \prod_{\ell=1}^{L-1} (t-\eta_\ell)$, $t_{\ell} \leq \eta_\ell \leq t_{\ell+1}$ and $\abs{f} \equiv \sup_{0 \leq t \leq A} \abs{f(t)}$. Applied to our case, $f = \expval{P(t)}$, so $f^{(L)}(t) = i^m \Tr(\qty[H^m P] \rho_0(t)) \implies \norm{f^{(L)}(t)}\leq\norm{\qty[H^m P]}$. Then:
\begin{align*}
    \abs{c_1 - \mathbb{E}[\tilde{c}_1]} &\leq \abs{f' -\tilde{f}'}  \\
    &\leq \frac{\abs{\omega_1}}{L!} \abs{f^{(L)}} \numberthis \label{eqn:model-err-bound}
    \intertext{Applying \cref{lem:norm-bound}:}
    &\leq \frac{\abs{\omega_1}}{L!} \gamma^L (L+1)!
    \intertext{It remains to upper bound $\abs{\omega_1(0)}$.}
    \abs{\omega_1} &\leq \abs{\prod_{\ell=1}^{L-1} \frac{A}{2} \qty(-1-z_{\ell+1})} \\
    &= \qty(A/2)^{L-1} \prod_{\ell=2}^{L} \qty(-1-z_{\ell}) 
    \intertext{Note that $\prod_{\ell=2}^{L} \qty(z-z_{\ell}) = \frac{T_L(z)}{2^{L-1} (z-z_1)}$, since this is the unique monic polynomial with roots at $z_2, \ldots, z_L$. Then:}
    &= (A/2)^{L-1}\abs{\frac{\cos(L \arccos(1))}{2^{L-1}(-1-z_1)}}
    \intertext{We use the fact that $\cos x < 1-x^2/2.3$ for $\abs{x}<1$, so that $\abs{\frac{1}{-1-z_1}}=\frac{1}{1-\cos(\pi/2L)} \leq \frac{1}{(\pi/2L)^2/2.3} \leq L^2$ for $L \geq 2$, so:}
    &\leq L^2 (A/4)^{L-1}
    \intertext{Therefore,}
    \abs{c_1 - \mathbb{E}[\tilde{c}_1]} &\leq \frac{4 L^2 (L+1) (A \gamma / 4)^L}{A}
\end{align*}
In summary, the total error is:
\begin{align*}
    \mathbb{E}[\qty(c_1 - \tilde{c}_1)^2] &\leq \frac{8\qty(L-\frac{1}{2})^4\sigma^2}{5A^2} + \frac{16L^4 (L+1)^2 (A\gamma/4)^{2L}}{A^2}, \\
    \mathbb{E}\qty[\frac{\qty(c_1 - \tilde{c}_1)^2}{\gamma^2}] &\leq \frac{1}{(A \gamma)^2} \qty[\frac{8(L-\frac{1}{2})^4}{5} \sigma^2 + 16L^4(L+1)^2(A \gamma/4)^{2L}].
\end{align*}
\end{proof}

\begin{remark}
The above error bound can be written as:
\begin{equation}
    \mathbb{E}\qty[\frac{(c_1-\tilde{c}_1)^2}{\gamma^2}] \leq \order{\frac{\boldsymbol{L^4}}{(A/\tau)^2} \qty[\sigma^2 + \boldsymbol{L^2 (A/\tau)^{2L}}]}.
\end{equation}
By viewing the estimator $\tilde{c}_1$ as a generalized finite-difference estimator $f'(0) \approx \frac{f(\epsilon)-f(0)}{\epsilon}$, we argue that the terms in this expression (with the exception of those in bold) are fundamental:
\begin{itemize}
    \item The inverse dependence on evolution time $A$ corresponds to dependence on $1/(\Delta t)^2$ in the finite-difference estimator.
    \item The $\sigma^2$ dependence originates in the noisiness of measurements we take.
\end{itemize}
On the other hand, the terms in bold are not fundamental. More precisely, they originate in the fact that we can only evolve our system forward in time. If it were possible to evolve backward in time, we would have access to the central difference estimator $f'(0) \approx \frac{f(\epsilon)-f(-\epsilon)}{2\epsilon}$. This would improve our estimator by reducing the terms in bold.
\begin{itemize}
    \item Currently, the dependence on $(L+1)(A/\tau)^{L}$ measures the modeling error. This is analogous to the finite difference estimator $f'(x) \approx \frac{f(x+\epsilon)-f(x)}{\epsilon}$ having an error that scales with $\order{\epsilon \frac{f''(\xi)}{2!}}$, where $x \leq \xi \leq x+\epsilon$. However, with a central difference method has an error that scales with error $\order{\epsilon^2 \frac{f'''(\xi)}{3!}}$. Therefore, roughly speaking, we would expect the modeling error to improve by at least a factor $\frac{A/\tau}{L+1}$ if we had access to backwards time evolution.
    \item The $L^4$ dependence changes to a $L^2$ dependence. This is because, by placing $t=0$ in the center of the Chebyshev roots $\qty{z_i}$ (rather than at the extreme end $z=-1$), the expression for $c_1$ changes its form from $\sim \sum_m b_m m^2$ to $\sim \sum_m b_m m$. Plugging this into the derivation for the variance error, we get scaling with $L^2$.
\end{itemize}
\end{remark}

\begin{theorem}[Query complexity for one coefficient]\label{thm:err-scale2}
Fix some failure probability $\delta$ and an error $\epsilon$. Assume that we have access to an unbiased (single-shot) estimator of $\expval{P(t)}$ with variance $\sigma^2 \leq 1$. Then there is some choice of $A \sim \tfrac{1}{\gamma}$ and $L \sim \log \epsilon^{-1}$ such that with
\begin{equation}
    \order{\log(1/\delta) \polylog(1/\epsilon) \epsilon^{-2}}
\end{equation}
query complexity, we can construct an estimator $\tilde{c}_1$ such that $\frac{\abs{c_1-\tilde{c}_1}}{\gamma} \leq \epsilon$, except with failure probability at most $\delta$.
\end{theorem}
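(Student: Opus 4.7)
The plan is to invoke the mean-square error bound of \cref{thm:master-error2} and then boost it into a high-probability bound via median-of-means. The key observation is that the MSE bound decomposes into a noise term controlled by $\sigma^2/N$ (after averaging $N$ shots at each evaluation point) and a modeling term that decays geometrically in $L$ once $A/\tau$ is taken to be a suitable constant. So the sub-tasks are: (i) choose $A$ and $L$ so that the expected squared error is $O(\epsilon^2)$ at some constant-probability confidence level, then (ii) use independent repetitions plus a median to amplify the success probability to $1-\delta$.

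For step (i), I would fix $A = \tau$ (equivalently, $A/\tau = 1$ and $A/(4\tau) = 1/4$), which eliminates the exponential blow-up $(A/4\tau)^{2L}$ and instead makes it a shrinking factor $4^{-2L}$. The modeling term in \cref{eqn:master-error2} then becomes at most $32\, L^4(L+1)^2\, 4^{-2L}$, which is $O(\epsilon^2)$ provided $L = \Theta(\log \epsilon^{-1})$ (the polynomial prefactors get swallowed by the exponential in $L$). With this choice of $L$, the noise term contributes $\tfrac{8}{5}(L-\tfrac12)^4 \sigma^2 / N$ to the MSE when each $y_\ell$ is estimated by averaging $N$ single-shot outcomes (so $\sigma_\ell^2 \le \sigma^2/N \le 1/N$). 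Taking
\begin{equation}
N = \Theta\!\left( \frac{L^4}{\epsilon^2} \right) = \Theta\!\left( \frac{\polylog(1/\epsilon)}{\epsilon^2} \right)
\end{equation}
drives this contribution to $O(\epsilon^2)$ as well, and the total query cost for a single run of \cref{alg:first-comm} is $N \cdot L = O(\polylog(1/\epsilon) \cdot \epsilon^{-2})$.

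For step (ii), the MSE bound above together with Chebyshev's inequality implies that a single run of the procedure satisfies $|c_1 - \tilde{c}_1|/\gamma \le \epsilon$ with probability at least some fixed constant (say $2/3$), after absorbing the hidden constants into the $\Theta$-notation for $N$ and $L$. Running $K = \Theta(\log \delta^{-1})$ independent copies and returning the median of the resulting estimates then amplifies the success probability to at least $1-\delta$ by a standard Chernoff-type argument on the number of ``good'' copies. The total query complexity is $K \cdot N \cdot L = O(\log(1/\delta)\, \polylog(1/\epsilon)\, \epsilon^{-2})$, as claimed.

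The main obstacle, and the reason this is not merely a mechanical corollary, is making sure that the polynomial-in-$L$ prefactors in both the noise and modeling terms of \cref{eqn:master-error2} do not force a hidden price; in particular one must check that the choice $L \asymp \log \epsilon^{-1}$ simultaneously suppresses the $L^4 (L+1)^2 (A/4\tau)^{2L}$ modeling error to $O(\epsilon^2)$ and keeps the noise contribution manageable with only a polylogarithmic overhead in $N$. The constant chosen for $A/\tau$ also matters: one needs $A/(4\tau) < 1$ strictly, so the exponential decay in $L$ dominates the polynomial growth in $L$. Once that balance is verified, the rest follows from the standard MSE-to-high-probability conversion via median-of-means.
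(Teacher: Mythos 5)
Your proposal is correct and follows essentially the same route as the paper's proof: both split the error from \cref{thm:master-error2} into a modeling (bias) part killed by taking $A\gamma/4$ a constant less than $1$ and $L = \Theta(\log\epsilon^{-1})$, a noise (variance) part killed by averaging $N = \Theta(\polylog(1/\epsilon)\epsilon^{-2})$ shots per Chebyshev node, and a median-of-means step with $K = \Theta(\log(1/\delta))$ repetitions to boost the confidence, giving total cost $NKL$. The only difference is cosmetic — you fix $A = \tau$ while the paper fixes $A = 4e^{-3/L_1}\tau$ — and both choices satisfy the one condition that matters, namely $A/(4\tau) < 1$ strictly.
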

\begin{proof}
To ensure $\frac{\abs{c_1-\tilde{c}_1}}{\gamma} \leq \epsilon$, it suffices to guarantee $\frac{\abs{\tilde{c}_1 - \mathbb{E}[\tilde{c}_1]}}{\gamma} + \frac{\abs{\mathbb{E}[\tilde{c}_1]-c_1}}{\gamma} \leq \epsilon$. Since the bias $\abs{\mathbb{E}[\tilde{c}_1]-c_1}$ was shown to be upper bounded by $\frac{4L^2(L+1) (A\gamma/4)^L}{A}$ (see \cref{sec:proof-err}) we require:
\begin{align*}
    \frac{\abs{\tilde{c}_1 - \mathbb{E}[\tilde{c}_1]}}{\gamma} &\leq \epsilon - \frac{4L^2(L+1) (A\gamma/4)^L}{(A\gamma)}
\end{align*}
with failure probability at most $\delta$. We demonstrate that a median-of-means estimator \citep{jerrum1986,nemirovskii1983}, with
\begin{equation}
    K = 2 \log(2/\delta) \qq{and} N = \order{\polylog(1/\epsilon)\epsilon^{-2}}
\end{equation}
satisfies this. That is, we will take the median of $K$ independent sample means, where each sample mean is over $N$ estimates of $\tilde{c}_1$. 

It suffices to have $N \leq 34 \sigma_1^2 \qty(\epsilon - \frac{4L^2(L+1) (A\gamma/4)^L}{(A\gamma)})^{-2}$, where $\sigma_1^2$ is the variance of the $\tilde{c}_1$ estimator with a single query for each point in the dataset. As found in the proof of \cref{thm:master-error2}, $\sigma_1^2 \leq \frac{8L^4}{5(A\gamma)^2} \sigma^2$. Thus, the above upper bound on $N$ becomes:
\begin{align*}
    N &\leq \frac{34\sigma^2 \cdot 8L^4/5(A\gamma)^2}{\qty(\epsilon - \frac{4L^2(L+1)(A\gamma/4)^L}{(A\gamma)})^2} \numberthis \label{eqn:n-bound} \\
    &\leq 68\sigma^2 \qty(\frac{A\gamma}{L^2} \epsilon - 4(L+1)(A\gamma/4)^L)^{-2} 
    \intertext{Let us choose $L \sim \Theta(\log \epsilon^{-1})$. That is, there exists some $0 < L_0 < L_1$ such that $L_0 \log \epsilon^{-1} \lesssim L \lesssim L_1 \log \epsilon^{-1}$, where $a(x) \lesssim f(x)$ denotes $a(\epsilon) < f(\epsilon)$ for $\epsilon^{-1}$ greater than some fixed $\epsilon_0^{-1}$. Then $(A\gamma/4)^L \lesssim (A \gamma/4)^{L_1 \log \epsilon^{-1}} = \epsilon^{L_1 \log(4/A \gamma)}$. Then, setting $A=4e^{-3/L_1}/\gamma$, we have $(A\gamma/4)^L \lesssim \epsilon^3$. Then:}
    &= \order{\qty(\frac{e^{-3L_1}}{\qty(L_1\log\epsilon^{-1})^2} \epsilon - L_1 \qty(\log \epsilon^{-1}) \epsilon^3)^{-2}} \\
    &= \order{\polylog(\epsilon^{-1})\qty(\epsilon - \qty(\epsilon \log \epsilon^{-1})^3)^{-2}} \\
    \intertext{Since $\log \epsilon^{-1} = \order{\epsilon^{-m}}$ for every positive exponent $m$:}
    &= \order{\polylog(\epsilon^{-1}) \epsilon^{-2}}
\end{align*}
The query complexity is $NK L$, since $L$ queries are required for a single-shot estimate of $\tilde{c}_1$ (from evaluating the expectation value of the observable at $L$ different evolution times). However, since $L \sim \Theta(\log \epsilon^{-1})$, we still have $NKL = \order{\log(1/\delta) \polylog(1/\epsilon)\epsilon^{-2}}$.
\end{proof}

\section{Proof of \texorpdfstring{\cref{thm:master-theorem}}{Theorem \ref{thm:master-theorem}}}\label{sec:master-theorem}
The following is a sequence of results used when efficiently parallelizing our measurements for Hamiltonian learning with unitary dynamics.
\begin{lemma}[Term selection]\label{thm:param-select2}
Let $P$ be some Pauli operator such that there exists some $i \in \qty{1, \ldots, r}$ where $\supp P \subseteq \supp P_i$ and $\frac{i\comm{P_i}{P}}{2} \neq 0$. Let $X = \supp P_i$. Let
\begin{gather}
    X = \supp P_i ,\\
    Y = \qty(\bigcup \qty{\supp P_j \mid \supp P_j \cap X \neq \varnothing}) \setminus X, \\
    Z = (X \cup Y)', \\
    \rho_0 = \qty(\frac{\mathbb{I} + i\comm{P_i}{P}/2}{2^{\abs{X}}})^{(X)} \otimes \qty(\frac{\mathbb{I}}{2^{\abs{Y}}})^{(Y)} \otimes \rho_0^{(Z)} \label{eqn:state}.
\end{gather}
In words, $Y$ is a neighborhood around $X$ that contains the support of all Paulis that intersect with $X$, and $Z$ is the set of all qubits that are not in $X \cup Y$. The state $\rho_0$ is defined such that for all qubits in $Y$, it is the maximally mixed state and for qubits inside $X$, $\rho_0$ is defined in a way such that $\Tr(i\comm{P_i}{P} \rho_0^{(X)}/2)=1$, and for all other qubits, $\rho_0$ can be anything. Then:
\begin{equation}
    \Tr(i \comm{H}{P} \rho_0) = \theta_i .
\end{equation}
\end{lemma}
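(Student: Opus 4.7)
The plan is to expand $i\comm{H}{P}$ linearly over the Pauli terms of $H$ and then show that the tensor structure of $\rho_0$, together with the maximally mixed moat on $Y$, annihilates every contribution except the $m=i$ term.

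First I would write
\[
\Tr\bigl(i\comm{H}{P}\rho_0\bigr) = \sum_{m=1}^{r} \theta_m\, \Tr\bigl(i\comm{P_m}{P}\rho_0\bigr),
\]
and note that for any two Paulis $P_m, P$, the operator $i\comm{P_m}{P}/2$ is either $0$ (when they commute) or a Hermitian Pauli $Q_m = \pm i P_m P$ supported on $\supp P_m \cup \supp P \subseteq \supp P_m \cup X$. This reduces the whole question to understanding traces of Paulis against $\rho_0$.

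Next I would partition the indices $m$ by how $\supp P_m$ sits relative to $X$ and $Y$. Three sub-cases arise:
\begin{enumerate}
    \item $\supp P_m \cap X = \varnothing$: then $P_m$ and $P$ act on disjoint qubits, so $\comm{P_m}{P}=0$ and the term vanishes.
    \item $\supp P_m \cap X \neq \varnothing$ with $\supp P_m \not\subseteq X$: by the definition of $Y$, $\supp P_m \subseteq X \cup Y$, so $Q_m = Q_m^{(X)} \otimes Q_m^{(Y)}$ with $Q_m^{(Y)}$ a nontrivial Pauli on $Y$. Using $\rho_0 = \rho_0^{(X)} \otimes \mathbb{I}^{(Y)}/2^{|Y|} \otimes \rho_0^{(Z)}$ and $\Tr\rho_0^{(Z)}=1$,
    \[
    \Tr(Q_m \rho_0) = \Tr_X\!\bigl(Q_m^{(X)} \rho_0^{(X)}\bigr)\cdot \frac{\Tr_Y Q_m^{(Y)}}{2^{|Y|}} = 0,
    \]
    since a nontrivial Pauli is traceless. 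This is where the moat pays off.
    \item $\supp P_m \subseteq X$ with $m \neq i$: then $Q_m^{(X)}$ is a nontrivial Pauli on $X$, and expanding $\rho_0^{(X)} = (\mathbb{I} + Q_i)/2^{|X|}$ gives $\Tr_X Q_m^{(X)} = 0$ and $\Tr_X(Q_m^{(X)} Q_i) \propto \Tr(P_m P_i) = 0$, because distinct Paulis in the Hamiltonian expansion satisfy $P_m \neq \pm P_i$.
\end{enumerate}

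Finally, the $m=i$ term is handled by direct computation: the tensor factorization reduces $\Tr(i\comm{P_i}{P}\rho_0)$ to a trace supported on $X$ alone, and the normalization built into $\rho_0^{(X)}$ — namely $\Tr(i\comm{P_i}{P}\rho_0^{(X)}/2) = 1$ — pins this down to exactly the value needed to recover $\theta_i$. The step I expect to be the main obstacle is sub-case (ii): cleanly arguing that every Pauli with support straddling the boundary of $X$ acts nontrivially on $Y$. This hinges critically on defining $Y$ as the \emph{full} union of supports of Paulis touching $X$ rather than a buffer of single-site thickness, and without that width the maximally mixed moat would fail to annihilate these off-target contributions.
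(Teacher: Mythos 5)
Your proof is correct and follows essentially the same route as the paper's: expand the commutator linearly over Hamiltonian terms, kill the off-target terms either because the commutator vanishes, because the term acts nontrivially on the maximally mixed moat $Y$ (your sub-case (ii) is the paper's Case 1), or because of trace-orthogonality of distinct Paulis supported inside $X$ (your sub-case (iii) is the paper's Case 2), and then evaluate the $m=i$ term via the built-in normalization of $\rho_0^{(X)}$. The point you flag as the potential obstacle --- that any term straddling the boundary of $X$ must act nontrivially on $Y$ --- is immediate from defining $Y$ as the full union of supports of terms intersecting $X$, exactly as you note, so there is no gap.
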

\begin{proof}

It suffices to show that for all $j$ such that $j \neq i$, $\Tr(i \comm{P_j}{P} \rho_0)=0$. This is trivially true in the case where $\supp P_j \cap \supp P = \varnothing$, since the commutator vanishes. There are two remaining cases.
\begin{enumerate}[label=\textsc{Case \arabic*.},left=0pt]
    \item $P_j$ acts nontrivially on some set of qubits in $Y$. Note that $P^{(Y)}= \mathbb{I}$. Then $\comm{P_j}{P} = \comm{P_j^{(X)}}{P^{(X)}} \otimes P_j^{(Y)} \otimes \mathbb{I}^{(Z)}$, and:
    \begin{align*}
        \Tr(i \comm{P_j}{P} \rho_0) &= \Tr(i \comm{P_j^{(X)}}{P^{(X)}} \rho_0^{(X)}) \Tr(P_j^{(Y)} \rho_0^{(Y)}) \Tr(\mathbb{I}^{(Z)} \rho_0^{(Z)})
        \intertext{Since we assumed $P_j^{(Y)}$ acts nontrivially on at least one qubit in $Y$, and $\rho_0^{(Y)} \propto \mathbb{I}$, $\Tr(P_j^{(Y)} \rho_0^{(Y)})=0$.}
        &= 0
    \end{align*}
    \item $P_j$ acts trivially on all qubits in $Y$. Then, we have $i \comm{P_j}{P} = i\comm{P_j^{(X)}}{P^{(X)}} \otimes \mathbb{I}^{(Y)} \otimes \mathbb{I}^{(Z)}$.
    \begin{align*}
        \Tr(i \comm{P_j}{P} \rho_0) &= \Tr(i \comm{P_j^{(X)}}{P^{(X)}} \frac{\mathbb{I} + i\comm{P_i}{P}/2}{2^{\abs{X}}})
        \intertext{Since $\comm{P_j^{(X)}}{P^{(X)}}$ is traceless:}
        &= \frac{\Tr(i \comm{P_j^{(X)}}{P^{(X)}} i\comm{P_i^{(X)}}{P^{(X)}}/2)}{2^{\abs{X}}}
        \intertext{We now observe that $P_i^{(X)} \neq P_j^{(X)} \implies \comm{P_i^{(X)}}{P^{(X)}} \neq \comm{P_j^{(X)}}{P^{(X)}}$. Since $\comm{P_i^{(X)}}{P^{(X)}}$ and $\comm{P_j^{(X)}}{P^{(X)}}$ are both proportional to different Pauli matrices, by the orthonormality of Pauli matrices, the trace vanishes.}
        &= 0
    \end{align*}
\end{enumerate}
Finally:
\begin{align*}
    \Tr(i \comm{H}{P} \rho_0) &= \sum_{j=1}^n \theta_j \Tr(i \comm{P_j}{P} \rho_0) \\
    &= \theta_i \Tr(i \comm{P_i}{P} \rho_0) \\
    &= \theta_i \qq{by definition of $\rho_0$}
\end{align*}
\end{proof}

\begin{lemma}[Simultaneous inference for a partition]\label{thm:simul-inf2}
Let $\mathbf{V}_i$ be a partition in a coloring of $\mathcal{G}^2$. The coefficient for each Pauli in $\mathbf{V}_i$ can be inferred with up to an error $\epsilon \norm{\Theta}_\infty$, with failure probability for each \textbf{individual} coefficient being at most $\delta$ (so the overall failure probability is upper bounded by $\delta \abs{\mathbf{V}_i}$). This can be done with query complexity
\begin{equation}
    \order{\mathscr{D}^2 \log(1/\delta) \polylog(\mathscr{D}/\epsilon) \epsilon^{-2}}. \label{eqn:partition-complexity2}
\end{equation}
\end{lemma}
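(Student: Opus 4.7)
The strategy is to combine the term-selecting state construction of \cref{thm:param-select2} with a batched version of \cref{alg:first-comm}, using the $\mathcal{G}^2$-coloring to guarantee that the prescribed ``moats'' do not collide across the different Paulis of $\mathbf{V}_i$. For each $P_a \in \mathbf{V}_i$ I pick a single-qubit Pauli $P'_a$ supported on one site of $X_a := \supp P_a$ with $\comm{P_a}{P'_a}\neq 0$, and define the joint input state
\begin{equation}
    \rho_0 \;=\; \bigotimes_{P_a \in \mathbf{V}_i} \qty(\frac{\mathbb{I} + i\comm{P_a}{P'_a}/2}{2^{\abs{X_a}}})^{(X_a)} \otimes \qty(\frac{\mathbb{I}}{2})^{\otimes (\text{remaining qubits})}.
\end{equation}
The key graph-theoretic observation is that if $P_a,P_b\in \mathbf{V}_i$ with $a\neq b$, no Hamiltonian term can overlap both $X_a$ and $X_b$: otherwise $P_a$ and $P_b$ would lie at distance at most $2$ in $\mathcal{G}$, hence adjacent in $\mathcal{G}^2$, contradicting the coloring. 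Equivalently, the neighborhood $Y_a$ appearing in \cref{thm:param-select2} is disjoint from every other $X_b$, so when viewed from any single $P_a$ the restriction of $\rho_0$ to $X_a \cup Y_a$ exactly matches the state prescribed there. This yields $\Tr(i\comm{H}{P'_a}\rho_0)=\theta_a$ for every $a$ simultaneously.

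Because each $P'_a$ lives on a single qubit inside a distinct $X_a$, and the $X_a$ are pairwise disjoint, all $P'_a$ act on distinct qubits and can be read out in parallel: one Pauli-basis measurement per qubit yields an unbiased single-shot sample of $\expval{P'_a(t)}$ for every $a$ at once. Running \cref{alg:first-comm} on this batched dataset therefore produces estimators $\tilde{\theta}_a$ for every coefficient at no additional query cost beyond what \cref{thm:err-scale2} requires for a single coefficient. Since each $P'_a$ is a $\pm 1$-valued observable, the single-shot variance bound $\sigma^2 \leq 1$ of \cref{thm:err-scale2} transfers to every estimator in the batch without modification.

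Finally I convert the relative-error guarantee of \cref{thm:err-scale2} into the absolute error $\epsilon\norm{\Theta}_\infty$ required by the statement: since $\gamma = 2\mathscr{D}\norm{\Theta}_\infty$, setting the target relative error to $\epsilon' = \epsilon/(2\mathscr{D})$ suffices, which inflates the query complexity from $\order{\log(1/\delta)\polylog(1/\epsilon')(\epsilon')^{-2}}$ to $\order{\mathscr{D}^2 \log(1/\delta)\polylog(\mathscr{D}/\epsilon)\epsilon^{-2}}$. The median-of-means boost from \cref{thm:err-scale2} is run once per experimental shot schedule, so the per-coefficient failure probability is $\delta$ (with the overall union-bound $\delta\abs{\mathbf{V}_i}$ that the statement acknowledges). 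The main obstacle is the no-collision argument in the first paragraph; once that is in hand, the rest is a direct composition of \cref{thm:param-select2} and \cref{thm:err-scale2}.
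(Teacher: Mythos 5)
Your proposal is correct and follows essentially the same route as the paper: the same product state with a maximally mixed ``moat'' on $(\supp \mathbf{V}_i)'$, the same reduction of each coefficient to \cref{thm:param-select2}, simultaneous readout of the disjointly supported single-qubit observables, and the same rescaling $\epsilon \to \epsilon/(2\mathscr{D})$ to convert the relative error of \cref{thm:err-scale2} into the stated absolute error and the $\mathscr{D}^2$ prefactor. Your explicit no-collision argument via distance $\leq 2$ in $\mathcal{G}$ is exactly the point the paper leaves implicit in its ``straightforward labeling'' of qubits onto $X$, $Y$, $Z$.
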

\begin{proof}
For each $P_{i,j} \in \mathbf{V}_i$, let $P_{i,j}'$ be a single-qubit Pauli such that $\comm{P_{i,j}}{P_{i,j}'} \neq 0$. Let $M=(\supp \mathbf{V}_i)'$. Let
\begin{equation}
    \rho_0 = \qty(\bigotimes_{j; P_{i,j} \in \mathbf{V}_i} \qty(\frac{\mathbb{I} + P_{i,j}'}{2})) \otimes \qty(\frac{\mathbb{I}}{2^{\abs{M}}})^{(M)}
\end{equation}
Here, $M$ is the ``moat". Now, for any $P_{i,j}$, there is a straightforward labeling of qubits that maps $\rho_0$ onto the structure $\rho_0^{(X)} \otimes \rho_0^{(Y)} \otimes \rho_0^{(Z)}$ in \cref{thm:param-select}.
\begin{gather}
    X = \supp(P_j) \\
    Y = M \\
    Z = \supp \mathbf{V}_i \setminus \supp P_{i,j}
\end{gather}
By \cref{alg:first-comm,thm:param-select}, so long as we are able to evaluate expectation values $\Tr(P_{i,j}' \rho_0(t))$, we can find $\Tr(i\comm{H}{P_{i,j}'} \rho_0)$, hence we can also find $\qty{\theta_{i,j} \mid P_{i,j} \in \mathbf{V}_i}$. Therefore, it suffices to show that (for some fixed time $t$) the expectation values $\Tr(P_{i,j}' \rho_0(t))$ can be evaluated for all $j$. This is not difficult, since $\{ P_{i,j}'\}$ is a set of Pauli operators with non-overlapping support, so they commute and can be measured simultaneously. 

Since the error in \cref{thm:err-scale} was relative to $\gamma \equiv 2 \mathscr{D} \norm{\Theta}_\infty$, after rescaling $\epsilon \rightarrow \frac{\epsilon}{2 \mathscr{D}}$ we find that we require $L \sim \log \epsilon^{-1}$ groups of $N \sim \mathscr{D}^2\polylog(\mathscr{D}/\epsilon) \epsilon^{-2}$ measurements, with $K \sim \log(1/\delta)$ (where we take the mean within each group, and the median across all the means). The total query complexity is $NLK$, which gives the complexity in \cref{eqn:partition-complexity2}.
\end{proof}

Finally, since our parallelization technique relies on a graph coloring, we provide a formal definition below.
\begin{definition}[Graph coloring]\label{defn:color}
A $C$-coloring of a graph is a labelling of the vertices in a graph with exactly $C$ colors such that no two vertices with the same color share an edge. The minimum number of colors required to color a graph $\mathcal{G}$ is known as its chromatic number $\chi(\mathcal{G})$. We will write a $C$-coloring as $\qty{\mathbf{V}_i \mid i = 1, \ldots, C}$, where $\mathbf{V}_i=\qty{P_{i,1}, P_{i,2}, \ldots}$ is called a partition, and is the set of Paulis that is colored with $i$. We will also define the support of a partition to be:
\begin{equation}
    \supp \mathbf{V}_i = \bigcup_{P_{i,j} \in \mathbf{V}_i} \supp(P_{i,j}).
\end{equation}
Furthermore, by Vizing's theorem \citep{vizing1965}, any graph with degree $\Delta$ can be colored with at most $\chi(\mathcal{G}) \leq \Delta+1$ colors. This coloring can be found in $\order{\Delta}$ time with a greedy algorithm \citep{mitchem1976}. Since the squared interaction graph has degree at most at most $\mathscr{D} \qty(\mathscr{D}-1) + 1 = \mathscr{D}^2 - \mathscr{D} + 1$, the graph can be colored with at most $\mathscr{D}^2 - \mathscr{D} + 2$ colors. For non-trivial Hamiltonians ($\mathscr{D} \geq 2$), this is at most $\mathscr{D}^2$ colors.
\end{definition}
\begin{example}\label{ex:sq-coloring}
The following is an example of a coloring of the squared interaction graph $\mathcal{G}^2$ for the 9-qubit TFIM model in \cref{ex:tfim-model}.
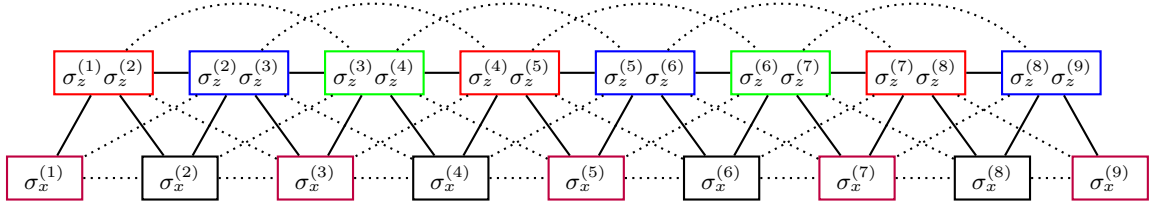
\begin{figure}[H]
    \centering
    \begin{tikzpicture}[node distance={18mm}, thick, main/.style = {draw, rectangle,minimum width=1cm}
    ] 
\node[main, draw=red] (0)               {$\sigma_z^{(1)} \sigma_z^{(2)}$}; 
\node[main, draw=blue] (1) [right of=0] {$\sigma_z^{(2)} \sigma_z^{(3)}$};
\node[main, draw=green] (2) [right of=1] {$\sigma_z^{(3)} \sigma_z^{(4)}$};
\node[main, draw=red] (3) [right of=2] {$\sigma_z^{(4)} \sigma_z^{(5)}$};
\node[main, draw=blue] (4) [right of=3] {$\sigma_z^{(5)} \sigma_z^{(6)}$};
\node[main, draw=green] (5) [right of=4] {$\sigma_z^{(6)} \sigma_z^{(7)}$};
\node[main, draw=red] (6) [right of=5] {$\sigma_z^{(7)} \sigma_z^{(8)}$};
\node[main, draw=blue] (7) [right of=6] {$\sigma_z^{(8)} \sigma_z^{(9)}$};

\node[main] (8) [draw=purple,  below left=8mm and -4mm of 0] {$\sigma_x^{(1)}$};
\node[main] (9) [draw=black, below left=8mm and -4mm of 1] {$\sigma_x^{(2)}$};
\node[main] (10) [draw=purple, below left=8mm and -4mm of 2] {$\sigma_x^{(3)}$};
\node[main] (11) [draw=black, below left=8mm and -4mm of 3] {$\sigma_x^{(4)}$};
\node[main] (12) [draw=purple, below left=8mm and -4mm of 4] {$\sigma_x^{(5)}$};
\node[main] (13) [draw=black, below left=8mm and -4mm of 5] {$\sigma_x^{(6)}$};
\node[main] (14) [draw=purple, below left=8mm and -4mm of 6] {$\sigma_x^{(7)}$};
\node[main] (15) [draw=black, below left=8mm and -4mm of 7] {$\sigma_x^{(8)}$};
\node[main] (16) [draw=purple, below right=8mm and -4mm of 7] {$\sigma_x^{(9)}$};

\draw[-] (0) -- (1);
\draw[-] (1) -- (2);
\draw[-] (2) -- (3);
\draw[-] (3) -- (4);
\draw[-] (4) -- (5);
\draw[-] (5) -- (6);
\draw[-] (6) -- (7);

\draw[-] (0) -- (8);
\draw[-] (0) -- (9);
\draw[-] (1) -- (9);
\draw[-] (1) -- (10);
\draw[-] (2) -- (10);
\draw[-] (2) -- (11);
\draw[-] (3) -- (11);
\draw[-] (3) -- (12);
\draw[-] (4) -- (12);
\draw[-] (4) -- (13);
\draw[-] (5) -- (13);
\draw[-] (5) -- (14);
\draw[-] (6) -- (14);
\draw[-] (6) -- (15);
\draw[-] (7) -- (15);
\draw[-] (7) -- (16);

\draw[dotted] (0) to[out=45,in=135] (2);
\draw[dotted] (1) to[out=45,in=135] (3);
\draw[dotted] (2) to[out=45,in=135] (4);
\draw[dotted] (3) to[out=45,in=135] (5);
\draw[dotted] (4) to[out=45,in=135] (6);
\draw[dotted] (5) to[out=45,in=135] (7);

\draw[dotted] (8) -- (9);
\draw[dotted] (9) -- (10);
\draw[dotted] (10) -- (11);
\draw[dotted] (11) -- (12);
\draw[dotted] (12) -- (13);
\draw[dotted] (13) -- (14);
\draw[dotted] (14) -- (15);
\draw[dotted] (15) -- (16);

\draw[dotted] (1) -- (8);
\draw[dotted] (2) -- (9);
\draw[dotted] (3) -- (10);
\draw[dotted] (4) -- (11);
\draw[dotted] (5) -- (12);
\draw[dotted] (6) -- (13);
\draw[dotted] (7) -- (14);

\draw[dotted] (0) -- (10);
\draw[dotted] (1) -- (11);
\draw[dotted] (2) -- (12);
\draw[dotted] (3) -- (13);
\draw[dotted] (4) -- (14);
\draw[dotted] (5) -- (15);
\draw[dotted] (6) -- (16);

\end{tikzpicture} 
    \caption[Coloring of a squared interaction graph $\mathcal{G}$]{A $5$-coloring of the squared interaction graph $\mathcal{G}^2$. Solid edges indicate the distance between the nodes is 1 in $\mathcal{G}$, and dashed edges indicate the distance in $\mathcal{G}$ is 2.}
    \label{fig:sq-coloring}
\end{figure}

\end{example}

\begin{theorem}[Hamiltonian learning with unitary dynamics]\label{thm:master-theorem2}
Fix a sparsely interacting Hamiltonian $H$ that has $r$ terms in its Pauli expansion with coefficients $\Theta$. For the appropriate choice of Chebyshev degree $L$ and evolution time $A$, \cref{alg:partition-learn} solves the quantum Hamiltonian learning problem (with an additive error $\epsilon \norm{\Theta}_\infty$ and failure probability at most $\delta$) with query complexity
\begin{equation}
    \order{\frac{\mathscr{D}^4 \log (r/\delta) \polylog(\mathscr{D}/\epsilon)}{\epsilon^{2}}}, \label{eqn:total-query2}
\end{equation}
and classical processing time complexity
\begin{equation}
    \order{\frac{\mathscr{D}^2 r \log (r/\delta) \polylog(\mathscr{D}/\epsilon)}{\epsilon^{2}}}. \label{eqn:total-time2}
\end{equation}
\end{theorem}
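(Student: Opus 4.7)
The plan is to assemble \cref{thm:master-theorem} as a union-bound argument that stitches together \cref{thm:simul-inf} across a coloring of the squared interaction graph. First, I would invoke \cref{defn:color} to produce a coloring of $\mathcal{G}^2$ with $C \leq \mathscr{D}^2$ color classes $\{\mathbf{V}_1, \ldots, \mathbf{V}_C\}$ via the greedy algorithm (this step also contributes only $\order{\mathscr{D}^2}$ time to the classical preprocessing, which is dominated by the measurement postprocessing). The whole point of working with $\mathcal{G}^2$ rather than $\mathcal{G}$ is precisely what \cref{thm:simul-inf2} exploits: Paulis of the same color have supports separated by a ``moat'' in which we place maximally mixed states, guaranteeing via \cref{thm:param-select2} that the expectation value associated with each $P_{i,j}$ cleanly isolates $\theta_{i,j}$.

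Next, I would carefully handle the failure-probability budget. \cref{thm:simul-inf} gives, for one partition $\mathbf{V}_i$, a per-coefficient failure probability of $\delta'$ using query complexity $\order{\mathscr{D}^2 \log(1/\delta') \polylog(\mathscr{D}/\epsilon) \epsilon^{-2}}$. Since I want an overall failure probability of at most $\delta$ across all $r$ Hamiltonian coefficients, a union bound forces $\delta' = \delta/r$. Plugging this in gives a per-partition query complexity of $\order{\mathscr{D}^2 \log(r/\delta) \polylog(\mathscr{D}/\epsilon) \epsilon^{-2}}$. Summing over the $C = \order{\mathscr{D}^2}$ partitions (the queries for different colors cannot be parallelized, since the SPAM settings are different) yields the claimed total query complexity
\begin{equation}
\order{\frac{\mathscr{D}^4 \log(r/\delta) \polylog(\mathscr{D}/\epsilon)}{\epsilon^2}}.
\end{equation}

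For the classical processing time, the key observation is that, although the queries are parallelized across the coefficients within a single color class, the \emph{post-processing} for each coefficient must still be done individually: for each of the $r$ coefficients we must (i) compute the Chebyshev coefficient estimators $\tilde{b}_m$ from the measurement outcomes, (ii) form $\tilde{c}_1$ by the linear combination in \cref{lem:linear}, and (iii) take a median of means over $K$ independent batches. Each of these steps scales linearly in the number of measurement outcomes used for that coefficient, which is $\order{\mathscr{D}^2 \log(r/\delta) \polylog(\mathscr{D}/\epsilon) \epsilon^{-2}}$, giving a total classical time of $\order{\mathscr{D}^2 r \log(r/\delta) \polylog(\mathscr{D}/\epsilon) \epsilon^{-2}}$ as claimed. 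Verifying that the hyperparameters $A \sim \tau$ and $L \sim \log \epsilon^{-1}$ chosen inside \cref{thm:err-scale2} (appropriately rescaled by $\mathscr{D}$ to absorb the $\norm{\Theta}_\infty$ normalization into $\norm{\Theta}_\infty$ rather than $\gamma$) indeed deliver an additive error of $\epsilon \norm{\Theta}_\infty$ rather than $\epsilon \gamma$ is essentially a bookkeeping calculation that already appeared inside \cref{thm:simul-inf2}.

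The main obstacle I anticipate is simply making sure that the two logarithmic factors, $\log(1/\delta')$ from the median-of-means and the factor $\log r$ from the union bound, combine to $\log(r/\delta)$ rather than $\log(r) \log(1/\delta)$; this is true because median-of-means is exponentially concentrated in the number of batches $K$, so taking $K = \order{\log(r/\delta)}$ suffices. A secondary subtlety is the rescaling step mentioned above: since \cref{thm:err-scale} quotes errors relative to $\gamma = 2\mathscr{D}\norm{\Theta}_\infty$, asking for absolute error $\epsilon \norm{\Theta}_\infty$ corresponds to running the inner algorithm with tolerance $\epsilon/(2\mathscr{D})$, and it is this replacement that is responsible for the extra factor $\mathscr{D}^2$ in the query complexity (beyond the $\mathscr{D}^2$ from the number of colors), producing the final $\mathscr{D}^4$ scaling.
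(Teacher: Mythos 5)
Your proposal is correct and follows essentially the same route as the paper: color $\mathcal{G}^2$ with at most $\mathscr{D}^2$ colors, apply the simultaneous-inference lemma to each partition with per-coefficient failure probability $\delta/r$, rescale the tolerance to $\epsilon/(2\mathscr{D})$ to convert the $\gamma$-relative error of \cref{thm:err-scale} into an $\epsilon\norm{\Theta}_\infty$ additive error, and multiply the resulting $N\cdot L\cdot K$ by $\chi(\mathcal{G}^2)$ for queries or by $r$ for classical postprocessing. Your accounting of where each factor of $\mathscr{D}^2$ arises and of the $\log(r/\delta)$ combination via median-of-means matches the paper's argument.
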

\begin{proof}
After finding a $\mathscr{D}^2$ coloring for the squared interaction graph $\mathcal{G}^2$, we use the result from \cref{thm:simul-inf} to simultaneously infer the coefficients for each partition in the graph coloring. For each partition $\mathbf{V}_i$, with reference to \cref{thm:err-scale}, it suffices to set $N = \order{\polylog(\mathscr{D}/\epsilon) (\epsilon/\mathscr{D})^{-2}}$, $K = \order{\log(r/\delta)}$, and $L = \order{\log(\epsilon^{-1})}$ to ensure that each individual coefficient can be recovered up to $\abs{\theta_{true}-\theta_{infer}} \leq \epsilon \norm{\Theta}_\infty$ with failure probability at most $\frac{\delta}{r}$. By doing this for each of the $\chi(\mathcal{G}^2) \leq \mathscr{D}^2+1$ partitions and applying a union bound on the failure probability, we see we can recover each coefficient up to an additive error $\epsilon \norm{\Theta}_\infty$ with failure probability at most $\delta$ using $N \cdot L \cdot K \cdot \chi(\mathcal{G}^2)$ queries, which has a complexity given by \cref{eqn:total-query2}. The classical time complexity takes a similar form to the query complexity, except it replaces a factor of $\mathscr{D}^2$ (corresponding to $\chi(\mathcal{G}^2)$) by a factor $r$. This is because we need to process $NLK$ measurement results for each of the $r$ coefficients in the Hamiltonian, whereas for the query complexity, we make $NLK$ measurements for each of the $\chi(\mathcal{G}^2)$ partitions. Since the classical time to color the graph is just $\order{\mathscr{D}^2}$, the overall classical complexity is still dominated by $N \cdot L \cdot K \cdot r$, giving \cref{eqn:total-time2}.
\end{proof}

\section{Hamiltonian Learning with Gibbs States}\label{sec:ham-gibbs}
\begin{lemma}
If $P_i$ is a term in the Hamiltonian, the expectation $\expval{P_i}_\beta$ can also be written as:
\begin{equation}
    \expval{P_i}_\beta = -\frac{1}{\beta} \pdv{\theta_i} \log \Tr \exp(-\beta H)
    \label{eqn:gibbs-expectation-via-derivative}
\end{equation}
\end{lemma}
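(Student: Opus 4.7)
The plan is to differentiate $\log \Tr \exp(-\beta H)$ with respect to $\theta_i$ and show that, up to the prefactor $-1/\beta$, the result equals $\expval{P_i}_\beta$ as defined in \cref{eqn:gibbs-expec}. The chain rule gives
\begin{equation}
\pdv{\theta_i} \log \Tr \exp(-\beta H) = \frac{1}{\Tr \exp(-\beta H)} \, \pdv{\theta_i} \Tr \exp(-\beta H),
\end{equation}
so the whole task reduces to evaluating $\pdv{\theta_i} \Tr \exp(-\beta H)$ and recognizing that the numerator in \cref{eqn:gibbs-expec} appears.

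The subtle point is that $\pdv{\theta_i} H = P_i$ does \emph{not} in general commute with $H$, so one cannot naively write $\pdv{\theta_i} e^{-\beta H} = -\beta P_i e^{-\beta H}$. The standard tool to handle this is Duhamel's formula,
\begin{equation}
\pdv{\theta_i} e^{-\beta H} = -\int_0^\beta e^{-(\beta-s)H}\, P_i\, e^{-sH}\, ds,
\end{equation}
which I would either state as a well-known identity or derive in one line by differentiating the integral equation $e^{-\beta H} = \mathbb{I} - \int_0^\beta H e^{-sH}\,ds$. The main obstacle, if any, is a matter of exposition rather than difficulty: one must be careful not to pull $P_i$ through the exponentials while still extracting a clean answer.

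The next step is to take the trace and invoke cyclicity: for each $s$,
\begin{equation}
\Tr\!\left(e^{-(\beta-s)H}\, P_i\, e^{-sH}\right) = \Tr\!\left(P_i\, e^{-sH} e^{-(\beta-s)H}\right) = \Tr\!\left(P_i \, e^{-\beta H}\right),
\end{equation}
where the second equality uses $[e^{-sH}, e^{-(\beta-s)H}] = 0$ since both are functions of $H$. The integrand is now independent of $s$, so the integral collapses to $\beta \Tr(P_i e^{-\beta H})$, giving
\begin{equation}
\pdv{\theta_i} \Tr \exp(-\beta H) = -\beta \Tr(P_i e^{-\beta H}).
\end{equation}

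Combining these pieces yields
\begin{equation}
-\frac{1}{\beta} \pdv{\theta_i} \log \Tr \exp(-\beta H) = \frac{\Tr(P_i e^{-\beta H})}{\Tr \exp(-\beta H)} = \expval{P_i}_\beta,
\end{equation}
which is exactly \cref{eqn:gibbs-expectation-via-derivative}. The whole argument is essentially the well-known statistical-mechanics identity that local expectation values are derivatives of the log partition function with respect to their conjugate couplings; the only nontrivial ingredient is Duhamel's formula to dispose of the non-commutativity.
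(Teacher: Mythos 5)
Your proof is correct and takes essentially the same route as the paper's: differentiate $\log \Tr \exp(-\beta H)$ with respect to $\theta_i$ and identify the result with the Gibbs expectation. The only difference is that you explicitly justify the key step $\Tr \pdv{\theta_i} \exp(-\beta H) = -\beta \Tr(P_i \exp(-\beta H))$ via Duhamel's formula and cyclicity of the trace, whereas the paper writes this identity without comment; your version is the more careful of the two, since the naive operator identity $\pdv{\theta_i} e^{-\beta H} = -\beta P_i e^{-\beta H}$ fails when $P_i$ does not commute with $H$ and only becomes valid under the trace.
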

\begin{proof}
This is Proposition 3.2 of \citet{tang2021}. We reproduce the proof here for completeness:
\begin{align*}
    -\frac{1}{\beta} \pdv{\theta_i} \log \Tr \exp(-\beta H) &= -\frac{1}{\beta} \frac{\Tr \pdv{\theta_i} \exp(-\beta H)}{\Tr \exp(-\beta H)} \\
    &= -\frac{1}{\beta} \frac{\Tr(-\beta P_i \exp(-\beta H))}{\Tr \exp(-\beta H)} \\
    &= \frac{\Tr(P_i \exp(-\beta H))}{\Tr \exp(-\beta H)}.
\end{align*}
\end{proof}
We define $\mathcal{L} = \log \Tr \exp(-\beta H)$ and view this as a function of $\beta, \theta_1,\ldots, \theta_r$.

\begin{lemma}
Using a multivariate Taylor expansion, we can write
\begin{equation}
    \mathcal{L} = \sum_{m \geq 0} \beta^m \sum_{\alpha; \abs{\alpha} = m} \frac{\Theta^\alpha}{\alpha!} \mathcal{D}_\alpha \mathcal{L}, \label{eqn:gibbs-L}
\end{equation}
where $\alpha$ are the multi-index sets defined in \cref{defn:tuples}. The derivative operator $\mathcal{D}_\alpha = \frac{\partial}{\partial z_1^{\alpha_1} \ldots \partial z_r^{\alpha_r}}$ is evaluated at $\Theta=0$. We have defined $z_i \equiv \beta \theta_i$. Furthermore:
\begin{equation}
    \expval{P_i}_\beta = -\sum_{m \geq 0} \beta^{m} \sum_{\alpha; \abs{\alpha}=m} \frac{\Theta^\alpha}{\alpha!} \mathcal{D}_{\alpha'} \mathcal{L} \label{eqn:gibbs-taylor}
\end{equation}
where $\alpha' = (\alpha_1,\ldots,\alpha_i+1,\ldots,\alpha_r)$. 
\end{lemma}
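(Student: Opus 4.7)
The plan is to treat both identities as consequences of a single multivariate Taylor expansion of the log-partition function, organized around the observation that $\mathcal{L}(\beta,\theta_1,\ldots,\theta_r) = \log\Tr\exp\!\big(-\sum_i z_i P_i\big)$ depends on its arguments only through the products $z_i := \beta\theta_i$. Since $\exp(-\sum_i z_i P_i)$ is an entire matrix-valued function of $(z_1,\ldots,z_r)$ and its trace is strictly positive, $\mathcal{L}$ is real-analytic in a neighborhood of $z=0$, so a Taylor expansion of $\mathcal{L}$ around $\Theta=0$ is justified.

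To obtain \cref{eqn:gibbs-L}, I would write the standard multivariate Taylor expansion of $\mathcal{L}$ in the $\theta_i$ variables around $\Theta=0$:
\begin{equation*}
\mathcal{L} \;=\; \sum_{\alpha}\frac{\Theta^{\alpha}}{\alpha!}\,\partial_{\Theta}^{\alpha}\mathcal{L}\big|_{\Theta=0}.
\end{equation*}
Because $\mathcal{L}$ depends on $\theta_i$ only through $z_i=\beta\theta_i$, the chain rule gives $\partial_{\theta_i}=\beta\,\partial_{z_i}$, and iterating yields $\partial_{\Theta}^{\alpha}\mathcal{L}\big|_{\Theta=0}=\beta^{|\alpha|}\,\mathcal{D}_\alpha\mathcal{L}$. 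Substituting and grouping terms by $m=|\alpha|$ produces exactly \cref{eqn:gibbs-L}.

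For \cref{eqn:gibbs-taylor}, I would start from the identity $\expval{P_i}_\beta=-\beta^{-1}\partial_{\theta_i}\mathcal{L}$ from the previous lemma and differentiate the series \cref{eqn:gibbs-L} termwise (valid inside the radius of analyticity). Since the factors $\mathcal{D}_\alpha\mathcal{L}$ are constants (derivatives evaluated at $\Theta=0$), only $\Theta^\alpha/\alpha!$ depends on $\theta_i$, and
\begin{equation*}
\partial_{\theta_i}\frac{\Theta^{\alpha}}{\alpha!}\;=\;\frac{\alpha_i\,\Theta^{\alpha-e_i}}{\alpha!}\;=\;\frac{\Theta^{\alpha-e_i}}{(\alpha-e_i)!}
\end{equation*}
whenever $\alpha_i\geq 1$ (and vanishes otherwise), where $e_i$ is the $i$th standard basis vector. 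Reindexing $\gamma=\alpha-e_i$ (so $\alpha=\gamma'$ in the notation of the lemma) and shifting the outer sum by one then yields $\partial_{\theta_i}\mathcal{L}=\beta\sum_{m\geq 0}\beta^m\sum_{\alpha;|\alpha|=m}\frac{\Theta^\alpha}{\alpha!}\mathcal{D}_{\alpha'}\mathcal{L}$. Dividing by $-\beta$ gives the claimed expansion.

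The arguments are essentially bookkeeping once the key observation (dependence through $z_i=\beta\theta_i$) is made, so the main obstacle is notational rather than mathematical: carefully matching the multi-index shifts $\alpha\mapsto\alpha\pm e_i$ against the multinomial factors $1/\alpha!$ so that the reindexing cleanly produces $\mathcal{D}_{\alpha'}$ with the same summation range as in \cref{eqn:gibbs-L}. A brief justification of termwise differentiation (by analyticity of $\mathcal{L}$ on a complex polydisk around $z=0$) will complete the argument.
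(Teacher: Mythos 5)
Your proposal is correct and follows essentially the same route as the paper: both identities rest on the multivariate Taylor expansion of $\mathcal{L}$ in the variables $z_i=\beta\theta_i$, with the second obtained by commuting $-\tfrac{1}{\beta}\partial_{\theta_i}=-\partial_{z_i}$ through the series so that the multi-index shifts to $\alpha'$. The only difference is presentational: the paper cites Equation~(22) of \citet{tang2021} for the first identity and applies the derivative to the series coefficients, whereas you derive the expansion from scratch and differentiate the monomials $\Theta^\alpha$ before reindexing $\alpha\mapsto\alpha-e_i$ --- these are equivalent.
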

\begin{proof}
The first statement follows directly from Equation (22) of \citet{tang2021}. The second statement follows because the derivatives $\pdv{z_i}$ and $\pdv{z_j}$ commute for all $i,j$, so the operator $-\frac{1}{\beta} \pdv{\theta_i} =-\pdv{z_i}$ from ~\cref{eqn:gibbs-expectation-via-derivative} can be distributed into the sum in \cref{eqn:gibbs-L}.
\end{proof}
\begin{remark}
$\mathcal{D}_\alpha \mathcal{L}$ is a constant that does not depend on $\beta$. To see this, observe that we can write $\mathcal{L}=\log \Tr \exp(-\sum_i z_i P_i)$ as a function of $z_i$ alone. Therefore, evaluating the derivative at $z_1=\ldots=z_r=0$ with the operator $\mathcal{D}_\alpha$ yields a constant independent of $\beta$. So, we are shifting our viewpoint of $\expval{P_i}_\beta$ as a polynomial in $\beta$ with coefficients $\sum_{\alpha; \abs{\alpha}=m} \frac{\Theta^\alpha}{\alpha!} \mathcal{D}_{\alpha'} \mathcal{L}$, rather than as a multivariate polynomial in the coefficients $\Theta$ (as done in \citet{tang2021}).
\end{remark}
\begin{remark}
The zeroth order term in $\expval{P_i}_\beta$ must vanish, since $\expval{P_i}_{\beta=0}$ corresponds to evaluating $P_i$ on the maximally mixed state. 
The first order term is can be found by differentiating \cref{eqn:gibbs-expec} and evaluating at $\beta=0$:
\begin{align*}
    \dv{\beta} \expval{P_i}_{\beta=0} &= \frac{\Tr(\exp(-\beta H))\dv{\beta} \Tr(P_i \exp(-\beta H)) - \Tr(P_i \exp(-\beta H)) \dv{\beta} \Tr(\exp(-\beta H))}{\Tr(\exp(-\beta H))^2} \\
    &= \frac{1}{2^n} \dv{\beta} \Tr(P_i \exp(-\beta H)) \qq{where $n$ is the number of qubits} \\
    &= \frac{1}{2^n} \Tr(-P_i H \exp(-\beta H)) \\
    &= \frac{1}{2^n} \Tr(-P_i H) \\
    &= -\theta_i \qq{by the orthogonality of the Pauli matrices} \numberthis
\end{align*}

\end{remark}

\begin{lemma}[Temperature derivative bound]
The $m$th derivative of $\expval{P_i}_\beta$ evaluated at $\beta=0$ is bounded in absolute value by:
\begin{equation}
    \abs{\dv[m]{\expval{P_i}_\beta}{\beta}}_{\beta=0} \leq \norm{\Theta}_\infty^m (m+1)! (2e^2 (\mathscr{D}^2-1))^{m+1}
\end{equation}
\end{lemma}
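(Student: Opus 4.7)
The plan is to differentiate the Taylor series \cref{eqn:gibbs-taylor} at $\beta=0$, identify the resulting coefficients as noncommutative cumulants of Pauli operators at the maximally mixed state, and bound them via the cluster-expansion analysis of \citet{tang2021} combined with a counting argument on the squared interaction graph.

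Since $\mathcal{D}_{\alpha'}\mathcal{L}|_{\Theta=0}$ is $\beta$-independent, differentiating \cref{eqn:gibbs-taylor} $m$ times at $\beta=0$ selects exactly the $m$-th order term:
\begin{equation*}
    \dv[m]{\expval{P_i}_\beta}{\beta}\bigg|_{\beta=0} = -m!\sum_{\alpha:\,\abs{\alpha}=m}\frac{\Theta^\alpha}{\alpha!}\,\mathcal{D}_{\alpha'}\mathcal{L}.
\end{equation*}
Using $\abs{\Theta^\alpha}\leq\norm{\Theta}_\infty^m$ and the identity $\alpha'!/\alpha!=\alpha_i+1\leq m+1$, the claim reduces to showing $\sum_{\alpha:\,\abs{\alpha}=m}\abs{\mathcal{D}_{\alpha'}\mathcal{L}}/\alpha'! \leq \qty(2e^2(\mathscr{D}^2-1))^{m+1}$, after which multiplying by $(m+1)!\,\norm{\Theta}_\infty^m$ gives the stated inequality.

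Next I would identify $\mathcal{D}_{\alpha'}\mathcal{L}|_{z=0}$ as a noncommutative cumulant of the Paulis (with $P_j$ counted with multiplicity $\alpha'_j$, including one copy of $P_i$) evaluated in the maximally mixed state, obtained via the Duhamel expansion of $\log\Tr\exp(-\sum_j z_jP_j)$ as in Section~4 of \citet{tang2021}. Two cluster-expansion facts then apply: (a) the cumulant vanishes unless the support $\qty{j:\alpha'_j>0}$ is connected in $\mathcal{G}$ and contains $i$; (b) a quantum Kotecký--Preiss type bound controls $\sum_{\alpha \text{ connected}}\abs{\mathcal{D}_{\alpha'}\mathcal{L}}/\alpha'!$ by a constant raised to the cluster size, with the constant determined by the maximum degree of the effective cluster graph. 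That cluster graph is $\mathcal{G}^2$ rather than $\mathcal{G}$: each term generated by the Duhamel expansion couples two Paulis through a single $H$-insertion, so connectedness of a cluster propagates support by two steps in $\mathcal{G}$ per expansion order.

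The remaining step is counting connected clusters. Converting multi-indices to ordered term tuples (\cref{defn:tuples}) and applying \cref{thm:counting}/\cref{cor:counting} to $\mathcal{G}^2$ — whose maximum degree is $\mathscr{D}^2-\mathscr{D}+1\leq\mathscr{D}^2-1$ for nontrivial $\mathscr{D}\geq 2$ by the calculation in \cref{defn:color} — gives at most $(\mathscr{D}^2-1)^m(m+1)!$ admissible labelled subtrees of size $m+1$ rooted at $i$. Absorbing the per-cluster constant and the $(m+1)!$ via the Stirling estimate $(m+1)^{m+1}/(m+1)!\leq e^{m+1}$ applied once for the per-tuple bound and once for the tuple-to-multiset conversion yields the $(2e^2)^{m+1}$ prefactor, completing the bound. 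The main obstacle will be rigorously establishing (a) and (b) in the noncommutative setting: the classical Möbius-inversion proofs of cluster factorization and Kotecký--Preiss must be adapted to track operator orderings inside the Duhamel integrals, essentially reusing the machinery of \citet{tang2021} but with $\mathcal{G}^2$ replacing $\mathcal{G}$ to extract the quadratic-in-$\mathscr{D}$ dependence from the two-step support propagation.
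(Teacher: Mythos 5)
Your setup is right and matches the paper's: differentiating \cref{eqn:gibbs-taylor} $m$ times at $\beta=0$ isolates the $m$-th order term, and after extracting $\abs{\Theta^\alpha}\leq\norm{\Theta}_\infty^m$ and $\alpha'!/\alpha!=\alpha_i+1\leq m+1$ the task reduces to bounding $\sum_{\alpha:\abs{\alpha}=m}\abs{\mathcal{D}_{\alpha'}\mathcal{L}}/\alpha'!$. But the way you propose to close that sum has two genuine problems. First, the claim that the relevant connectivity structure is $\mathcal{G}^2$ is unsupported and is not where the quadratic dependence on $\mathscr{D}$ comes from. The standard cluster-expansion fact (and what the paper uses) is that $\mathcal{D}_{\alpha'}\mathcal{L}$ vanishes unless $\alpha'$ induces a connected subgraph of the interaction graph $\mathcal{G}$ itself; the factor $\mathscr{D}^2-1=(\mathscr{D}-1)(\mathscr{D}+1)$ in the final bound arises as the \emph{product} of two separate single-power-of-$\mathscr{D}$ contributions --- a count of $e\mathscr{D}\qty(e(\mathscr{D}-1))^m$ connected multi-indices (Lemma 3.7 of \citet{tang2021}) times a per-cluster magnitude bound $\abs{\mathcal{D}_{\alpha'}\mathcal{L}}/\alpha'!\leq(2e(\mathscr{D}+1))^{m+1}$ (their Proposition 3.8) --- not from squaring the graph. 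In this paper $\mathcal{G}^2$ enters only in the measurement-parallelization coloring, not in this derivative bound.

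Second, and more fatally, your counting step over-counts by a factorial. The sum is over multi-indices $\alpha$, each appearing once, so what is needed is the \emph{unlabelled} (structural) count of connected clusters, which is exponential in $m$ (of order $(e\mathscr{D})^m$, as in the remark following \cref{cor:counting}). \cref{thm:counting} and \cref{cor:counting} count \emph{labelled} rooted subtrees, i.e., ordered term tuples, of which there are up to $(m+1)!$ per multi-index; using $(\mathscr{D}^2-1)^m(m+1)!$ here introduces a spurious $(m+1)!$ on top of the $(m+1)!$ already factored out in your reduction, giving overall growth like $\qty((m+1)!)^2$, which cannot be bounded by $(m+1)!\,C^{m+1}$. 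The proposed fix --- absorbing the extra factorial ``via the Stirling estimate $(m+1)^{m+1}/(m+1)!\leq e^{m+1}$'' --- runs in the wrong direction: that inequality trades a power $(m+1)^{m+1}$ for $e^{m+1}(m+1)!$; it cannot eliminate a factorial in favor of an exponential. The paper avoids all of this by citing the two results of \citet{tang2021} directly and multiplying them, then checking $e\mathscr{D}(e(\mathscr{D}-1))^m(2e(\mathscr{D}+1))^{m+1}\leq(2e^2(\mathscr{D}^2-1))^{m+1}$ for $\mathscr{D}\geq 2$.
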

\begin{proof}
This follows directly from Lemma 3.7 and Proposition 3.8 of \citet{tang2021}. First, we apply Proposition 3.8 to find that $\abs{\frac{\mathcal{D}_{\alpha'} \mathcal{L}}{\alpha!}} \leq m \abs{\frac{\mathcal{D}_{\alpha'} \mathcal{L}}{\alpha'!}} \leq m (2e(\mathscr{D}+1))^{m+1}$. Next, we observe that $\mathcal{D}_{\alpha'} \mathcal{L}$ is vanishing if $\alpha'$ does not induce a connected subgraph of the Hamiltonian interaction graph. By Lemma 3.7, there are at most $e \mathscr{D} (e (\mathscr{D}-1))^m$ such $\alpha'$. Therefore:
\begin{align*}
    \abs{\dv[m]{\expval{P_i}_\beta}{\beta}}_{\beta=0} &= m! \abs{\sum_{\alpha; \abs{\alpha}=m} \frac{\Theta^\alpha}{\alpha!} \mathcal{D}_{\alpha'} \mathcal{L}} \\
    &\leq \norm{\Theta}_\infty^m(m+1)! e \mathscr{D} (e \qty(\mathscr{D}-1))^m (2e(\mathscr{D}+1))^{m+1} \\
    &\leq \norm{\Theta}_\infty^m (m+1)! (2e^2 (\mathscr{D}^2-1))^{m+1}
\end{align*}
\end{proof}

With unitary dynamics, we already demonstrated that the derivative bound drops by a factor $m!$ if the Hamiltonian is commuting. We expect a similar decrease here, but we leave a proof of this for future works.

\begin{theorem}[Hamiltonian learning with Gibbs states]
The Hamiltonian learning problem (with an additive error $\epsilon \norm{\Theta}_\infty$ and failure probability at most $\delta$) can be solved using
\begin{equation}
    \order{\frac{\mathscr{D}^5 \log (r/\delta) \polylog(\mathscr{D}/\epsilon)}{\epsilon^{2}}} \label{eqn:gibbs-query}
\end{equation}
copies of the Gibbs state. This can be achieved with a time complexity
\begin{equation}
    \order{\frac{\mathscr{D}^4 r \log(1/\delta) \polylog(\mathscr{D}/\epsilon)}{ \epsilon^{2}}}.
\end{equation}
\end{theorem}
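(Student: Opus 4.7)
The plan is to mirror the unitary-dynamics protocol (\cref{thm:master-theorem2}) by treating $f_i(\beta) \equiv \expval{P_i}_\beta$ as the univariate black-box function of the inverse temperature, so that the Hamiltonian learning problem reduces to estimating its first-order Taylor coefficient at $\beta = 0$. By the preceding remarks, $f_i(0) = 0$ and $f_i'(0) = -\theta_i$, so recovering each coefficient $\theta_i$ is exactly the subroutine implemented by \cref{alg:first-comm}, applied now with $\beta$-samples rather than $t$-samples. The quantitative heart of the argument is a re-run of the error analysis of \cref{thm:master-error2,thm:err-scale2}, but with the iterated-commutator bound of \cref{lem:norm-bound2} replaced by the preceding temperature-derivative lemma. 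Reading off the latter, the $L$-th derivative scales as $\norm{\Theta}_\infty^L (L+1)! (\mathscr{D}^2)^{L+1}$ up to constants, which is essentially the unitary bound with $\mathscr{D}$ replaced by $\mathscr{D}^2$. The natural window thus becomes $\tilde{\tau} \sim 1/(\mathscr{D}^2 \norm{\Theta}_\infty)$ and the natural coefficient scale becomes $\tilde{\gamma} \sim \mathscr{D}^2 \norm{\Theta}_\infty$. Plugging these into \cref{thm:err-scale2} with $A \sim \tilde{\tau}$ and $L \sim \log \epsilon^{-1}$ gives a per-coefficient query complexity of $\order{\polylog(\epsilon^{-1}) \epsilon^{-2}}$ with error $\epsilon \tilde{\gamma}$; rescaling to the desired absolute error $\epsilon \norm{\Theta}_\infty$ via $\epsilon \mapsto \epsilon/\mathscr{D}^2$ yields $\order{\mathscr{D}^4 \polylog(\mathscr{D}/\epsilon) \epsilon^{-2}}$ queries per coefficient.

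The parallelization step is strictly easier than in the unitary setting. Because the Gibbs state $\rho(\beta)$ is fixed (we do not build it from maximally mixed ``moats'' as in \cref{thm:simul-inf2}), two coefficients $\theta_i, \theta_j$ can be extracted from a single copy of $\rho(\beta)$ whenever $P_i, P_j$ can be simultaneously measured. A sufficient condition is non-overlapping support, which is precisely the statement that $P_i$ and $P_j$ are non-adjacent in the interaction graph $\mathcal{G}$ itself rather than in its square $\mathcal{G}^2$. By \cref{defn:color} and Vizing's theorem, $\mathcal{G}$ admits a proper coloring with at most $\mathscr{D}+1$ classes, each of which can be measured in parallel. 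Boosting the per-coefficient failure probability to $\delta/r$ via median-of-means and summing over partitions yields the total query cost $\order{(\mathscr{D}+1) \cdot \mathscr{D}^4 \log(r/\delta)\polylog(\mathscr{D}/\epsilon)\epsilon^{-2}} = \order{\mathscr{D}^5 \log(r/\delta)\polylog(\mathscr{D}/\epsilon)\epsilon^{-2}}$, matching \cref{eqn:gibbs-query}. The classical processing is dominated by independently reconstructing each of the $r$ coefficients from its own measurement record, giving $\order{\mathscr{D}^4 r \log(1/\delta)\polylog(\mathscr{D}/\epsilon)\epsilon^{-2}}$.

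The main technical obstacle I anticipate is that the stated temperature-derivative bound is evaluated only at $\beta = 0$, while the Chebyshev modelling-error argument (\cref{eqn:model-err-bound}) requires a \emph{uniform} bound on $\abs{f_i^{(L)}(\beta)}$ over the entire window $\beta \in [0, A]$. The cleanest resolution is to re-centre the Tang et al.\ combinatorial argument at an arbitrary $\beta_0 \in [0, A]$ by viewing $\rho(\beta_0)$ as a new ``origin'', which inherits the same bound as long as $\beta_0 \norm{\Theta}_\infty = \order{1/\mathscr{D}^2}$, automatic from our choice $A \sim \tilde{\tau}$. Modulo this derivative-extension step, the rest of the proof is a direct mirror of \cref{thm:master-theorem2} under the substitutions $\mathscr{D} \to \mathscr{D}^2$ (for the relevant scales) and $\mathcal{G}^2 \to \mathcal{G}$ (for the parallelization graph).
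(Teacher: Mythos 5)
Your proposal follows essentially the same route as the paper's proof: reduce to the unitary-dynamics machinery with $f_i(\beta)=\expval{P_i}_\beta$, substitute the Tang et al.\ temperature-derivative bound so that the effective scale becomes $\tilde{\gamma}\sim\mathscr{D}^2\norm{\Theta}_\infty$ (hence $\epsilon\mapsto\epsilon/\mathscr{D}^2$ and $\mathscr{D}^4$ per coefficient), and parallelize via a $(\mathscr{D}+1)$-coloring of $\mathcal{G}$ rather than $\mathcal{G}^2$. Your observation that the derivative bound is stated only at $\beta=0$ while the Chebyshev argument needs it uniformly on $[0,A]$ is a legitimate gap that the paper itself glosses over, and your proposed re-centering fix is a reasonable patch rather than a departure from the paper's argument.
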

\begin{proof}
The protocol is a near mirror image of the Hamiltonian learning protocol using unitary dynamics. We aim to infer the first derivative in the polynomial $\expval{P_i}_\beta$, so we apply our \textsc{EstimateDerivative} protocol from \cref{alg:first-comm}. The error scaling of this protocol is slightly different, since $\norm{f^{(L)}}$ changes using $\expval{P_i}_\beta$ as the polynomial rather than $\expval{P_i(t)}$. Repeating the analysis from \cref{eqn:model-err-bound}, we find:
\begin{align*}
    \abs{c_1 - \mathbb{E}[\tilde{c}_1]} &\leq \frac{L^2 (A/4)^{L-1}}{L!} \abs{\dv[m]{\expval{P_i}_\beta}{\beta}}_{\beta=0} \\
    &\leq \norm{\Theta}_\infty^L L^2 (L+1) (A/4)^{L-1} (2e^2 (\mathscr{D}^2-1))^{L+1} \\
    &= \order{L^3 (A \norm{\Theta}_\infty \mathscr{D}^2)^{L}}
\end{align*}
This amounts to redefining $\gamma = \order{\norm{\Theta}_\infty \mathscr{D}^2}$ (see \cref{defn:typical-scale}). This can be plugged into our analysis for \cref{thm:err-scale} to find that we can infer any individual coefficient up to an error $\epsilon \norm{\Theta}_\infty$ with failure probability $\leq \delta$ using
\begin{equation}
    \order{\log(1/\delta) \polylog(\mathscr{D}/\epsilon) (\epsilon/\mathscr{D}^2)^{-2}}
\end{equation}
copies of the Gibbs state.

This then carries into our main result \cref{thm:master-theorem}. We color our interaction graph such that terms from the same color (i.e., partition) do not have an overlapping support. Then, the coefficients for each partition can be inferred simultaneously because the observables in each partition can be measured simultaneously. Since there are at most $\mathscr{D}+1$ partitions, the overall query complexity of our algorithm is:
\begin{equation*}
    \order{\mathscr{D}^5 \log(1/\delta) \polylog(\mathscr{D}/\epsilon) \epsilon^{-2}}.
\end{equation*}
The processing time is at most
\begin{equation*}
    \order{\mathscr{D}^4 r \log(1/\delta) \polylog(\mathscr{D}/\epsilon) \epsilon^{-2}},
\end{equation*}
since we need to process $\order{\log(1/\delta) \polylog(\mathscr{D}/\epsilon) (\epsilon/\mathscr{D}^2)^{-2}}$ measurements for each of the $r$ coefficients in the Hamiltonian.
\end{proof}

\section{Heuristic Optimizations}\label{sec:heur}
In the following, we provide a theoretical justification for \cref{opt:1,opt:2}. \newline

\noindent \textbf{Optimization 1.}
The support tree $\mathcal{T}_P$ is not often a truly regular tree (i.e., not all nodes have exactly $\mathscr{D}+1$ children). We expect this to be reflected in the scaling of the number of non-vanishing tuples. \cref{cor:counting} says that the number of non-vanishing tuples of size $m$ is $\mathscr{D}^m (m+1)!$. The factor $\mathscr{D}^m$ resembles the number of nodes at a depth $m$ in a $\mathscr{D}$-regular tree. Therefore, it is reasonable to suppose that we can replace $\mathscr{D}^m$ with a factor that more closely reflects the number of nodes at a depth $m$ in the support tree $\mathcal{T}_P$, which is not a truly regular tree. 

We assume $\mathcal{T}_P$ can be modeled as a branching process \citep{athreya1972}. A branching process is a stochastic process that models reproduction over $m$ generations. We begin with a population size of $X_0=1$, and at each time step, each member of the population produces a random number $N$ of offspring, where $N$ is a positive discrete random variable, and the previous generation dies out. Therefore, the distribution of population sizes is
\begin{equation}
    \mathbb{P}(X_{i+1}=x \mid X_i=k)=\mathbb{P}\qty(\sum_{j=1}^k N_j = x).
\end{equation}
It is a standard result that the expected population size at a time $t$ is $\mathbb{E}[N]^t$ \citep{athreya1972}. Applied to our case, we see that a reasonable definition for $N$ is to let it be chosen uniformly at random from the set of degrees in the interaction graph: $N \in_R \qty{\deg(v) \mid v \in V}$ (where $\in_R$ denotes random selection). Furthermore, recall \cref{rem:overcount}: not every subtree of the support tree $\mathcal{T}_P$ is nonvanishing. For any fixed Pauli, the probability that it commutes with some random Pauli is one-half.\footnote{This is because two Pauli operators $A=A_1A_2\ldots A_n$ and $B=B_1 B_2 \ldots B_n$  (with each $A_i, B_i$ being one Pauli matrix) commute when there is an even number of sites where $\qty[A_i, B_i] \neq 0$. Since every single-qubit Pauli matrix does not commute with exactly 2 out of 4 single-qubit Pauli matrices, this yields a one-half probability that $A$ and $B$ commute for $B$ chosen uniformly at random from all $n$-qubit Pauli matrices.} Therefore, as we are constructing a non-vanishing tuple (using the subtree analogy from \cref{cor:counting}), we expect that adding any node has a one-half probability of commuting with the current subtree. Therefore, we define the average degree to be
\begin{equation}
    \overline{\mathscr{D}} = \frac{1}{2\abs{V}}\sum_{v \in V} \deg(v).
\end{equation}
The factor $\frac{1}{2}$ comes from the fact that for a given subtree, we expect one out of every two of the children that we add to commute with the existing operator defined by the subtree. With this, we now replace $\mathscr{D}$ everywhere with $\overline{\mathscr{D}}$.

Numerical simulations indicate the bound tightening in \cref{opt:1} is reasonable (\cref{fig:it-comm}), and significantly improves algorithm performance (\cref{fig:model-err-heur}). Since the theoretical error bound is linear in $\mathscr{D}$, the improvement in performance is linear in $\nicefrac{\mathscr{D}}{\overline{\mathscr{D}}}$, resulting in a decrease in the number of required queries by a factor $\gtrsim \qty(\nicefrac{\mathscr{D}}{\overline{\mathscr{D}}})^2$.
\begin{figure}
    \centering
    \includegraphics{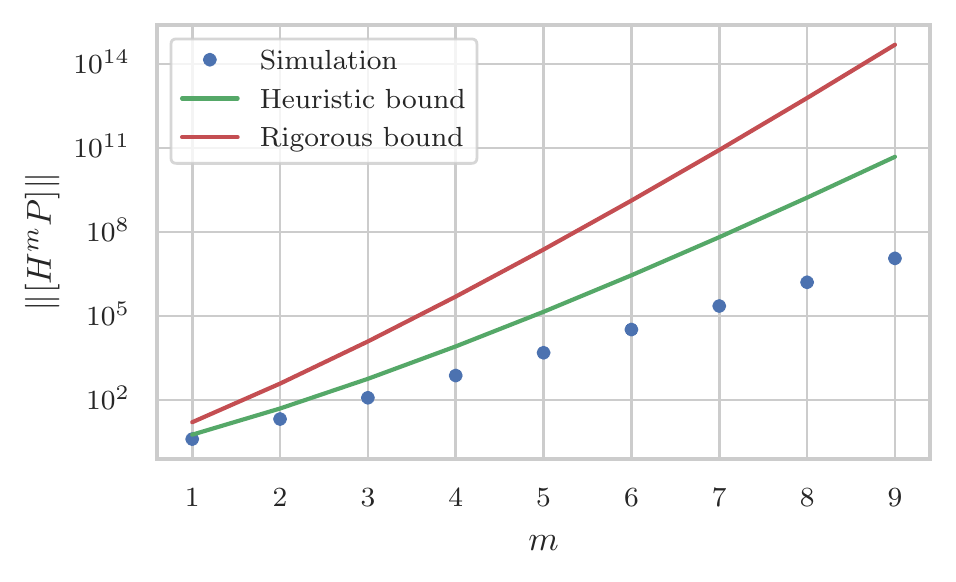}
    \caption[A heuristic improvement for the iterated commutator bound]{Bounds for the iterated commutator norm $\qty[H^m P]$. We randomly generate 10000 TFIM Hamiltonians (sampling the interaction strengths from $\text{Unif}(-1,1)$) for system sizes $n=2,\ldots,9$, and calculate the norm for $\qty[H^m P]$ for every single-qubit Pauli $P$. The points marked simulation in the figure are the maximum $\norm{\qty[H^m P]}$ over all 10000 randomly generated Hamiltonians, all single-qubit Paulis, and all system sizes. The rigorous bound is that found in \cref{cor:counting}: $\mathscr{D}^m (m+1)!$, and the heuristic bound is simply $\overline{\mathscr{D}}^m (m+1)!$. The average degree $\overline{\mathscr{D}}$ was set to $3/2$, which is the effective degree as $n \rightarrow \infty$.}
    \label{fig:it-comm}
\end{figure}

\begin{figure}
    \centering
    \includegraphics{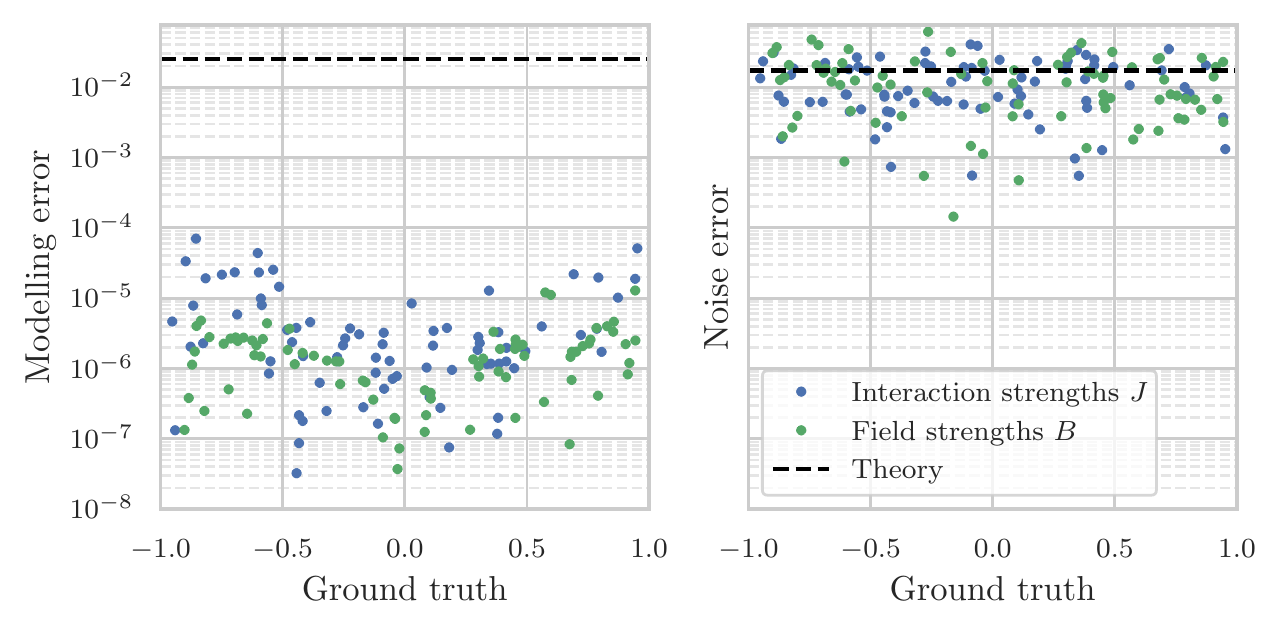}
    \caption[Heuristic improvements to the Hamiltonian learning protocol]{The empirical modelling and noise error of the Hamiltonian learning protocol using the optimal $A$ and $L$. This differs from \cref{fig:model-err} in that here we use $\overline{\mathscr{D}}$ rather than $\mathscr{D}$ when calculating the evolution time. We observe a significantly improved agreement between the theoretical modelling error bound and the true modelling error.}
    \label{fig:model-err-heur}
\end{figure}

\vspace{1em}
\noindent \textbf{Optimization 2.}
For a fixed number of queries, we can optimize the noise error in \cref{eqn:var-est} by allocating a different number of queries to each of the $y_\ell$. If we allocated $N_\ell$ measurements to each time step, we have $\sigma_\ell^2 \leq \frac{1}{N_\ell}$, in which case we solve:
\begin{gather*}
    \underset{\qty{\sigma_\ell}}{\text{minimize}} \qty[\frac{16}{L^2 A^2}\sum_{\ell=1}^{L} \sigma_\ell^2 \qty(\sum_{m=1}^{L-1} (-1)^m m^2 T_m(z_\ell))^2] \\
    \text{subject to: } \sum_{\ell=1}^L N_\ell \leq N_{max}
\end{gather*}
We minimize an upper bound of the objective by replacing $\sigma_\ell^2$ with $\frac{1}{N_\ell}$, in which case the optimization problem can be solved using the method of Lagrange multipliers, and by introducing a slack variable $s$ so that the constraint becomes $\sum_{\ell=1}^L N_\ell - N_{max} - s^2 = 0$. We write $c_\ell \equiv \frac{16}{L^2 A^2} \qty(\sum_{m=1}^{L-1} (-1)^m m^2 T_m(z_\ell))^2$ for brevity and obtain:
\begin{gather*}
    \grad_{\qty{N_\ell}, s, \lambda} \qty[\sum_{\ell=1}^L \frac{c_\ell}{N_\ell} + \lambda \qty(\sum_{\ell=1}^L N_\ell - N_{max} - s^2)] = 0
    \intertext{This reduces to:}
    \frac{N_{\ell_1}}{N_{\ell_2}} = \frac{\sqrt{c_{\ell_1}}}{\sqrt{c_{\ell_2}}} \qc \forall \ell_1, \ell_2 =1,\ldots,L \qq{and} \sum_{\ell=1}^L N_{\ell} = N_{max}
    \intertext{This yields a closed form expression for $N_{\ell}$:}
    N_{\ell} = \frac{\sqrt{c_\ell}}{\sum_\ell \sqrt{c_\ell}} N_{max} \numberthis \label{eqn:opt2}
\end{gather*}

\bibliography{references}

\end{document}